\date{}
\newtheorem{theorem}{Theorem}[section]
\newtheorem{lemma}[theorem]{Lemma}
\newtheorem{corollary}[theorem]{Corollary}
\newtheorem{proposition}[theorem]{Proposition}
\newtheorem{definition}[theorem]{Definition}
\newtheorem*{theorem*}{Theorem}
\newtheorem*{lemma*}{Lemma}
\newtheorem*{corollary*}{Corollary}
\newtheorem*{proposition*}{Proposition}
\newtheorem*{claim*}{Claim}
\newtheorem*{definition*}{Definition}
\newtheorem*{open problem*}{Open problem}
\newtheorem{open problem}[theorem]{Open problem}
\newtheorem*{conjecture}{Conjecture}
\newtheorem*{algorithm*}{Algorithm}
\newtheorem{stretchalgorithm}[theorem]{Demi-Bit Stretching Algorithm}
\newtheorem{summary}[theorem]{Summary}
\theoremstyle{remark}
\newtheorem*{comment*}{Comment}
\newtheorem*{remark*}{Remark}
\newcommand{\pr}[1]{\mathbb{P}\left[{#1}\right]}
\renewcommand{\Pr}[2][]{\underset{#1}{\mathbb{P}}[{#2}]}
\newcommand{\ppoly}{\ensuremath{\P/\sf{poly}}}
\newcommand{\nppoly}{\ensuremath{\NP/\sf{poly}}}
\newcommand{\conppoly}{\ensuremath{\coNP/\sf{poly}}}
\newcommand{\bi}[1]{\{0, 1\}^{#1}}
\newcommand{\tildenppoly}{\ensuremath{\sf{N}\tilde{\P}/qpoly}}
\newcommand{\smallpoly}{\ensuremath{\sf{poly}}}
\newcommand*\circled[1]{\tikz[baseline=(char.base)]{
  \node[shape=circle,draw,inner sep=1pt] (char) {#1};}}
\newcommand{\para}{\paragraph}
\newcommand{\bN}{\mathbb{N}}
\newcommand{\bP}{\mathbb{P}}
\newcommand{\bR}{\mathbb{R}}
\newcommand{\McA}{\mathcal{A}}
\newcommand{\McD}{\mathcal{D}}
\renewcommand{\epsilon}{\varepsilon}
\renewcommand{\phi}{\varphi}
\DeclareSymbolFont{sfoperators}{OT1}{cmss}{m}{n}
\DeclareSymbolFontAlphabet{\mathsf}{sfoperators}
\def\operator@font{\mathgroup\symsfoperators}
\newcommand{\triplenorm}[1]{{\vert\kern-0.25ex\vert\kern-0.25ex\vert #1
        \vert\kern-0.25ex\vert\kern-0.25ex\vert}}
\newcommand{\striplenorm}[1]{{\left\vert\kern-0.25ex\left\vert\kern-0.25ex\left\vert #1
        \right\vert\kern-0.25ex\right\vert\kern-0.25ex\right\vert}}
\newcommand{\bits}{\ensuremath{\{0,1\}}}
\newcommand{\N}{\ensuremath{\mathbb{N}}}
\newcommand{\kpoly}{\ensuremath{\mathrm{K}^{\poly}}}   
\newcommand{\kpolys}[1]{\ensuremath{\kpoly[#1]}}  
\newcommand{\kolm}{\ensuremath{\mathrm{K}}}   
\newcommand{\kolmt}{\ensuremath{\mathrm{K}^t}}   
\newcommand{\kolmts}[1]{\ensuremath{\kolmt[#1]}} 
\newcommand{\kolmtns}[1]{\ensuremath{\kolm^{t(n)}[#1]}}
\author{Iddo Tzameret\thanks{Department of Computing. Part of this
work was done at Simons Institute for the Theory of Computing, UC Berkeley. Part of this project has received funding from the European Research Council (ERC) under the European Union’s Horizon 2020 research and innovation programme (grant agreement No.~101002742). Email: \text{iddo.tzameret@gmail.com}} \smallskip 
\\ 
 \small Imperial College London
\and Lu-Ming Zhang\thanks{Part of this work was done while at Imperial College London. Email:~\text{l534zhang@uwaterloo.ca}} 
\smallskip
\\
\small London School of Economics and \\
\small Political Science
}
\title{Stretching Demi-Bits and Nondeterministic-Secure Pseudorandomness}
\begin{document}


\clearpage\maketitle\thispagestyle{empty}


\begin{abstract}
%

We develop the theory of  cryptographic nondeterministic-secure pseudorandomness beyond the point reached by Rudich's original work \cite{superbit}, and  apply it to draw new consequences in average-case complexity and proof complexity. Specifically, we show the following:

%

 
\begin{description}[leftmargin=6pt]
\item[Demi-bit stretch:] Super-bits and demi-bits are variants of cryptographic pseudorandom generators which are secure against nondeterministic statistical tests \cite{superbit}. They were introduced to  rule out  certain approaches to proving strong complexity lower bounds beyond the limitations set out by the Natural Proofs barrier of Rudich and Razborov \cite{natural}.  Whether  demi-bits are stretchable  at all had been an open problem since their introduction. We answer this question affirmatively by showing that:
every demi-bit $b:\bits^n\to \bits^{n+1}$ can be stretched 
   into sublinear many demi-bits $b':\bits^{n}\to \bits^{n+n^{c}}$, for every constant $0<c<1$.

\item[Average-case hardness:] Using work by Santhanam \cite{San20}, we apply our results to obtain new  average-case Kolmogorov complexity results:
we show that $\kpolys{n-O(1)}$ is zero-error average-case hard against \NP/\poly\ machines iff $\kpolys{n-o(n)}$ is, where  for a function $s(n):\N\to\N$, $\kpolys{s(n)}$ denotes the languages of all strings $x\in\bits^n$ for which there are  (fixed) poly-time Turing machines of description-length at most $s(n)$ that output $x$.



\item[Characterising super-bits by nondeterministic unpredictability:] In the deterministic setting, Yao \cite{Yao1} proved that  super-polynomial hardness of pseudorandom generators is equivalent to (``next-bit'') \emph{unpredictability}. Unpredictability roughly means that given any strict prefix of a random string, it is infeasible to predict the next bit. We initiate the study of unpredictability beyond the deterministic setting (in the cryptographic regime), and characterise the nondeterministic hardness of  generators  from an unpredictability perspective.
Specifically, we propose four stronger notions of unpredictability:
$\NP/\smallpoly$-unpredictability,
$\coNP/\smallpoly$-unpredictability, 
$\cap$-unpredictability and $\cup$-unpredictability, 
and show that super-polynomial nondeterministic hardness of generators lies  between \mbox{$\cap$-unpredictability} and  $\cup$-unpredictability.

\item[Characterising super-bits by nondeterministic hard-core predicates:] We introduce a nondeterministic variant of hard-core predicates, called \emph{super-core predicates}.  We show that the existence of a super-bit is equivalent to the existence of a super-core of some non-shrinking function. This serves as an analogue  of the equivalence between the existence of a strong pseudorandom generator and the existence of a hard-core of some one-way function \cite{hard_core_from_OWF, PRG_from_OWF}, and provides a first alternative characterisation of super-bits. We also prove that a certain class of functions, which may have hard-cores, cannot possess any super-core.


\end{description}
\end{abstract}



\setcounter{tocdepth}{3}
\tableofcontents


%
%
%
%
%
%
%

\setcounter{page}{2}

\section{Introduction} \label{ch1}

Pseudorandomness is a natural concept allowing to measure the extent to which a resource-bounded computational machine can identify  true random sources. It is an important notion in algorithms,  enabling to derandomize efficient probabilistic algorithms by simulating many true random bits using fewer random bits. Another important aspect of pseudorandomness lies in computational complexity and cryptography, and specifically computational lower bounds, where it serves as the foundation of many results in  cryptography,   hardness vs.~randomness trade-offs, and several barriers to proving strong computational lower bounds. Here, we will  mostly be interested in the latter aspect of barrier results.

\subsection{The theory of  nondeterministic-secure pseudorandomness}


Razborov and Rudich established a connection between pseudorandomness and the ability to prove boolean circuit lower bounds in their Natural Proofs paper \cite{natural}. They showed that most lower bounds arguments in circuit complexity  contain (possibly implicitly) an efficient  algorithm for deciding hardness, in the following sense: given the truth table of a boolean function, the algorithm determines  if the function possesses some (combinatorial) properties that imply it is hard for a certain given circuit class (they called this ``constructivity" and ``usefulness" of a lower bound argument). They moreover showed that this algorithm  identifies correctly the hardness of a non-negligible fraction of functions (which is called ``largeness'' in \cite{natural}). On the other hand, such an efficient algorithm for determining the  hardness of boolean functions (for general circuits) would contradict reasonable assumptions in pseudorandomness, namely the existence of strong pseudorandom generators. This puts a \emph{barrier}, so to speak, against the ability to prove lower bounds using natural proofs.

The notion of natural proofs has had a great influence on computational complexity theory.  However, the fact that the plausible nonexistence of certain classes of natural proofs  provides an obstacle against \emph{very constructive} lower bound arguments (namely, those arguments that implicitly contain an efficient algorithm to determine when a function is hard) is somewhat less desirable. One would hope to extend the obstacle to less constructive proofs, for instance, proofs whose arguments contain implicitly only short \emph{witnesses} for the hardness of functions. 

Indeed, the notion of natural proofs comes to explain the difficulty in \emph{proving} lower bounds, not in efficiently \emph{deciding} hardness of boolean functions. For these reasons, among others, Rudich \cite{superbit} set to extend the natural proofs barrier so that they encompass non-constructive arguments, namely, arguments implicitly using efficient \emph{witnesses} of the hardness of boolean functions, in contrast to deterministic  algorithms.
This was done by extending the notion of pseudorandom generators so that they are secure against \emph{nondeterministic} adversaries. 

While empirically most known lower bound proofs were shown, at least implicitly, to fall within the scope of \ppoly-constructive natural proofs, it is definitely conceivable and natural to assume that some lower bound approaches will  necessitate \NP/\poly-constructive natural proofs. In fact, it is a very interesting open problem in itself to find such an  \NP/\poly-constructive lower bound proof method. Furthermore, in  works in proof complexity the role of nondeterministic-secure pseudorandomness is  important (cf.~recent work by Pich and Santhanam~\cite{PS19}, as well as Kraj\'{i}\v{c}ek~\cite{Kra04}). Also, note that when dealing with the notion of barriers we refer to impossibly results and so it cannot always be expected to come up with good examples of proof methods we wish to rule out.

Accordingly, to extend the natural proofs barrier, Rudich introduced two primitives: \emph{super-bits,} and its weaker variant \emph{demi-bits}. Super-bits and demi-bits are (non-uniform) nondeterministic variants of strong pseudorandom generators (PRGs). They are secure against nondeterministic adversaries (i.e., adversaries in \nppoly{}).
Demi-bits require their nondeterministic adversaries to break them in a \emph{stronger} sense than super-bits, and hence their existence  constitutes a better (i.e., \emph{weaker}) assumption than the existence of super-bits on which to base barrier results; and this is one reason why the concept of demi-bits is important.

 More specifically, super-bits and demi-bits both require a nondeterministic adversary to meaningfully distinguish truly random strings from pseudorandom ones by certifying truly random ones (i.e., an adversary outputs $1$ if it thinks a given string is an output of a truly  random process and $0$ if it is the output of  a pseudorandom generator). Thus, a nondeterministic distinguisher cannot break super-bits nor demi-bits by simply guessing a seed of the generator. Precisely, this is guaranteed as follows: for demi-bits we insist that strings in the image of the pseudorandom generator are always \emph{rejected}, while for super-bits we allow some such pseudorandom strings  to be accepted but we insist that many  more strings outside the image of the pseudorandom generator are accepted (than strings in the image of the pseudorandom generator).

Formally, we have the following (all the distributions we consider in this work are, by default, uniform, unless otherwise stated, and $U_n$ denotes the uniform distribution over $\{0, 1\}^n)$:

\begin{definition}[Nondeterministic hardness \cite{superbit}]\label{def:nondeterministic-hardness}
Let $g_n: \{0, 1\}^n \to \{0, 1\}^{l(n)}$, with $l(n)>n$, be a function in $\P/\poly$. We call such a function a \textbf{generator}. The \textbf{nondeterministic hardness} $H_{\textup{nh}}(g_n)$  (also called \textbf{super-hardness}) of  $g_n$ is the minimal $s$ for which there exists a nondeterministic circuit $D$ of size at most $s$ such that
\begin{equation}\label{eq:ov:tag-1}
    \underset{y \in \{0, 1\}^{l(n)}}{\mathbb{P}} [D(y) = 1] - \underset{x \in \{0, 1\}^n}{\mathbb{P}} [D(g_n(x)) = 1] \geq \frac{1}{s}.
\end{equation}

\end{definition}
In contrast to the standard definition of (deterministic) hardness (\Cref{def:standard-hardness}), the order of the two possibilities on the left-hand side  is crucial. This order forces a nondeterministic distinguisher to certify the randomness of a given input. Reversing the order, or adding an absolute value to left-hand side, trivializes (as in standard  PRGs)  the task of breaking $g$: a distinguisher $D$ can simply guess a seed $x$ and check if $g(x)$ equals the given input. For such a $D$, we have
$\pr{D(g(x))=1} = 1$ and $\pr{D(y) = 1} \leq 1/2$.

Super-bits are  exponentially super-hard generators:
\begin{definition}[Super-bits \cite{superbit}]
A generator $g: \{0, 1\}^n \to \{0, 1\}^{n + c}$ (computable in \P/\poly), for some $c: \bN \to \bN$, is called $c$ \textbf{super-bit(s)} (or a $c$-super-bit(s)) if $H_{\textup{nh}}(g) \geq 2^{n^\epsilon}$ for some constant $\epsilon > 0$ and all sufficiently large $n$'s. In particular, if $c = 1$, we call $g$ a super-bit.
\end{definition}

Many candidates of strong PRGs (against deterministic machines) were constructed by exploiting functions conjectured to be one-way and/or their hard-cores (see for example~\cite{factoring, discrete-logarithm, subset_sum}). The work \cite{factoring} presented PRGs based on the assumption that factoring is hard, while \cite{discrete-logarithm} presented PRGs based on the conjectured hardness of the discrete-logarithm problem. \cite{subset_sum} presented PRGs based on the subset sum problem.
Similarly, for nondeterministic-secure PRGs (namely, super-bits), Rudich conjectured the existence of a super-bit generator based on the hardness of the subset sum problem \cite{subset_sum}. We are unaware of any additional conjectured construction of super-bits. 
\smallskip 

As mentioned above, another  hardness measure of generators is introduced by Rudich:

\begin{definition}[Demi-hardness  \cite{superbit}]
Let $g_n: \{0, 1\}^n \to \{0, 1\}^{l(n)}$ be  a generator (computable in \ppoly). Then the \textbf{demi-hardness} $H_{\textup{dh}}(g_n)$ of  $g_n$ is the minimal $s$ for which there exists a nondeterministic circuit $D$ of size at most $s$ such that
\begin{equation}\label{eq:ov:tag-2}
    \underset{y \in \{0, 1\}^{l(n)}}{\mathbb{P}} [D(y) = 1] \geq \frac{1}{s}
    \text{\quad and \ }
    \underset{x \in \{0, 1\}^n}{\mathbb{P}} [D(g_n(x)) = 1] = 0.
\end{equation}
\end{definition}We note that \eqref{eq:ov:tag-2}, which requires a distinguisher to make no mistake on generated strings, is a stronger requirement than \eqref{eq:ov:tag-1}. Thus,
$H_{\textup{nh}}(g) 
\leq
H_{\textup{dh}}(g)$ for every generator $g$.

 Demi-bits are exponentially demi-hard  generators, where ``demi" here  stands for  ``half'':
\begin{definition}[Demi-bits  \cite{superbit}]\label{def:overv-demi-bit}
$g: \{0, 1\}^n \to \{0, 1\}^{n + c}$ for some $c: \bN \to \bN$ is called $c$ \textbf{demi-bit(s)} (or a $c$-demi-bit(s)) if $H_{\textup{dh}}(g) \geq 2^{n^\epsilon}$ for some $\epsilon > 0$ and all sufficiently large $n$'s. In particular, if $c = 1$, we call $g$ a demi-bit.
\end{definition}

It is worth mentioning that  demi-bits $g$ as in \Cref{def:overv-demi-bit} can be viewed as a \emph{hitting set generator} against \NP/\poly\ (see below \Cref{sec:overv-apps} and Santhanam \cite{San20}).

The difference between super-bits and demi-bits is that demi-bits require their distinguishers to break them in a stronger sense: a demi-bit distinguisher must always be correct on the pseudorandom strings (i.e., always output $0$ for strings in the image of the  generator). Thus, if $g$ is  a super-bit(s) (the plural here denotes that $g$ may have a stretching-length greater than  $1$; if the stretching length is exactly $1$, we say $g$ is \emph{a} super-bit), no algorithm in \nppoly{} can break $g$ in the weaker sense \eqref{eq:ov:tag-1}, and hence no algorithm in \nppoly{} can break $g$ in the stronger sense \eqref{eq:ov:tag-2}, which means $g$ is also a demi-bit(s). In other words, the existence of super-bits implies the existence of demi-bits (although it is open if any of these two exists).

\begin{remark*}[Cryptographic vs.~complexity-theoretic regime] 
In this work, we are interested only in the \emph{cryptographic} regime of pseudorandomness. In this regime, the adversary whom the generator tries to fool is allowed to be stronger than the generator and specifically has sufficient computational resources to run the generator. In the \emph{complexity-theoretic regime}, in which the adversary cannot simulate the generator, the notion of nondeterministic secure pseudorandomness was developed in works by, e.g.,~Klivans and van Melkebeek \cite{KvM02} as well as  Shaltiel and Umans \cite{SU05} (see also the recent work by Sdroievski and van Melkebeek \cite{SvM23} and references therein). These complexity-theoretic ideas have also found several applications in cryptography (originating from the work of Barak, Ong and Vadhan \cite{BOV07}).
It is also worth mentioning that in the complexity-theoretic regime, one can use the original definition of the hardness of PRGs (\Cref{def:standard-hardness}) even against nondeterministic adversaries; while this is not the case in the cryptographic regime, in which a PRG as in \Cref{def:standard-hardness} can never be safe against a nondeterministic adversary who guesses the seed.



\end{remark*}


\subsection{Relations to barrier results}

%
%
%
%

Recall that natural proofs \cite{natural} for proving circuit lower bounds are proofs that use a natural combinatorial property of boolean functions. A combinatorial property (or \emph{a property}, for short) $C$ of boolean functions is a set of boolean functions. We say a function $f$ has  property $C$ if $f$ is in $C$. Let $\Gamma \text{ and } \Lambda$ be complexity classes, and $F_n$ be the set of all $f : \bi{n} \to \bi{}$. We say $C$ is $\Gamma$-natural if a subset  $C'\subseteq C$  satisfies \textit{constructivity}, that is, it is $\Gamma$-decidable whether $f$ is in $ C'$, and \textit{largeness}, that is, $C'$ constitutes a non-negligible portion of $F_n$. We say $C$ is \textit{useful} against  $\Lambda$ if every function family $f$ that has property $C$ infinitely often is not computable in $\Lambda$. The idea of natural proofs is that, if we want to prove some function family $f$ (e.g., the boolean satisfiability problem SAT) is not in \ppoly{} (or in general, some other complexity class $\Lambda$), we identify some natural combinatorial property $C$ of $f$ and show all function families that have property $C$ are not in \ppoly{} (i.e., the property is useful against \ppoly{}). If $f$ is $\NP$-complete (e.g., SAT), then such a proof concludes $\P \neq \NP$.


Razborov and Rudich argued that, based on the existence  of strong PRGs, no \ppoly-natural proofs can be useful against \ppoly{}. They showed that many known proofs of lower bounds against (non-monotone) boolean circuits are natural or can be presented as natural in some way. 

In this context, the theory of nondeterministic-secure generators allows one to rule out a larger  (arguably more natural) class of lower bound arguments: 
\begin{theorem}[\cite{superbit}]\label{thm:overview-Rudich}
If super-bits exist, then there are no $N\tilde{P}/qpoly$-natural properties useful against \ppoly{}, where $N\tilde{P}/qpoly$ is the class of languages recognised by non-uniform, quasi-polynomial-size circuit families.
\end{theorem}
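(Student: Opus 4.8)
The plan is to transfer the Razborov--Rudich natural-proofs argument \cite{natural} to the nondeterministic regime: replace the deterministic distinguishing algorithm that a natural proof implicitly contains by a \emph{nondeterministic} one, and replace the strong pseudorandom generator by a super-bit generator. First I would stretch the hypothesised super-bit $g:\bits^{n}\to\bits^{n+1}$ (with $H_{\mathrm{nh}}(g)\ge 2^{n^{\epsilon}}$) into a length-doubling generator $G:\bits^{n}\to\bits^{2n}$ computable in $\ppoly$ with $H_{\mathrm{nh}}(G)\ge 2^{n^{\epsilon'}}$ for some constant $\epsilon'>0$ and all large $n$; this is Rudich's stretching lemma for super-bits \cite{superbit}, and it also follows from the textbook hybrid argument, which is valid for nondeterministic circuits because every hybrid step substitutes one pseudorandom block by a truly uniform block in a single fixed direction, so the produced distinguisher retains the asymmetric ``certify-randomness'' form of \eqref{eq:ov:tag-1} rather than a two-sided gap.

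Second, I would feed $G$ into the Goldreich--Goldwasser--Micali construction. Writing $G(s)=(G_{0}(s),G_{1}(s))$ with $G_{b}(s)\in\bits^{n}$, set $f_{s}(x_{1}\cdots x_{m})=G_{x_{m}}(\cdots G_{x_{1}}(s)\cdots)$ and let $F_{s}:\bits^{m}\to\bits$ be its first output bit, so that the truth table $TT(F_{s})$ has length $N=2^{m}$, where $m$ is a parameter fixed below as a fractional power of $n$. Two facts matter. (i) For every seed $s$, $F_{s}$ has a circuit of size $m\cdot\poly(n)$; taking $n=\Theta(m^{d})$ this is $\poly(m)$, so the family $(F_{s})$ lies in $\ppoly$ (polynomial in the number $m$ of variables). (ii) Running the GGM hybrid argument over the $O(N)$ internal tree nodes converts any nondeterministic circuit $D$ of size $S$ with $\underset{y\in\bits^{N}}{\mathbb{P}}[D(y)=1]-\underset{s\in\bits^{n}}{\mathbb{P}}[D(TT(F_{s}))=1]\ge\delta$ into a nondeterministic circuit $D'$ of size $S+\poly(n)\cdot N$ with $\underset{y\in\bits^{2n}}{\mathbb{P}}[D'(y)=1]-\underset{s\in\bits^{n}}{\mathbb{P}}[D'(G(s))=1]\ge\delta/O(N)$: the nondeterministic guess of $D$ is inherited by $D'$, and all auxiliary randomness of the hybrid (which tree node is attacked and the truly uniform labels above it) is hardwired non-uniformly to the setting that maximises the gap. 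Hence $H_{\mathrm{nh}}(G)\le\max\!\big(S+\poly(n)\cdot N,\ O(N)/\delta\big)$.

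Third, suppose for contradiction that some $\tildenppoly$-natural property $C$, witnessed by $C'\subseteq C$, is useful against $\ppoly$. Constructivity gives a nondeterministic circuit $D_{C}$ of quasipolynomial size $2^{(\log N)^{c_{0}}}=2^{m^{c_{0}}}$ accepting an $N$-bit string iff it encodes a function in $C'$; largeness gives $\underset{y\in\bits^{N}}{\mathbb{P}}[y\in C']\ge N^{-a_{0}}$. Since each $F_{s}\in\ppoly$ by (i), usefulness forces $F_{s}\notin C\supseteq C'$ for all large $m$ -- otherwise, choosing a good seed $s_{m}$ for each such $m$ (absorbed as circuit constants) would exhibit a $\ppoly$ family with property $C$ infinitely often. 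Therefore $\underset{s\in\bits^{n}}{\mathbb{P}}[D_{C}(TT(F_{s}))=1]=0$ while $\underset{y\in\bits^{N}}{\mathbb{P}}[D_{C}(y)=1]\ge N^{-a_{0}}$, so $D_{C}$ is a legal choice of $D$ in (ii) with advantage $\delta\ge N^{-a_{0}}$. Plugging in and substituting $N=2^{m}$, $n=\Theta(m^{d})$, we get $H_{\mathrm{nh}}(G)\le 2^{m^{c_{1}}}$ for a constant $c_{1}=\max(c_{0},2)$ depending only on $C$. Finally, attacking $G$ at $m=\lfloor n^{1/d}\rfloor$ variables with $d$ chosen so that $d>c_{1}/\epsilon'$ gives $H_{\mathrm{nh}}(G)\le 2^{m^{c_{1}}}=2^{n^{c_{1}/d}}<2^{n^{\epsilon'}}$ for all large $n$, contradicting $H_{\mathrm{nh}}(G)\ge 2^{n^{\epsilon'}}$.

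The main obstacle I anticipate is the simultaneous parameter balancing: the seed length $n$ must be a \emph{large} polynomial in the number of variables $m$, so that the super-hardness $2^{n^{\epsilon'}}$ outgrows the quasipolynomial size $2^{m^{c_{0}}}$ of the distinguisher extracted from $C$, and yet $n$ must stay polynomial in $m$, so that each GGM output $F_{s}$ keeps a $\poly(m)$-size circuit and therefore genuinely belongs to $\ppoly$ (the class against which $C$ is assumed useful). The second, more conceptual point to verify is that both reductions invoked -- Rudich's stretching and the GGM hybrid -- remain sound in the nondeterministic model, i.e.\ that the distinguisher they output still \emph{accepts} truly random strings as in \eqref{eq:ov:tag-1} and is not merely exploiting a two-sided statistical difference (which a nondeterministic adversary could always fake by guessing a seed); this holds precisely because in each hybrid step a uniform block is substituted for a pseudorandom block in one fixed direction, and because a non-uniform circuit may hardwire the optimal value of every internal coin.
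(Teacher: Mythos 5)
Your proposal is correct and follows essentially the same route as the proof the paper attributes to Rudich \cite{superbit}: stretch the super-bit (the standard hybrid argument is sound here because each step replaces a pseudorandom block by a uniform one in a fixed direction, preserving the one-sided gap of \eqref{eq:ov:tag-1}), build a pseudorandom function generator via the GGM tree that remains secure against nondeterministic adversaries, and then let the constructivity, largeness and usefulness of a hypothetical $\tildenppoly$-natural property supply a quasipolynomial-size one-sided distinguisher whose advantage, after the $O(N)$-step hybrid and the parameter choice $n=\Theta(m^{d})$ with $d$ large, contradicts the assumed exponential nondeterministic hardness. The parameter balancing and the treatment of usefulness (hardwiring a good seed per length to get a \ppoly\ family with the property infinitely often) are handled correctly, so no gap remains.
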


This theorem  is proved based on the ability to \emph{stretch} super-bits, namely, taking a generator that maps $n$ bits to $n+1$ bits, which we refer to as stretching length 1, to a generator that maps $n$ bits to $n+N$ bits, with $N>1$, which we call stretching length $N$. In the standard theory of pseudorandomness, a hard-bit (i.e., a strong PRG with stretching length $1$) is shown to be stretchable to polynomially many hard-bits (i.e., a polynomial stretching-length) \cite{Blum_Micali, Yao1} and can  be exploited to construct hard-to-break pseudorandom \emph{function} generators (loosely speaking, generators that generate pseudorandom functions indistinguishable from truly random ones) \cite{random_function}. As a hard-bit, a super-bit can also be stretched, using similar stretching algorithms, to polynomially many super-bits and to pseudorandom function generators secure against nondeterministic adversaries \cite{superbit}. The proofs of the correctness of such stretching algorithms are based on a technique called the \emph{hybrid argument} \cite{hybrid} reviewed below. In contrast, whether a demi-bit can be stretched even to two demi-bits was  unknown before the current work, since this cannot be concluded with a direct application of a  standard hybrid argument.


\section{Contributions, significance and context}
We develop the foundations of nondeterministic-secure pseudorandomness.
This is the first systematic investigation into nondeterministic pseudorandomness (in
the cryptographic regime) we are aware of, building on the primitives proposed by Rudich \cite{superbit}.
We provide new understanding of the primitives of the theory,  namely,  super and demi-bits, as well as introducing new notions and showing how they relate to established ones. We draw several conclusions from these results in average-case and proof complexity. We also  achieve some modest progress on establishing sounder foundations for barrier results: by showing, for instance, that demi-bits can be (moderately) stretched,  we provide some hope to strengthen the connection between demi-bits and unprovability results (as of now, it is only known that the existence of a super-bit yield barrier results, while we hope to show that the weaker assumption of the existence of demi-bits suffices for that matter).   
%

\subsection{Stretching demi-bits}\label{sec:intro:stretc-demi-bit}
In \Cref{alg:stretch}, we provide an algorithm that achieves a sublinear stretch for any given demi-bit. This solves the open problem of whether a demi-bit can be stretched to 2-bits \cite[Open Problem 2]{superbit} (see also Santhanam 
\cite[Question 4]{San20}).

\begin{theorem*}[Informal; \autoref{thm:alg}]
Every demi-bit $b:\{0,1\}^n\to\{0,1\}^{n+1}$ can be efficiently converted (stretched) into demi-bits $g:\{0,1\}^n\to\{0,1\}^{n+n^c}$, for every constant $0<c<1$.
\end{theorem*}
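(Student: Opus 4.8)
The plan is to take the candidate generator to be a truncated Blum--Micali--Yao-style iterate of $b$, and to prove its demi-hardness by a hybrid argument adapted to the fact that demi-security is a one-sided (perfect-soundness) notion. Write $b=b_n$ as $b(x)=(\bar b(x),\beta(x))$, where $\bar b\colon\{0,1\}^n\to\{0,1\}^n$ gives the first $n$ output coordinates and $\beta\colon\{0,1\}^n\to\{0,1\}$ the last, and for a parameter $k$ (ultimately $k=n^{c}$) set, for $0\le j\le k$,
\[
g_j(x)\ :=\ \beta(x)\,\beta(\bar b(x))\,\beta(\bar b^{(2)}(x))\cdots\beta(\bar b^{(j-1)}(x))\ \bar b^{(j)}(x)\ \in\ \{0,1\}^{\,n+j},
\]
so that $g_0(x)=x$, each $g_j$ is a polynomial-size (non-uniform) map, and $g_j$ obeys the peeling identity $g_j(x)=\beta(x)\cdot g_{j-1}(\bar b(x))$; the output generator is $g:=g_k$. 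It then suffices to lower bound $H_{\mathrm{dh}}(g_k)$ in terms of $H_{\mathrm{dh}}(b)$ with $k$ sublinear.

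First I would introduce the standard hybrid distributions $H_0,\dots,H_k$ on $\{0,1\}^{n+k}$, where $H_i=r_1\cdots r_i\cdot g_{k-i}(s)$ for uniform bits $r_1,\dots,r_i$ and an independent uniform seed $s\in\{0,1\}^n$; thus $H_0$ is the output distribution of $g_k$ and $H_k=U_{n+k}$. Assuming for contradiction a nondeterministic circuit $D$ of subexponential size with $\mathbb{P}_x[D(g_k(x))=1]=0$ and $\mathbb{P}[D(U_{n+k})=1]\ge\varepsilon$, the numbers $p_i:=\mathbb{P}[D(H_i)=1]$ rise from $0$ to $\ge\varepsilon$. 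The obstruction --- and the reason the textbook hybrid argument fails here, unlike for super-bits and for ordinary PRGs --- is that a demi-distinguisher must be \emph{exactly} correct on $\mathrm{im}(g_k)$, whereas the circuit one builds from a gap $p_{i+1}-p_i$ (read the $(n{+}1)$-bit test string $\zeta$ as the state--bit pair driving step $i{+}1$, continue the iteration from its first $n$ coordinates, and fill the leading $i$ positions with existentially quantified bits $\sigma$) produces a string in $\mathrm{im}(g_k)$ only when $\sigma$ happens to be the correct $\beta$-values along some $\bar b^{(i)}$-preimage, which generically it is not. Hence for $i\ge 1$ this circuit may accept points of $\mathrm{im}(b)$, so it is not a demi-distinguisher; controlling that soundness error is the crux.

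What is clean is the boundary step $i=0$: the circuit $E(\zeta):=D\bigl(\zeta_{n+1}\cdot g_{k-1}(\zeta_{[1..n]})\bigr)$ satisfies $E(b(x))=D(g_k(x))=0$ identically, so it is genuinely zero-error on $\mathrm{im}(b)$, while $\mathbb{P}_\zeta[E(\zeta)=1]=p_1$. So either $p_1$ is non-negligible --- then $E$ demi-breaks $b$, contradicting $H_{\mathrm{dh}}(b)$ --- or $p_1$ is small, in which case $D$ still separates $H_1$ from $U_{n+k}$; then I would \emph{peel} off the outermost coordinate, passing to $D'(y):=[\,\exists c\in\{0,1\}\colon D(c\cdot y)=1\,]$ on $\{0,1\}^{n+k-1}$, which satisfies $\mathbb{P}[D'(U_{n+k-1})=1]\ge\varepsilon$ and $\mathbb{P}_s[D'(g_{k-1}(s))=1]\le 2p_1$, and recurse against $g_{k-1}$. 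Iterating walks down $g_k\to g_{k-1}\to\cdots\to g_1=b$; the stretch must be sublinear because the soundness error accumulated over these $k$ levels can be absorbed only when $k$ is sublinear, which (after optimizing the various parameters, and if needed first passing to a padded copy of $b$) is what yields $k=n^{c}$ for every $0<c<1$ rather than a polynomial stretch.

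The main obstacle is this recursion: the peeled circuit $D'$ is only an \emph{almost} zero-error distinguisher against $g_{k-1}$ (soundness error $\le 2p_1$), not a genuine demi-distinguisher, so taken naively it reduces demi-stretching of $g_k$ only to breaking $g_{k-1}$ in the weaker, two-sided (``super'') sense --- which is insufficient, since a demi-bit need not be super-hard. Making the peeling soundness-preserving --- equivalently, arguing that the accumulated error can be charged away, or that one can always select a level at which to invoke the clean $i=0$ reduction where the error is $0$ (or negligible against the advantage) --- is where the real work lies, and where one must exploit the precise combinatorics of the iterate $g_j(x)=\beta(x)\cdot g_{j-1}(\bar b(x))$ instead of treating $b$ as a black box.
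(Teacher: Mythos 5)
There is a genuine gap, and you have in fact named it yourself: your recursion never produces a demi-distinguisher of $b$ except at the very first level. After one peeling step the circuit $D'(y):=[\exists c\colon D(c\cdot y)=1]$ has soundness error up to $2p_1$ on $\mathrm{im}(g_{k-1})$, and from that point on the "clean $i=0$ reduction'' is no longer available, because its zero-error property relied on the current distinguisher being \emph{exactly} zero on the image of the current iterate. Demi-hardness is a qualitative, one-sided condition ($\Pr_x[C(b(x))=1]=0$ exactly), so the suggestion that the accumulated error "can be absorbed'' when $k=n^c$ is sublinear does not work: no amount of quantitative slack converts an almost-zero-error distinguisher into a zero-error one, and you give no argument that some level exists at which the error is genuinely $0$ (the quantity $p_1$ at a given level is just an acceptance probability on a hybrid and has no reason to vanish). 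So the proposal reduces demi-breaking of $g_k$ only to breaking $g_{k-1}$ in the two-sided ("super'') sense, which, as you note, is insufficient; the crux is left open, and with the sequential Blum--Micali--Yao iterate it is unclear how to close it, since the prefix bits $\beta(x),\beta(\bar b(x)),\dots$ and the final state are all determined by a single seed, so existentially re-guessing them does not respect the randomness structure of the string.

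The paper avoids this exact trap by choosing a different construction and a different use of nondeterminism: the stretch is the \emph{parallel} map $g(x_1\cdots x_m r)=b(x_1)\cdots b(x_m)\,r$ on disjoint seed blocks, and the hybrid argument is applied not to $D$ but to an amplified distinguisher $D'$ that existentially quantifies over the seeds of the pseudorandom prefix ($D'(y_1\cdots y_m,i)$ guesses $x_1,\dots,x_i$ and runs $D$ on $b(x_1)\cdots b(x_i)y_{i+1}\cdots y_m$). Because the blocks are independent, one can isolate the set $S_2$ of suffixes $y_{i+1}\cdots y_m$ for which \emph{no} choice of seeds makes $D$ accept; fixing a good suffix in $S_2$ yields a circuit $C(\mathbf{y}_i)$ that accepts a $1/(m|D|)$ fraction of random blocks yet accepts no $b(x_i)$ at all, so zero-error is preserved exactly rather than approximately. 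The paper explicitly remarks that for iterated one-bit constructions like the one you chose, it is unknown how to make nondeterminism preserve the randomness structure; so your plan is not merely incomplete in a fixable detail, it is built on the variant of the construction for which the known technique does not apply.
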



\subsubsection{Discussion and significance of stretching demi-bits to barrier results}

 Stretching demi-bits can be viewed as a first step towards showing that the existence of a demi-bit rules out $N\tilde{P}/qpoly$-natural properties useful against \ppoly{}, as we explain below. Providing such a barrier for  $N\tilde{P}/qpoly$-natural properties based on the existence of a demi-bit is important, since assuming the existence of a demi-bit is a weaker assumption than assuming the existence of a super-bit. 

Why is stretching demi-bits a step towards showing that the existence of a demi-bit would rule out $N\tilde{P}/qpoly$-natural properties useful against \P/\poly? The reason is that stretching is the first step in the argument to base barrier results on the existence of a super-bit, in the following sense:  the existence of a super-bit implies barrier results because one can stretch super-bits to obtain pseudorandom function generators, from which one gets the barrier result as noted in   \autoref{thm:overview-Rudich} above (and the text that
follows it). 
More precisely, stretching demi-bits is a first (and necessary) step towards Rudich's Open Problem 3, and this problem also implies Rudich's Open Problem 4:

\begin{open problem*}[Rudich's Open Problem 3 \cite{superbit}]
Given a  demi-bit,  is it possible to build a pseudorandom function generator with exponential ($2^{n^\epsilon}$) demi-hardness?
\end{open problem*}

\begin{open problem*}[Rudich's Open Problem 4 \cite{superbit}]Does the existence of  demi-bits rule out   $N\tilde{P}/qpoly$-natural properties useful against \ppoly{}? \end{open problem*}


Moreover, the study of the stretchability of demi-bit(s)  provides a perspective towards resolving Rudich's Open Problem 1 (a positive answer of which would also resolve positively Open Problem 4):

\begin{open problem*}[Rudich's Open Problem 1 \cite{superbit}]
Does the existence of a demi-bit imply the existence of a super-bit?
\end{open problem*}

This is because the stretchability of super-bits is well understood, while previously we did not know anything about the stretchability of demi-bits. We  expect that understanding better  basic properties of  demi-bits, such as stretchability, would shed light on the relation between the existence of demi-bits and the existence of super-bits
(Open problem 1). 




\subsubsection{Applications in average-case complexity}\label{sec:overv-apps} 
Here we describe an application of \Cref{thm:alg} to the average-case hardness of time-bounded Kolmogorov complexity. 

As observed by Santhanam \cite{San20},  a hitting set generator $g: \{0,1\}^n \rightarrow \{0,1\}^{n+1}$
 exists iff there exists a demi-bit $b: \{0,1\}^n \rightarrow \{0,1\}^{n+1}$. To recall, a \textbf{\emph{hitting set generator} \emph{against a  class 
of decision problems}} $\mathcal{C}\subseteq 2^{\{0,1\}^{N}}$ is a function
$g: \{0,1\}^n \rightarrow \{0,1\}^{N}$, for $n<N$, such that the image of $g$ \emph{hits} (namely, intersects) every dense enough set $A$ in $\mathcal{C}$ (that is,  $|A|\ge \frac{2^{N}}{N^{{O(1)}}}$). And we have:

\begin{proposition*}[\Cref{prop:santh-claim}; \cite{San20}]
Let $n<N$. A hitting set generator $g: \{0,1\}^n \rightarrow \{0,1\}^{N}$ computable in the class $\mathcal D$ against $\NP/\poly$
exists iff there exists a demi-bit $b: \{0,1\}^n \rightarrow \{0,1\}^{N}$ computable in $\mathcal D$ (against $\NP/\poly$).
\end{proposition*}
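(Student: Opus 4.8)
The plan is to prove the equivalence as two implications, and in each direction to exhibit the promised object by a (nearly) trivial transformation of the given one. The easy direction is that a demi-bit $b\in\mathcal D$ is already a hitting set generator against $\NP/\poly$. Suppose not: then there is $A\subseteq\{0,1\}^{N}$ decidable by some polynomial-size nondeterministic circuit $D$, with $|A|\ge 2^{N}/N^{O(1)}$ and $A\cap\mathrm{Im}(b)=\emptyset$. Hard-wiring the advice into $D$, we get $\underset{y}{\mathbb P}[D(y)=1]=|A|/2^{N}\ge 1/N^{O(1)}$ and $\underset{x}{\mathbb P}[D(b(x))=1]=0$ (the latter because $A$ misses the image of $b$). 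Taking $s$ to be a single polynomial in $N$ dominating both the size of $D$ and $N^{O(1)}$, the circuit $D$ witnesses $H_{\textup{dh}}(b)\le s=\mathrm{poly}(N)$; since $N$ is polynomial in $n$, this contradicts $H_{\textup{dh}}(b)\ge 2^{n^{\epsilon}}$. Hence $\mathrm{Im}(b)$ meets every dense $\NP/\poly$ set, i.e.\ $b$ is the desired HSG, and it lies in $\mathcal D$ by hypothesis.

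For the converse I would start from a hitting set generator $g\in\mathcal D$ against $\NP/\poly$. Unwinding the hitting property exactly as above shows that every polynomial-size nondeterministic $D$ with $\underset{x}{\mathbb P}[D(g(x))=1]=0$ must satisfy $\underset{y}{\mathbb P}[D(y)=1]<1/N^{c}$ for every constant $c$ (otherwise $\{y:D(y)=1\}$ would be a dense $\NP/\poly$ set disjoint from $\mathrm{Im}(g)$), i.e.\ $g$ is demi-hard against polynomial-size nondeterministic distinguishers. To turn resistance-against-polynomial-size distinguishers into the resistance-against-(sub)exponential-size distinguishers demanded of a demi-bit, I would apply a seed-pass-through padding: set $b_{m}(x):=g_{n(m)}(x_{1\ldots n(m)})\,\|\,x_{n(m)+1\ldots m}$ for a suitable $n(m)\le m$, so that $b_m:\{0,1\}^m\to\{0,1\}^{m+1}$ and $b_m\in\mathcal D$ (as $g\in\mathcal D$ and $\mathcal D$ is closed under appending input bits to the output). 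Any nondeterministic distinguisher for $b_{m}$ that rejects all of $\mathrm{Im}(b_m)$ yields, after fixing the pass-through block to its best value, a nondeterministic distinguisher of essentially the same size and advantage that rejects all of $\mathrm{Im}(g_{n(m)})$; plugging this into the hitting guarantee for $g$ at scale $n(m)$ forces the distinguisher's size to be super-polynomial, and — once the size/density parameters of the HSG promise are threaded through — large enough to meet the required demi-hardness bound, so $b$ is a demi-bit.

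The routine parts are the first implication and the bookkeeping that $b\in\mathcal D \Leftrightarrow g\in\mathcal D$. The hard part is the hardness calibration in the second direction: one must check that the density threshold ``$|A|\ge 2^{N}/N^{O(1)}$'' together with the (polynomial-size, nondeterministic) circuit class in the definition of ``HSG against $\NP/\poly$'' line up with the exponential demi-hardness $H_{\textup{dh}}\ge 2^{n^{\epsilon}}$ asked of a demi-bit, and that the padding scale $n(m)$ can be chosen so the $O(1)$ blow-up in the reduction stays inside the hitting promise; one also has to handle the infinitely-often versus almost-everywhere quantifiers and keep $N$ polynomially (not super-polynomially) bounded in $n$ throughout.
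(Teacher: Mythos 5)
Your first direction (demi-bit $\Rightarrow$ HSG) is correct and is exactly the paper's argument: a dense $\NP/\poly$ set missing $\mathrm{Im}(b)$ yields a polynomial-size nondeterministic circuit $D$ with $\mathbb{P}[D(U_N)=1]\ge 1/N^{O(1)}$ and $\mathbb{P}[D(b(U_n))=1]=0$, contradicting demi-hardness. The problem is your second direction. The proposition asks for a demi-bit $b:\{0,1\}^n\to\{0,1\}^N$ with the \emph{same} input and output lengths as $g$, and the parenthetical ``(against $\NP/\poly$)'' indicates that ``demi-bit'' is here taken with security against polynomial-size nondeterministic circuits; the paper's proof simply observes that $g$ itself is such a demi-bit, since any polynomial-size nondeterministic $D$ with $\mathbb{P}[D(g(U_n))=1]=0$ and $\mathbb{P}[D(U_N)=1]\ge 1/N^{O(1)}$ would decide a dense $\NP/\poly$ set disjoint from $\mathrm{Im}(g)$, contradicting the hitting property. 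In other words, both directions are intended to be the identity transformation, which is also why the statement can fix the same $n$ and $N$ on both sides.

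Your proposal instead tries to manufacture the exponential demi-hardness $H_{\textup{dh}}\ge 2^{n^\epsilon}$ of the formal demi-bit definition out of the HSG promise via a seed-pass-through padding, and this is where it breaks. First, the padded $b_m:\{0,1\}^m\to\{0,1\}^{m+1}$ is not the object the statement asks for (its lengths differ from those of $g$). Second, and more fundamentally, the calibration you flag as ``the hard part'' cannot be carried out: the HSG hypothesis only excludes dense sets decided by \emph{polynomial-size} nondeterministic circuits, whereas a demi-bit adversary for $b_m$ may have size $2^{m^{\epsilon}}$. After you fix the pass-through block, the distinguisher you extract for $g_{n(m)}$ still has size $2^{m^{\epsilon}}$, which is super-polynomial in $n(m)\le m$ for every admissible choice of $n(m)$, so the hitting guarantee says nothing about it; padding can only shrink the seed relative to the adversary's budget and can never upgrade polynomial security to exponential security. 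So either you read the proposition with the polynomial-security notion of demi-bit, as the paper does --- in which case the identity map suffices and all the padding machinery is unnecessary --- or you insist on the exponential definition, in which case your second direction has a genuine gap that this construction cannot close.
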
  


Santhanam \cite[Proposition 3]{San20} established an equivalence between (succinct) hitting set generators and average-case hardness of MCSP, where MCSP stands for  the \textit{minimal circuit size problem}. However, as mentioned to us by Santhanam \cite{San22}, similar arguments can show an equivalence between hitting set generators (not-necessarily succinct ones) and poly-time bounded Kolmogorov complexity zero-error average-case hardness against \NP/\poly\ machines, as we show in this work.

 
\smallskip 


We define the \textbf{$t$-\textit{bounded Kolmogorov complexity of string $x$}}, denoted $\kolm^t(x)$, to be the minimal length of a string $D$ such that the universal Turing machine $U(D)$ (we fix some such universal machine) runs in time at most $t$ and outputs $x$.
See \cite{All2006} for more details about time-bounded Kolmogorov complexity and Definition 9 there for the definition  of time-bounded Kolmogorov complexity of strings  (that definition actually produces the $i$th bit of the string $x$ given an index $i$ and $D$ as inputs to $U$, but this does not change our result).




\newcommand{\kpolyp}{\ensuremath{\kpoly[s(n)]}} 

\begin{definition}[the language \kolmts{s} and \kpolyp]\label{def:kpoly}
For a time function $t(n):\N\to\N$ and a size function $s(n):\N\to\N$, such that $s(n)\le n$, let $\kolmtns{s(n)}$ be the language $\{ x\in\bits^*\;:\; |X|=n ~\land ~ \kolm^{t(n)}(x)\le s(n) \}$.
We define \kpoly$[s(n)]$ to be the language 
$\bigcup\nolimits_{c\in\N}\kolm^{n^c}[s(n)]$. 
\end{definition}

  

We also need to define the concept of zero-error average-case hardness against the class \NP/\poly\ (see \Cref{def:zeac-hardness}).
Informally, for a language $L$ to be zero-error average-case \emph{easy} for \NP/\poly, there should be a nondeterministic polytime machine with advice such that given an input $x$ the machine guesses a witness for $x\in L$ or a witness for $x\not\in L$, and when the witness is found it answers accordingly; and moreover we assume that for a polynomial-small fraction of inputs there are such witnesses (for membership or non-membership in $L$). If a witness is not found the machine outputs ``Don't-Know''. (We also assume that there are no pairs of contradicting witnesses for both $x\in L$ and $x\not\in L$.)

In \Cref{sec:apps} we show the following:

\begin{theorem*}[equivalence for average-case time-bounded Kolmogorov Complexity; \Cref{thm:ac}]
$\kpolys{n-O(1)}$ is zero-error average-case hard against $\NP/\poly$ machines iff \mbox{$\kpolys{n-o(n)}$} is zero-error average-case hard against $\NP/\poly$ machines.
\end{theorem*}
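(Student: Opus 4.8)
The plan is to route the statement through hitting set generators against $\NP/\poly$, so that it collapses to a purely combinatorial fact about the stretchability of demi-bits. First I would establish the Santhanam-style bridge that the excerpt promises: for a parameter function $s(n)\le n$ and an appropriate polynomial time bound, the language $\kpolys{s(n)}$ is zero-error average-case hard against $\NP/\poly$ at length $n$ if and only if the uniform poly-time universal-machine generator $g_U\colon \bi{s(n)}\to\bi{n}$, $D\mapsto U^{n^{O(1)}}(D)$ (with the usual padding convention on $D$), is a hitting set generator against $\NP/\poly$. The forward direction uses that the image of $g_U$ is exactly $\kpolys{s(n)}\cap\bi{n}$, so a dense $\NP/\poly$ set disjoint from that image is precisely a dense $\NP/\poly$ set on which \emph{non}-membership in $\kpolys{s(n)}$ can be certified; the backward direction observes that a zero-error $\NP/\poly$ machine that is correct (not ``Don't-Know'') on a dense set of inputs either witnesses that $\kpolys{s(n)}$ is itself dense — in which case it is trivially easy and both sides of the equivalence are false — or else is certifying non-membership on a dense set, i.e.\ yields a dense $\NP/\poly$ set avoiding the image of $g_U$. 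Composing this bridge with \Cref{prop:santh-claim} (hitting set generators against $\NP/\poly$ in a fixed class $\mathcal D$ exist iff demi-bits in $\mathcal D$ do), instantiated with $\mathcal D=\P$, gives the working dictionary: ``$\kpolys{n-r(n)}$ is zero-error average-case hard against $\NP/\poly$'' is equivalent, up to an additive $O(1)$ in $r(n)$ coming from fixed program/universal-machine descriptions, to ``$r(\cdot)$-many demi-bits exist''.

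With this dictionary the theorem reduces to the single assertion that $O(1)$-many demi-bits exist iff sublinearly-many demi-bits exist. The $\Leftarrow$ direction is just the easy truncation fact: deleting all but one of the stretched coordinates of a demi-bit can only decrease its demi-hardness (any distinguisher for the truncation extends, by ignoring the deleted coordinates, to a distinguisher for the original with the same acceptance probabilities on generated and on uniform strings), so $r(\cdot)$-many demi-bits yield a single demi-bit, which through the dictionary gives zero-error average-case hardness of $\kpolys{n-O(1)}$. The $\Rightarrow$ direction is where the new ingredient enters: from a single demi-bit produced by the dictionary, apply the stretching algorithm of \Cref{thm:alg} to obtain, for any fixed $0<c<1$, demi-bits $\bi{m}\to\bi{m+m^{c}}$, and then run the dictionary in reverse with $N=m+m^{c}$ and $r(N)=\Theta(N^{c})$ to conclude that $\kpolys{N-\Theta(N^{c})}$ is zero-error average-case hard against $\NP/\poly$ — a witness to the right-hand side ``$\kpolys{n-o(n)}$ is hard''. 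Here one reads ``$\kpolys{n-O(1)}$ is hard'' as ``$\kpolys{n-c}$ is hard for some constant $c$'' and ``$\kpolys{n-o(n)}$ is hard'' as ``$\kpolys{n-r(n)}$ is hard for some super-constant $r(n)=o(n)$''; the stretching theorem supplies the concrete witness $r(n)=n^{c}$ and truncation handles every reverse passage, so the equivalence is non-vacuous in both directions.

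Two bookkeeping points need care. One must keep the whole translation \emph{uniform}: since $\kolm^{t}$ is defined via a fixed universal machine with no advice, it is the image of a uniform poly-time generator that has small time-bounded Kolmogorov complexity, which is why \Cref{prop:santh-claim} is applied with $\mathcal D=\P$, and one should check that the construction behind \Cref{thm:alg} is uniform (so that it maps uniform demi-bits to uniform demi-bits); the fixed program descriptions then contribute exactly the additive $O(1)$ that the ``$-O(1)$'' slack in the statement is designed to absorb. The step I expect to be the real work is establishing the bridge of the first paragraph — the Santhanam-style equivalence between zero-error average-case hardness of poly-time-bounded Kolmogorov complexity against $\NP/\poly$ and hitting set generators against $\NP/\poly$ — since this generalises Santhanam's MCSP equivalence \cite{San20} and requires matching the ``guess a witness for membership or for non-membership, else Don't-Know'' semantics of zero-error average-case easiness against the ``dense $\NP/\poly$ set avoiding the image'' formulation of a non-hitting generator, including the case split on whether the compressible strings are themselves dense. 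Once that bridge and \Cref{prop:santh-claim} are in place, the theorem is a two-line corollary of truncation ($\Leftarrow$) and the demi-bit stretching theorem ($\Rightarrow$).
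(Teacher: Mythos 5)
Your overall route is the same as the paper's: prove a Santhanam-style bridge between zero-error average-case hardness of $\kpolys{s(n)}$ against $\NP/\poly$ and the existence of a uniform poly-time hitting set generator $\bits^{s(n)}\to\bits^n$ against $\NP/\poly$, pass to demi-bits via \Cref{prop:santh-claim} (in the uniform setting), and then finish with \Cref{thm:alg} for the interesting direction and truncation for the easy one, absorbing fixed-machine descriptions into the $O(1)$ slack. This is exactly how the paper organises the proof of \Cref{thm:ac}, including the remark that \Cref{alg:stretch} preserves uniformity.

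There is, however, one step in your bridge that does not work as written. In the direction ``$g_U$ is a HSG $\Rightarrow$ $\kpolys{s(n)}$ is hard'' (equivalently, ``easy $\Rightarrow$ $g_U$ is broken'') you split on whether the zero-error machine is correct on a dense set of \emph{members} or of \emph{non-members}, and in the first case you assert that ``both sides of the equivalence are false.'' The falsity of the generator side does not follow: a dense $\NP/\poly$ set of certified members lies inside (or at least intersects) $\textrm{Im}(g_U)$, so it cannot serve as a dense set avoiding the image, and a generator whose image is dense is, if anything, \emph{easier} to be a hitting set generator, not harder. Moreover this branch is not vacuous in the regime where you need this direction of the bridge: for $s(n)=n-O(1)$ the compressible strings really do form a constant fraction of $\bits^n$ (e.g.\ strings ending in a few zeros), so ``the members are dense'' is exactly the live case, and your argument gives no contradiction with $g_U$ being a HSG there. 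The paper avoids this trap by never arguing through the Yes side at all: it uses that a polynomial fraction of strings of each length are \emph{not} in $\kpolys{s(n)}$ to conclude that any zero-error average-case algorithm must answer No correctly on a polynomially dense set, and it builds the dense image-avoiding $\NP/\poly$ set solely from those No answers (the guessed witnesses of incompressibility). You should restructure that branch accordingly rather than claiming the generator side collapses. A smaller point, shared with the paper's own level of rigour and repairable by the same $O(1)$/fixed-exponent slack: $\textrm{Im}(g_U)$ is $\kolmtns{s(n)}$ for one fixed polynomial time bound, not exactly $\kpolys{s(n)}\cap\bits^n$, so the two directions of the bridge should be stated with the inclusion $\textrm{Im}(H)\subseteq\kpolys{s(n)+O(1)}$ (for any uniform poly-time $H$) rather than with equality of images.
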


%
\subsubsection{Applications in proof complexity} In proof complexity, Kraj\'{i}\v{c}ek \cite{Kra04,Kra10-forcing} and Alekhnovich, Ben-Sasson, Razborov and Wigderson~\cite{ABSRW00} developed the theory of \emph{proof complexity generators}. Given a \ppoly\ mapping $g:\{0,1\}^n\to\{0,1\}^\ell$, with $n<\ell$, and a fixed vector $r\in\bits^\ell$, we denote by $\tau(g)_r$ the $\poly(\ell)$-size propositional formula that encodes naturally the statement  $r\not\in{\textrm {Im}}(g)$, so that if $r$ is  not in the image of $g$ then $\tau(g)_r$ is a propositional tautology. For $r$ not in the image of $g$, the tautology $\tau(g)_r$ is called a \emph{proof complexity generator}, and the hope is that for strong propositional proof systems one can establish (at least conditionally) that there are no $\poly(\ell)$-size proofs of $\tau(g)_r$, under the assumption that the mapping $g$ is sufficiently pseudorandom (see also~\cite{Razb15-annals}). Kraj\'{i}\v{c}ek  observed the connection between proof complexity generators and demi-bits (see \cite[Corollary 1.3]{Kra04} and the discussion that follows there.)

An immediate corollary of \autoref{thm:alg} is the following.

\begin{corollary*}[Stretching proof complexity generators; \Cref{cor:proof-complexity-generators}]
Let $b:\{0,1\}^n\to\{0,1\}^{n+1}$ be a demi-bit computable in \ppoly. Let $0<c<1$ be a constant and $\ell=n+n^c$. Then, there is a proof complexity generator  $g:\{0,1\}^n\to\{0,1\}^{\ell}$  in \ppoly, such that for every  propositional proof system, with probability at least  $1-\frac{1}{\ell^{\omega(1)}}$ over the choice of $r\in\bits^{\ell}$, there are no $\poly(n)$-size proofs of the tautology $\tau(g)_r$.\footnote{The points $r$ are taken uniformly from $\bits^\ell$, and with probability $1-1/2^{\ell-n}$ the formula $\tau(g)_r$ is a tautology, because for all  $r\in \bits^\ell\setminus {\textup{Im}}(g)$ the formula $\tau(g)_r$ is a tautology. While in some works, proof complexity generators are supposed to be hard for \emph{every} $r$ outside the image of the generator $g$, in our formulation the hardness is only with high probability over the $r$'s. It is unclear whether this makes any difference in the theory of proof complexity generators, since we are not aware of a case where the property that $\tau(g)_r$ is hard for every $r\not\in\textup{Im}(g)$ is used (although all cases of provably hard proof complexity generators against weak proof systems we  know of, are hard for every $r\not\in\textup{Im}(g)$).}
\end{corollary*}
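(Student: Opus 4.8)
The plan is to combine \Cref{thm:alg} with the by-now-standard translation --- due to Kraj\'{i}\v{c}ek \cite{Kra04} --- between the demi-hardness of a generator and the hardness of its associated $\tau$-formulas; given \Cref{thm:alg}, the statement is essentially immediate. First I would apply \Cref{thm:alg} to the given demi-bit $b:\bits^n\to\bits^{n+1}$ to obtain, for the fixed constant $0<c<1$, a generator $g:\bits^n\to\bits^{\ell}$ with $\ell=n+n^{c}$ that is computable in \ppoly\ and is a demi-bit(s), i.e.\ $H_{\textup{dh}}(g)\ge 2^{n^{\epsilon}}$ for some constant $\epsilon>0$ and all large $n$. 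Since $g\in\ppoly$, the formula $\tau(g)_r$ has size $\poly(\ell)=\poly(n)$; it is a tautology exactly when $r\notin\textup{Im}(g)$; and the fraction of $r\in\bits^{\ell}$ lying in $\textup{Im}(g)$ is at most $2^{n-\ell}=2^{-n^{c}}$.

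The heart of the argument is a proof-guessing distinguisher. Fix an arbitrary Cook--Reckhow propositional proof system $P$ and an arbitrary polynomial $p$, and consider the nondeterministic circuit $D$ that on input $y\in\bits^{\ell}$ guesses a string $\pi$ of length at most $p(n)$, constructs $\tau(g)_y$ (hard-wiring the \ppoly-advice of $g$), and accepts iff $\pi$ is a valid $P$-proof of $\tau(g)_y$. Since $P$-proofs are checkable in polynomial time, $D$ has size $q(n)$ for a polynomial $q$ depending only on $p$ and $g$. Two observations: (i) $D$ rejects every string in $\textup{Im}(g)$, because for $y\in\textup{Im}(g)$ the formula $\tau(g)_y$ is falsifiable and hence, by \emph{soundness} of $P$, has no $P$-proof at all --- so $D$ outputs $0$ on every $g(x)$; (ii) the probability that $D$ accepts a uniformly random $y\in\bits^{\ell}$ equals the fraction $\delta$ of $r\in\bits^{\ell}$ for which $\tau(g)_r$ has a $P$-proof of size at most $p(n)$, since such proofs can only exist at non-image points. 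Padding $D$ up to size $s$ with $q(n)\le s\le 2^{n^{\epsilon}}-1$ (possible for all large $n$), a value $\delta\ge 1/s$ would yield $H_{\textup{dh}}(g)\le s<2^{n^{\epsilon}}$, contradicting that $g$ is a demi-bit(s). Taking $s=2^{n^{\epsilon}}-1$ therefore forces $\delta<1/(2^{n^{\epsilon}}-1)\le 2^{-n^{\epsilon}+1}$ for all large $n$, and --- crucially --- this bound depends on neither $P$ nor $p$.

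It remains to assemble the statement. Fix a propositional proof system $P$ and a polynomial $p$. By the previous paragraph the probability over $r\in\bits^{\ell}$ that $\tau(g)_r$ has a $P$-proof of size at most $p(n)$ is at most $2^{-n^{\epsilon}+1}$, and the probability that $\tau(g)_r$ fails to be a tautology is at most $2^{-n^{c}}=2^{-(\ell-n)}$; a union bound gives that with probability at least $1-2^{-n^{\epsilon}+1}-2^{-n^{c}}\ge 1-\ell^{-\omega(1)}$ the formula $\tau(g)_r$ is a tautology with no $P$-proof of size $\le p(n)$. Since $P$ and $p$ were arbitrary, this is exactly the claimed corollary (and the $2^{-(\ell-n)}$ term is what the accompanying footnote records).

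I do not expect a genuine obstacle here: given \Cref{thm:alg}, the corollary is bookkeeping around the standard demi-bit/$\tau$-formula correspondence. The two points deserving a little care are that the argument uses nothing about the proof system beyond Cook--Reckhow soundness and polynomial-time proof checking --- so it applies to \emph{every} propositional proof system, even weak ones --- and that the quantitative scales must be matched: the demi-hardness bound $2^{n^{\epsilon}}$ has to be turned into a failure probability $\ell^{-\omega(1)}$ that is uniform over all polynomial proof-size bounds, which is precisely why the distinguisher is padded up to size close to $2^{n^{\epsilon}}$ and why one exploits the gap between $\poly(n)$ and $2^{n^{\epsilon}}$. This last step is where a careless write-up could slip.
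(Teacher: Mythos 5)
Your proposal is correct and follows essentially the same route as the paper: stretch $b$ via \Cref{thm:alg} to demi-bits $g\in\ppoly$, then use the nondeterministic ``guess-and-verify a $P$-proof of $\tau(g)_r$'' distinguisher, which rejects all of $\textup{Im}(g)$ by soundness, so demi-hardness bounds the fraction of $r$ admitting $\poly(n)$-size proofs. The only difference is cosmetic: you run the argument directly with a padding step to extract the explicit bound $2^{-n^{\epsilon}+1}$, whereas the paper argues by contradiction from a hypothetical dense set $S$ of easy tautologies, which already suffices for the $1-\ell^{-\omega(1)}$ bound.
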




\subsubsection{Technique overview}
We prove the stretchability of demi-bits by a novel and more flexible use of the hybrid argument combined with  other ideas.

The \textit{hybrid argument} (a.k.a.~the hybrid method, the hybrid technique, etc.) is a common proof technique  originating from the work of Goldwasser and Micali \cite{hybrid}. It was named by Leonid Levin.
(See \Cref{sec:notations} for notations used below.)
When we have a generator $g : \bi{n} \to \bi{m(n)}$, a distinguisher $D$, and a function $p$ (usually a polynomial) such that
\begin{equation*}
\pr{D(U_m) = 1} - \pr{D(g(U_n)) = 1} \geq 1/p(n), 
\tag{$\ast$}
\end{equation*}
where $U_m$ stands for the truly random strings and $g(U_n)$ stands for the pseudorandom ones,
the standard hybrid argument defines a spectrum (i.e. an ordered set) of random variables $H_i$'s, called hybrids, traversing from one extreme, $U_m$, to another, $g(U_n)$. A concrete example is $H_i := g(U_n)[1...i] \cdot U_{m - i}, 0 \leq i \leq m$ (where  $\cdot$ here means concatenation).  In this example, indeed $H_0 = U_m$ and $H_m = g(U_n)$.
Then the inequality $(\ast)$ can be written as:
\begin{align*}
    1/p(n)
    &\leq
    \pr{D(U_m) = 1} - \pr{D(g(U_n)) = 1}
    \\&=
    \sum_i (\pr{D(H_i) = 1} - \pr{D(H_{i+1}) = 1}).
\end{align*}
Thus, a usual next step is to claim there exists some $i$ such that
\[
\pr{D(H_i) = 1} - \pr{D(H_{i+1}) = 1}
\geq
\frac{1}{k \cdot p(n)},
\]
where $k$ is the total number of hybrids (in the above example, $k = m$).
In a nutshell, we show that if we can distinguish $U_m$ from $g(U_n)$ by a $1/p(n)$ portion, then we can distinguish some neighbouring pair of hybrids $H_i$ from $H_{i+1}$ by a $1/(k \cdot p(n))$ portion. See \Cref{lemma:learn} for a simple demonstration of the hybrid argument.

As  mentioned above, a standard hybrid argument cannot be applied to prove a stretched generator $g$ by some stretching algorithm, from a single demi-bit $b$, is still demi-bits. We now intuitively explain the reason for this.
A usual proof goes like this: 
we assume, for a contradiction, $g$ are not demi-bits. Then there are some distinguisher $D$ of $g$ and a function $p$ such that 
$\pr{D(U_m) = 1} - \pr{D(g(U_n)) = 1} \geq 1/p(n)$, 
and in particular $\pr{D(g(U_n)) = 1} = 0$ as $D$  breaks  demi-bits, and we hope to  construct a new appropriate  distinguisher $C$ of $b$ based on $D$.
However, as we saw above, a standard hybrid argument only yields that
$\pr{D(H_i) = 1} - \pr{D(H_{i+1}) = 1} \geq 1/p'(n)$ for some function $p'$ and cannot deduce that $\pr{D(H_{i+1}) = 1} = 0$. Hence, it is unclear how to continue this construction. 

The argument for proving  \autoref{thm:alg} proceeds by the contrapositive. We assume there is a distinguisher $D$ which breaks demi-bits $g$ (stretched from a single demi-bit $b$ by \Cref{alg:stretch}) in the desired sense, and we want to construct a distinguisher $C$ which breaks $b$.
Rather than applying the hybrid argument directly to $D$, 
we apply the hybrid argument to a new distinguisher $D'$ defined based on $D$: 
the new distinguisher $D'$ can use nondeterminism to change the pseudorandom part of the hybrids and thus ``amplifies'' the probability of certificating randomness (intuitively, this can  be viewed as changing the average-case analysis in the standard hybrid argument to a worst-case or existence analysis). By applying the hybrid argument to $D'$, we are able to identify a non-empty class, denoted by $S_2$ in the proof, of random strings $y_{i+1} \ldots y_m$, that are not random witnesses (in the sense that, for each $y_{i+1} \ldots y_m$ in this class, there are no seeds $x_1, \ldots, x_i$ such that $D(b(x_1) \ldots b(x_i) \: y_{i+1} \ldots y_m) = 1$).
Thus, for $y_{i+1} \ldots y_m$ in $S_2$, 
$y_{i} y_{i+1} \ldots y_m$ can become a random witness (i.e., there are seeds $x_1, \ldots, x_{i-1}$ such that $D(b(x_1) \ldots b(x_{i-1}) \: y_{i} \ldots y_m) = 1$) only if $y_{i}$ is truly random (i.e., not equal to $b(x)$ for some seed $x$). 
The hybrid argument also implies a ``good" such $y_{i+1} \ldots y_m$ in $S_2$, which can identify a sufficient portion of truly random $y_{i}$. We can thereby build a new distinguisher $C$ to distinguish truly random strings from pseudorandom ones.

A key step that makes this proof work is that nondeterministically guessing seeds $x_1, \ldots, x_{i-1}$ in $b(x_1) \ldots b(x_{i-1}) z_{i}$ preserves the ``randomness-structure'' of $b(x_1) \ldots b(x_{i-1}) z_{i}$, in the sense that: when $z_{i} = b(\cdot)$ is pseudorandom, the nondeterministic guess preserves the form $b(\cdot) \ldots b(\cdot) b(\cdot)$ (i.e., $i$ equal-length pseudorandom chunks); and when $z_{i} = y$ is truly random, it preserves the form $b(\cdot) \ldots b(\cdot) y$ (i.e., $i-1$ equal-length pseudorandom chunks followed by a truly random chunk $y$ of the same length). 
For common stretching algorithms that produce exponentially many new bits (e.g., recursively applying a one-bit generator), it is unclear how to use nondeterminism in a way that respects the ``randomness-structure'' of a given string.
Nevertheless, the new proof technique should hopefully inspire researchers to further explore the stretchability of demi-bits. 
On the other hand, the fact could also be that there is a specific demi-bit which cannot be stretched to exponentially many demi-bits by the standard stretching algorithms which are applied to super-bits and strong PRGs.

\subsection{Fine-grained characterisation of nondeterministic security based on unpredictability
}\label{sec:overv:big-formula}

Yao \cite{Yao1} defined PRGs as producing sequences that are computationally indistinguishable, by deterministic adversaries, from uniform sequences and proved that this definition of indistinguishability is equivalent to deterministic unpredictability, which was used in an earlier definition of PRGs suggested by Blum and Micali \cite{Blum_Micali}. Loosely speaking, unpredictability means, given any strict prefix of a random string, it is infeasible to predict the next bit.

In \autoref{sec:predict}, we provide a more fine-grained picture of nondeterministic hardness  (\Cref{def:nondeterministic-hardness}), by introducing the concept of nondeterministic unpredictability. This allows us to establish new lower and upper bounds to nondeterministic hardness, in the sense that we sandwich nondeterministic hardness between two unpredictability properties.



Specifically, we propose four  notions of unpredictability for probability ensembles:
\begin{enumerate}
    \item \uline{$\NP/\poly$-unpredictability}: 
    the capacity of being unpredictable by $\NP/\poly$ predictors.
    
    \item\uline{ $\coNP/\poly$-unpredictability}:
    the capacity of being unpredictable by $\coNP/\poly$ predictors.
    
    \item \uline{$\cup$-unpredictability}:
    the capacity of being unpredictable by  predictors in the union of $\NP/\poly$ and $\coNP/\poly$.
    
    \item\uline{$\cap$-unpredictability}:
    the capacity of being unpredictable by nondeterministic function-computing predictors.
\end{enumerate}

The names $\NP/\poly$-unpredictability,  $\coNP/\poly$-unpredictability, and $\cup$-unpredictability (a shorthand for $\NP/\poly \cup \coNP/\poly$-unpredictability) are self-explanatory, while the use of ``$\cap$-unpredictability" is somewhat less intuitive. We will see, in \Cref{sec:nd-pred} (where we use nondeterministic function-computing machines\footnote{We say a nondeterministic algorithm $\McA$ is a \textbf{function-computing} algorithm, 
if for every input $x \in \bi{n}$,
every computation branch yields one of $\{0, 1, \bot \}$, in which $\bot$ indicates a failure,
and there is always a computation branch yielding $0$ or $1$.}), that a decision problem is in $\NP/\poly \cap \coNP/\poly$ if and only if it is decidable by a nondeterministic polynomial-size function-computing algorithm. 

We establish the following characterisation of the nondeterministic hardness of generators from an unpredictability perspective:

\begin{theorem*}[\Cref{sum:unpredict}]
Here, $A \leq B$ means that if a generator has property $B$, then it also has property $A$:
%
%
\begin{figure}[H] 
\psscalebox{1.0 1.0} 
{
\begin{pspicture}(0,-3.9357579)(16.473505,-0.4433162)
\definecolor{colour1}{rgb}{0.9019608,0.9019608,0.9019608}
\psframe[linecolor=black, linewidth=0.04, fillstyle=solid,fillcolor=colour1, dimen=outer, framearc=0.35](12.344411,-0.4433162)(7.6175084,-1.51537)
\psrotate(7.2288127, -1.196827){26.094822}{\rput[bl](7.2288127,-1.196827){$\le$}}
\rput[bl](0.7648719,-1.6459315){\ppoly-}
\psrotate(7.0910635, -2.3582706){-22.680593}{\rput[bl](7.0910635,-2.3582706){$\le$}}
\rput[bl](0.176633,-1.9793179){\textup{-unpredictability}}
\rput[bl](3.1704955,-1.8564247){$\le~ \cap$\textup{-unpredictability}}
\rput[bl](8.3399515,-0.8728388){super-polynomial}
\rput[bl](7.790072,-1.2687984){nondeterministic hardness}
\rput[bl](7.8272605,-2.5348604){  $\NP/\poly\textup{-unpredictability}$ }
\psrotate(12.681488, -2.5171003){30.20006}{\rput[bl](12.681488,-2.5171003){$\le$}}
\psrotate(12.568805, -1.1049927){-22.680593}{\rput[bl](12.568805,-1.1049927){$\le$}}
\rput[bl](13.080692,-1.7751957){$\cup$\textup{-unpredictability}}
\psframe[linecolor=black, linewidth=0.04, dimen=outer, framearc=0.35](3.1338384,-1.2075249)(0.0,-2.1895452)
\psframe[linecolor=black, linewidth=0.04, dimen=outer, framearc=0.35](12.50085,-1.9913346)(7.5969105,-2.7026687)
\psrotate(14.649411, -1.6727943){0.15163828}{\psframe[linecolor=black, linewidth=0.04, dimen=outer, framearc=0.35](16.34798,-1.2082994)(12.950842,-2.1372893)}
\psframe[linecolor=black, linewidth=0.04, dimen=outer, framearc=0.35](7.057273,-1.2351344)(3.5562289,-2.2171545)
\rput[bl](7.854062,-3.5563045){$\coNP/\poly\textup{-unpredictability}$ 
}
\psrotate(7.0743403, -2.9628682){-36.68}{\rput[bl](7.0743403,-2.9628682){$\le$}}
\psrotate(12.682867, -3.1422346){-316.8}{\rput[bl](12.682867,-3.1422346){$\le$}}
\psframe[linecolor=black, linewidth=0.04, dimen=outer, framearc=0.35](12.47403,-3.0434418)(7.5700903,-3.7547758)
\end{pspicture}
}
%
\caption{Super-polynomial nondeterministic hardness here refers to \cref{def:nondeterministic-hardness}. Note that $\cup$\textup{-unpredictability} is at least as strong as $\NP/\poly\textup{-unpredictability}$, because it rules out predictors in \emph{both} $\NP/\poly$ and $\coNP/\poly$. And similarly, $\cup$\textup{-unpredictability} is at least as strong as $\coNP/\poly$\textup{-unpredictability.}}
\end{figure}
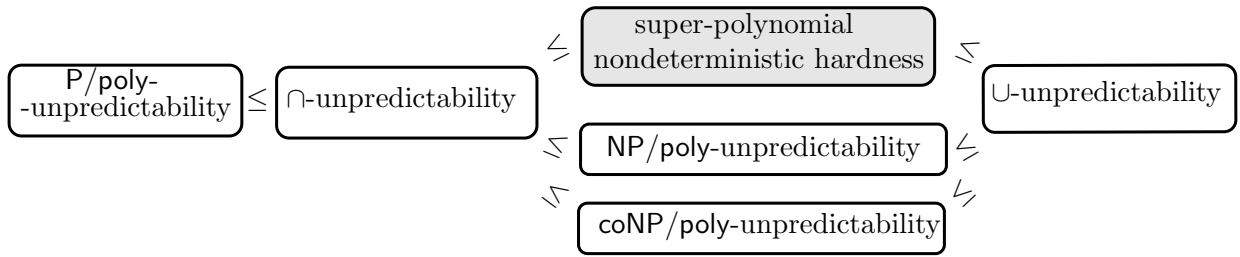
\end{theorem*}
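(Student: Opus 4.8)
The goal is to sandwich super-polynomial nondeterministic hardness (Definition \ref{def:nondeterministic-hardness}) between $\cap$-unpredictability (lower bound) and $\cup$-unpredictability (upper bound), and to locate $\NP/\poly$-unpredictability and $\coNP/\poly$-unpredictability relative to these, as depicted in the figure. The plan is to follow Yao's classical template --- nondeterministic hardness implies unpredictability, and unpredictability implies nondeterministic hardness --- but tracking carefully which kind of nondeterminism (existential vs.\ universal, i.e.\ $\NP$ vs.\ $\coNP$) a predictor or distinguisher uses at each step, since this is exactly what breaks the symmetry in the nondeterministic setting.

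\textbf{Upper bound: $\cup$-unpredictability $\Rightarrow$ super-polynomial nondeterministic hardness.} I would prove the contrapositive via a next-bit-predictor-from-distinguisher construction. Suppose a generator $g_n:\bits^n\to\bits^{l(n)}$ has small nondeterministic hardness, so there is a small nondeterministic circuit $D$ with $\Pr_y[D(y)=1]-\Pr_x[D(g_n(x))=1]\ge 1/s$. Run the standard hybrid argument with hybrids $H_i = g_n(U_n)[1..i]\cdot U_{l-i}$ to find an index $i$ where $D$ separates $H_i$ from $H_{i+1}$ by $\ge 1/(s\cdot l)$. The classical Yao trick turns $D$ at position $i$ into a predictor for the $(i{+}1)$-st bit: guess a uniformly random completion, feed it through $D$, and output either the guessed bit or its complement depending on $D$'s answer. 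The key point in the nondeterministic regime is that when $D$ is a nondeterministic (existential) circuit, the resulting predictor is naturally an $\NP/\poly$ predictor --- but a $\cup$-unpredictable ensemble by definition resists predictors in \emph{both} $\NP/\poly$ and $\coNP/\poly$, so in particular it resists this one. Hence $\cup$-unpredictability forces large nondeterministic hardness. (Symmetrically one should check that if we had defined nondeterministic hardness with a $\coNP$ distinguisher we would get a $\coNP/\poly$ predictor; this explains why $\coNP/\poly$-unpredictability also sits below $\cup$-unpredictability but is incomparable, a priori, to the $\NP$-flavoured hardness.)

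\textbf{Lower bound: super-polynomial nondeterministic hardness $\Rightarrow$ $\cap$-unpredictability.} Here I again argue by contrapositive: from a successful nondeterministic \emph{function-computing} predictor $P$ (the $\cap$ case, which by the footnote's remark corresponds to $\NP/\poly\cap\coNP/\poly$ predictors) for some bit position $i$ of $g_n(U_n)$, build a nondeterministic distinguisher $D$ of the right orientation for Definition \ref{def:nondeterministic-hardness}. The subtlety, and the reason we need the \emph{function-computing} (i.e.\ two-sided, always-halting-with-a-definite-answer) property rather than a one-sided $\NP$ predictor, is that Definition \ref{def:nondeterministic-hardness} demands $D$ \emph{certify} randomness: it must have $\Pr_y[D(y)=1]$ large while $\Pr_x[D(g_n(x))=1]$ small, and a distinguisher built from a merely-existential predictor cannot reliably output $0$ on pseudorandom inputs. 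A function-computing predictor, by contrast, on every branch commits to a guess of the next bit (or $\bot$), so we can build $D(w)$ that nondeterministically runs $P$ on the prefix of $w$, checks whether $P$'s guessed bit agrees with the actual next bit of $w$, and accepts accordingly; standard accounting shows this $D$ achieves advantage $\ge 1/\poly$ in the oriented sense, contradicting super-polynomial nondeterministic hardness. Reassembling the hybrid bookkeeping (a predictor that is good on average over bit positions suffices) yields the implication.

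\textbf{The inner box (where the figure's arrows live) and the main obstacle.} The four $\le$ relations inside the figure --- $\ppoly$-unpredictability $\le$ $\cap$-unpredictability $\le$ super-poly nondeterministic hardness $\le$ $\NP/\poly$-unpredictability, $\coNP/\poly$-unpredictability, and all of these $\le$ $\cup$-unpredictability --- follow by chaining the two boxed implications above with the trivial inclusions among the predictor classes ($\ppoly\subseteq \NP/\poly\cap\coNP/\poly$; $\NP/\poly\cap\coNP/\poly\subseteq\NP/\poly\subseteq\NP/\poly\cup\coNP/\poly$, and likewise for $\coNP/\poly$): a weaker class of predictors to defeat is an easier unpredictability property to have, so more predictors $\Rightarrow$ stronger property $\Rightarrow$ the $\le$ points the stated way. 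I expect the main obstacle to be the direction super-polynomial nondeterministic hardness $\Rightarrow$ $\cap$-unpredictability, specifically getting the \emph{orientation} of the constructed distinguisher right: one must verify that the function-computing predictor can be used to produce a distinguisher that genuinely \emph{rejects} (outputs $0$ on) almost all pseudorandom strings rather than merely failing to accept them, since the asymmetric Definition \ref{def:nondeterministic-hardness} is unforgiving about this. Handling the $\bot$ (failure) branches of the predictor without inflating $\Pr_x[D(g_n(x))=1]$, and making sure the hybrid-index loss stays polynomial, are the places where care is needed; everything else is a faithful translation of Yao's argument with the nondeterminism types tracked.
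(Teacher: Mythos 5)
Your overall architecture mirrors the paper's: both directions are argued by contrapositive, with a hybrid argument and Yao-style conversions between distinguishers and next-bit predictors, and the remaining arrows come from comparing predictor classes. Your lower-bound direction (super-polynomial nondeterministic hardness $\Rightarrow$ $\cap$-unpredictability) is essentially the paper's \Cref{pred-prop1}: accept exactly when some branch of the function-computing predictor outputs a definite bit that \emph{disagrees} with the actual next bit, so that on pseudorandom strings correctness of the predictor forces rejection, while on uniform strings the acceptance probability is at least $1/2$; note no hybrid bookkeeping is needed there, since the index $i$ comes with the predictor. However, your upper-bound step contains a genuine error: the claim that the Yao predictor obtained from the nondeterministic distinguisher ``is naturally an $\NP/\poly$ predictor'' is false. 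Since $\NP/\poly$ and $\coNP/\poly$ predictors are not randomized, you must fix the guessed bit $b$ (and the random completion $w$) as advice; with $b=0$ the predictor is $D(Y_i0w)$, an $\NP/\poly$ predictor, but with $b=1$ it is $\overline{D(Y_i1w)}$, a $\coNP/\poly$ predictor, and you cannot control which of the two fixings carries the $1/\poly(n)$ advantage. If your argument worked as written, it would prove that $\NP/\poly$-unpredictability alone implies super-polynomial nondeterministic hardness, which is \emph{not} what the figure asserts and is left open in the paper. The paper's proof of \Cref{pred-prop2} makes exactly the missing case split: it defines $\McA_1(Y_i)=D(Y_i0w)\in\NP/\poly$ and $\McA_2(Y_i)=\overline{D(Y_i1w)}\in\coNP/\poly$, shows $\pr{\McA_1(Z_i)=z_{i+1}}+\pr{\McA_2(Z_i)=z_{i+1}}\ge 1+2/p(n)$, and concludes that at least one of the two predicts --- which is precisely why $\cup$-unpredictability (ruling out \emph{both} classes) is what sits above hardness. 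Your parenthetical misdiagnoses this point: the $\coNP/\poly$ predictor arises from the very same $\NP$-type distinguisher, not from a hypothetical $\coNP$-flavoured definition of hardness.

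A smaller imprecision: the arrows $\cap$-unpredictability $\le$ $\NP/\poly$-unpredictability and $\cap$-unpredictability $\le$ $\coNP/\poly$-unpredictability are not literal ``inclusions among predictor classes''. A $\cap$-predictor is a (possibly non-total) nondeterministic function-computing machine, not an $\NP/\poly$ or $\coNP/\poly$ decision machine, so one needs the conversion of \Cref{pred-lemma}: resolve the $\bot$ outputs to $1$ (respectively $0$) to obtain an $\NP/\poly$ (respectively $\coNP/\poly$) predictor, and check that the success probability does not decrease. This is easy, but it is a step rather than a containment, and it is the step the paper records as a lemma.
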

\subsection{Super-cores:  hard-core predicates in the nondeterministic setting} 


In the deterministic context, the existence of strong PRGs is known to be equivalent to the existence of central cryptographic primitives such as one-way functions, secure private-key encryption schemes, digital signatures, etc. Liu and Pass  \cite{Liu2020} recently showed that a meta-complexity assumption  about mild average-case hardness of the time-bounded Kolmogorov Complexity is also equivalent to the existence of strong PRGs.
On the other hand, in the nondeterministic case we  have no known  such equivalent characterisations (of  nondeterministic-secure PRGs, namely, super-bits). In \autoref{sec:hard-core}, we introduce a definition of \emph{super-cores} serving as a nondeterministic variant of hard-cores. We then use this concept to draw the first equivalent characterisation of super-bits.

%
%

We start by reviewing the concepts of one-way functions and hard-core predicates. Loosely speaking, a one-way function (family) $f$ is a one that is easy to compute but hard to invert on average (with the probability taken over the domain of $f$). More precisely, ``easy to compute" means $f$ is in $\P$ or \ppoly{}, and ``hard to invert" means any efficient deterministic algorithm can only invert a negligible portion of $y = f(x)$ when $x$ is unseen. By ``efficient", in the uniform setting, we mean an algorithm in bounded-error probabilistic polynomial time ($BPP$), and in the nonuniform setting, an algorithm in \ppoly{}, and by ``invert", we mean finding an $x'$ for a given $y$ in $range(f)$ such that $f(x') = y$. A negligible portion for us means a portion that is less than $1/p(n)$ for any polynomial $p$ and all large $n$'s. We say $b: \bi{n} \to \bi{}$ in $\P$ or \ppoly{} is \emph{a hard-core of a function} $f$ if it is impossible to efficiently predict $b(x)$ with probability at least $1/2 + 1/\poly(n)$  given $f(x)$.

The existence of hard-core predicates is known (e.g., $b(x) = x[-1]$, the last bit of a string $x$, is a hard-core of the function $f(x) = x[1]$, the first bit of $x$), but the existence of a hard-core for a one-way function and the existence of any one-way function to begin with are unknown.
Goldreich and Levin \cite{hard_core_from_OWF} proved that inner product $mod\ 2$ is a hard-core for any function of the form $g(x,r) = (f(x),r)$, where $f$ is any one-way function and $|x| = |r|$. Subsequently, H\r{a}stad, Impagliazzo, Levin, and Luby \cite{PRG_from_OWF} showed that strong PRGs exist if and only if one-way functions exist. This theorem can  be stated equivalently as: a strong PRG exists if and only if a hard-core of some one-way function exists.

Since a strong PRG exists if and only if  a hard-core of some one-way function exists, and a super-bit is the nondeterministic analogue  of strong PRGs, a meaningful question to ask is: \smallskip
\textit{\begin{adjustwidth}{1.2cm}{1.2cm}
    What are the nondeterministic analogues   of one-way functions and hard-cores? 
\end{adjustwidth}}
\smallskip
To come up with a reasonable definition of super one-way functions is not an easy task because, for any function $f$, a nondeterministic algorithm can always invert a range-element $y$ by guessing some $x$ and checking if $f(x) = y$. Similarly, a reasonable definition of super-core predicates is non-trivial as well: for any function $f$ and predicate $b$, a nondeterministic algorithm can predict $b(x)$ when given $f(x)$ as input by guessing $x$ and then applying $b$.

We propose a definition of super-cores of a function $f$, which are secure against both \nppoly{} and \coNP/\poly\ predictors in the sense of \Cref{def:super-core}, when $f(x)$ is presented as the input.
With this definition, we can establish the following equivalence (we say a function $f: \bi{n} \to 
\bi{m(n)}$ is \emph{non-shrinking} if $m(n) \geq n$ for every $n$):


\begin{theorem*}[Informal; \autoref{equiv4}]
There is a super-core of some non-shrinking function if and only if there is a super-bit.
\end{theorem*}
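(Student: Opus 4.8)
The plan is to prove both directions by mimicking the classical Goldreich--Levin / Yao construction, but tracking carefully the order-of-quantifiers asymmetry built into nondeterministic hardness (Definition~\ref{def:nondeterministic-hardness}), which is the only delicate point. For the easy direction, I would start from a super-bit $g:\bits^n\to\bits^{n+1}$ and produce a non-shrinking function with a super-core. Write $g(x)=(h(x),\beta(x))$ where $h$ outputs the first $n$ bits and $\beta$ outputs the last bit; the candidate is $f:=h$ (non-shrinking since $|h(x)|=n=|x|$) with super-core predicate $b:=\beta$. One must show that no \nppoly\ or \coNP/\poly\ predictor can compute $\beta(x)$ from $h(x)$ with advantage $1/\poly$: given such a predictor $P$, build a nondeterministic distinguisher $D$ for $g$ by, on input $y=(u,v)\in\bits^{n+1}$, accepting iff $P(u)\neq v$ (for the \coNP/\poly\ case, complement appropriately, using that a \coNP/\poly\ predictor's complement is an \nppoly\ test). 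On a uniform $(u,v)$ this accepts with probability essentially $1/2$, whereas on $g(x)=(h(x),\beta(x))$ it accepts with probability $\le 1/2-1/\poly$ because $P$ predicts $\beta(x)$; the \emph{gap is in the direction required by \eqref{eq:ov:tag-1}}, contradicting super-hardness of $g$. (If the raw advantage sits on the wrong side of $1/2$ we first apply the standard trick of flipping $D$'s output with an independent fair coin, or observe that $\beta$ can be replaced by $\beta\oplus(\text{a hidden fair bit})$ via the Goldreich--Levin inner-product construction, so that the distinguishing advantage lands cleanly.)

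For the harder direction, suppose $b$ is a super-core of a non-shrinking $f:\bits^n\to\bits^{m(n)}$, $m(n)\ge n$. The candidate super-bit is essentially $G(x):=(f(x),b(x))$, a generator from $\bits^n$ to $\bits^{m(n)+1}\supseteq\bits^{n+1}$; if $m(n)>n$ this is already a $c$-super-bit with $c=m(n)+1-n$, and one then obtains a genuine super-bit by composing with the (known) ability to \emph{shrink} stretching length, or more simply by noting that a generator with stretch $\ge 1$ and exponential $H_{\textup{nh}}$ is by definition a super-bit(s), and in particular its first-$n{+}1$-bit truncation inherits super-hardness up to a polynomial loss. The core of the argument is: if $G$ were not super-hard, there is a size-$2^{n^{o(1)}}$ nondeterministic circuit $D$ with $\Pr_{y}[D(y)=1]-\Pr_x[D(G(x))=1]\ge 2^{-n^{o(1)}}$. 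I would run Yao's next-bit/predictor reconstruction: partition $y$ as $(w,v)$ with $w\in\bits^{m(n)}$, $v\in\bits$; the distinguishing advantage on the last bit yields a predictor that, given $w$, guesses $v$ with advantage; but here $w$ is uniform, not of the form $f(x)$, so one must additionally argue that $D$'s behaviour on $w=f(x)$ transfers --- this is where one uses that $f$'s range may be sparse and invokes a hybrid/averaging step over the two blocks, exactly as in \cite{hard_core_from_OWF}. The resulting predictor is nondeterministic; checking whether it falls in \nppoly, \coNP/\poly, or only in the ``$\cap$'' class (nondeterministic function-computing) determines which clause of the super-core definition (Definition~\ref{def:super-core}) it contradicts --- and the super-core definition is designed to forbid \emph{all} of \nppoly\ and \coNP/\poly\ predictors precisely so that \emph{either} sign of the reconstructed predictor's bias gives a contradiction.

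The main obstacle I anticipate is exactly the quantifier-direction bookkeeping: in ordinary Goldreich--Levin one freely takes absolute values and flips predictor outputs, but here a predictor that errs with probability $>1/2$ is \emph{not} the same resource as one that succeeds with probability $>1/2$ once we are in the nondeterministic world (an \nppoly\ predictor's negation is \coNP/\poly, not \nppoly). So I would be careful to state the super-core definition as a two-sided guarantee (secure against \nppoly\ \emph{and} \coNP/\poly\ predictors), and to check that the reconstruction in each direction lands the distinguishing gap on the side demanded by \eqref{eq:ov:tag-1} --- using the coin-flipping randomization to move bias to the correct side, and absorbing the resulting polynomial losses into the still-exponential hardness bound. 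A secondary, more routine obstacle is handling the stretch-length discrepancy when $m(n)>n$, which I would dispatch by the remark that exponential $H_{\textup{nh}}$ with \emph{any} stretch $\ge 1$ already meets the definition of super-bit(s), and that truncation to stretch exactly $1$ costs only a polynomial factor in $H_{\textup{nh}}$ by a one-step hybrid argument.
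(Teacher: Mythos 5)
There is a genuine gap, and it sits exactly at the point you flagged as ``quantifier-direction bookkeeping'': your proposal treats a super-core as a predicate that no $\NP/\poly$ or $\coNP/\poly$ predictor can guess from $f(x)$ with advantage $1/\poly$, but that is not the paper's \Cref{def:super-core}, and with that naive notion your reductions do not go through. In the super-bit~$\Rightarrow$~super-core direction, the distinguisher ``accept iff $P(u)\neq v$'' is not a nondeterministic circuit when $P$ is nondeterministic: for $v=1$ it must certify $P(u)=0$, which is a co-nondeterministic event, so $D$ lives in neither $\NP/\poly$ nor $\coNP/\poly$. Your proposed repairs also fail: negating a nondeterministic circuit gives a co-nondeterministic one, and XOR-ing the output with a fair coin drives the acceptance probability to $1/2$ on \emph{both} distributions, destroying the gap rather than moving it to the side required by \eqref{eq:ov:tag-1}. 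In the super-core~$\Rightarrow$~super-bit direction you correctly notice that a next-bit predictor extracted from $D$ sees a uniform prefix $w$ rather than $f(x)$, but the ``transfer/averaging'' step you invoke does not exist in general (and is not what Goldreich--Levin does): nothing relates $D$'s behaviour on uniform $w$ to its behaviour on the possibly sparse set $f(\bits^n)$. The paper's \Cref{def:super-core} is engineered precisely to make both issues disappear: a predictor pair $(\McA_1,\McA_2)$ is only ``credited'' through the one-sided events $\McA_1(f(x))=b(x)=0$ and $\McA_2(f(x))=b(x)=1$ \emph{plus} the uniform-input terms $\tfrac12\Pr_y[\McA_1(y)=1]$ and $\tfrac12\Pr_y[\McA_2(y)=0]$. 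With that bookkeeping, \Cref{super-core to bit} is a two-line identity (the four quantities sum to $1+\Pr[D(U_{n+1})=1]-\Pr[D(g(U_n))=1]$, so a distinguisher for $g(x)=f(x)b(x)$ yields a violating $\McA_1(Y)=D(Y0)$ or $\McA_2(Y)=\overline{D(Y1)}$, each genuinely in its class), and \Cref{super-bit to core} builds from a violating $\McA_1$ the purely nondeterministic test ``accept iff $Y[n+1]=0$ and $\McA_1$ accepts $Y[1\ldots n]$'', whose advantage lands on the correct side once one controls the bias of $b$ via \Cref{hard-bit to core} (a step your sketch also omits when you assert acceptance probability ``essentially $1/2$'' on uniform inputs).

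Your handling of the non-shrinking versus length-preserving discrepancy is fine (truncating a generator with stretch $\geq 1$ preserves the gap in the required direction with no loss, and the paper instead shortens $f$ on the super-core side and stretches super-bits via \cite{superbit} to pass from \Cref{equiv3} to \Cref{equiv4}), but that is the routine part; the heart of the theorem is the calculation tied to the specific form of \Cref{def:super-core}, which your proposal replaces with operations (exact computation of a nondeterministic predictor's output, sign-flipping, range-to-uniform transfer) that are unavailable in the nondeterministic setting.
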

This result is analogous to the  known equivalence between the existence of a hard-core of some one-way function and the existence of a strong PRG.
We also show that  a certain class of functions, which may have hard-cores, cannot possess any super-core. This also suggests that a one-way function could possibly possess no super-cores.
(See the text before \Cref{1-1} for the definition of ``predominantly one-to-one''.)

\begin{theorem*}[Informal; \autoref{1-1}]
If a  non-shrinking  function $f$ is ``predominantly'' one-to-one, then $f$ does not have a super-core.
\end{theorem*}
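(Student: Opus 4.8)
The plan is to prove something slightly stronger than the statement: if the non-shrinking $f$ is predominantly one-to-one, then \emph{every} candidate predicate $b:\bits^n\to\bits$ computable in $\ppoly$ admits an $\NP/\poly$ predictor that recovers $b(x)$ from $f(x)$ with probability $1-o(1)$ over a uniform $x$ — far above the $1/2+1/\poly(n)$ advantage that \Cref{def:super-core} tolerates. Since $b$ ranges over all possibilities, $f$ can have no super-core. The guiding intuition is that predominant injectivity makes $f(x)$ an almost-always \emph{uniquely decodable} encoding of $x$: a nondeterministic machine can guess and verify the preimage, and it thereby effectively ``holds'' $x$ and can evaluate any $\ppoly$ function of it, in particular $b$.

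Concretely, I would fix $b$ and define the language $L_1:=\{\,y : \exists\, x'\ \text{with}\ f(x')=y\ \text{and}\ b(x')=1\,\}$. Since $f,b\in\ppoly$, we get $L_1\in\NP/\poly$: on input $y$, guess an $x'$ of the appropriate length and use the circuits for $f$ and $b$ to check $f(x')=y$ and $b(x')=1$. Let $P$ be the predictor that outputs $1$ iff this nondeterministic test accepts, i.e.\ $P$ computes $\chi_{L_1}$. The key point is that whenever $x$ satisfies $f^{-1}(f(x))=\{x\}$, the only $x'$ with $f(x')=f(x)$ is $x$ itself, so $P(f(x))=b(x)$; and by predominant one-to-oneness the set of such $x$ has density $1-o(1)$ in $\bits^n$. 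Hence $\bP_x[P(f(x))=b(x)]\ge 1-o(1)$, which contradicts $b$ being a super-core. (If one prefers a breaker that is moreover function-computing — i.e.\ lies in $\NP/\poly\cap\coNP/\poly$, via the function-computing characterisation of that class — the same machine serves on all unique-preimage inputs, with the dual $\coNP/\poly$ half coming from $\overline{L_1}$.)

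The step I expect to be the main obstacle is reconciling this construction with the precise form of \Cref{def:super-core} — specifically, the handling of the $o(1)$-fraction of inputs $x$ with $|f^{-1}(f(x))|>1$. On those inputs the machine above can be \emph{inconsistent} across branches (one accepting branch sees a preimage with $b=1$, another with $b=0$), so it is not literally function-computing there; and if the super-core definition, in the spirit of the demi-bit condition $\pr{D(g(U_n))=1}=0$, forbids any ``false certification,'' these inputs cannot simply be swept into an error term. I would resolve this according to the exact definition: if it only asks for a $1/2+1/\poly(n)$ correlation bound (matching the hard-core definition it generalises), an $o(1)$ error is immaterial and paragraph two already closes the argument; if it is more demanding, I would take ``predominantly one-to-one'' to additionally require the multi-preimage set to be thin enough, or structured enough, to be pinned down by $\poly(n)$ bits of advice, letting $P$ output a ``don't-know'' symbol there and stay honest while still covering a $1-o(1)$ fraction of inputs. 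The remaining items — verifying $L_1\in\NP/\poly$ and the probability estimate — are routine.
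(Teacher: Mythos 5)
There is a genuine gap, and it sits exactly where you flagged uncertainty — but both ways you propose to resolve it miss how \Cref{def:super-core} actually works. A super-core is \emph{not} defined by a plain correlation bound $\bP_x[\McA(f(x))=b(x)]\le \frac{1}{2}+\frac{1}{\poly(n)}$ against $\NP/\poly$ predictors: as the paper stresses, such a bound is unachievable for \emph{any} pair $(f,b)$, because a nondeterministic machine can always guess a preimage and evaluate $b$. Precisely to neutralise that trivial attack, the definition instead bounds the combined quantities $\circled{1}(\McA_1)+\circled{3}(\McA_1)$ and $\circled{2}(\McA_2)+\circled{4}(\McA_2)$, which mix the predictor's joint success on $f(U_n)$ (together with the value of $b$) with its acceptance/rejection rate on a \emph{uniformly random} $y\in\bi{m(n)}$. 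So your paragraph-two conclusion — ``$P$ predicts $b(x)$ from $f(x)$ with probability $1-o(1)$, contradiction'' — does not contradict the definition at all; and your fallback, which treats the multi-preimage inputs as the obstacle (advice, a don't-know output, fear of ``false certification'' in the demi-bit sense), is aimed at the wrong target: the few non-injective points are harmless, and the definition contains no zero-error requirement.

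Ironically, your machine is essentially the paper's: its $\McA_1$ accepts $y$ iff some $x'$ with $f(x')=y$ and $b(x')=1$ exists (your $\chi_{L_1}$), paired with a co-nondeterministic $\McA_0$ that defaults to $1$ when no preimage exists. What your proposal is missing is the actual contradiction mechanism. The paper lower-bounds $\circled{1}(\McA_1)+\circled{2}(\McA_0)+\circled{3}(\McA_1)+\circled{4}(\McA_0)$ by $\pr{x\in T_1}+\tfrac{1}{2}\pr{y\in f(T_1)}=\tfrac{2^{1+c}+1}{2^{1+c}}\cdot\tfrac{|T_1|}{2^{n}}\ge 1+\tfrac{1}{p(n)}$, and then averages to get one of the two inequalities $(\star)$, $(\diamond)$. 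This is exactly where the two hypotheses you never genuinely use come in: the non-shrinking condition guarantees $f(T_1)$ is dense in the codomain, so the terms $\circled{3},\circled{4}$ evaluated on a truly random $y$ are large, and the quantitative threshold on $|T_1|/2^n$ (at least $\tfrac{2^{1+c}}{2^{1+c}+1}+\tfrac{1}{p(n)}$ in \Cref{1-1}) pushes the sum past $1$. Your density-$(1-o(1))$ reading of ``predominantly'' is numerically strong enough, but without this computation on both the image side and the random-$y$ side the argument does not close.
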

%

%
What we achieve in \autoref{sec:hard-core} provides a  step forward to better understand the nondeterministic hardness of PRGs and develop a sensible definition of one-way functions in the nondeterministic setting.

\subsection{Justification for the existence of demi-bits and PAC-learning}

Here we mention the work of Pich~\cite{Pich} who demonstrated the plausibility of the existence of demi-bits, by showing that if the class of polynomial-size boolean circuits is not PAC-learnable by sub-exponential circuits then demi-bits exist (where the learner is allowed to generate a nondeterministic or co-nondeterministic algorithm approximating the target function). Thus, if the class of polynomial-size boolean circuits is not PAC-learnable by sub-exponential circuits and the existence of demi-bits implies the existence of super-bits, there are no \tildenppoly-natural properties useful against \ppoly{}.
We provide an exposition of these results for self-containedness.  

Recall that  \textbf{PAC-learning} algorithms can be developed from breaking PRGs (e.g., cf. \cite{crypto, learn_Oliveira_Santhanam}).
The model of PAC-learning (an abbreviation of probably approximately correct learning) was introduced and developed by Valiant in \cite{pac1, pac2, pac3}.
In the PAC-learning model, the goal of a learner is to learn an arbitrary target function $f$ drawn from a target set (e.g., the class of decision trees, the class of boolean conjunctions, \ppoly{}, etc.). 
The target function is invisible to the learner.
The learner receives samples (randomly or by querying) from an $f$-oracle and selects a generalization function $f'$, called the hypothesis, from some hypothesis class.
The selected function must have low generalization error (the ``approximately correct" in ``PAC") with high probability (the ``probably" in ``PAC"). %
Furthermore, the learner is expected to be efficient and to output hypotheses that can as well be efficiently evaluated on any given input.

%
%
%
%
%
%
%
%
%
%
%
%
%
\begin{theorem*}[Informal; \autoref{thm:learn} Pich \cite{Pich}]
Demi-bits exist assuming the class of polynomial-size boolean circuits is not PAC-learnable by sub-exponential circuits, where the learner is allowed to generate a nondeterministic or co-nondeterministic circuit approximating the target function.
\end{theorem*}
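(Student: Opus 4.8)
The plan is to argue the contrapositive: assume no demi-bit exists, and produce a sub-exponential-size learner that PAC-learns the class $\mathcal{C}$ of polynomial-size circuits (with respect to the uniform distribution) using nondeterministic or co-nondeterministic hypotheses. The central object is a $\P/\poly$ generator manufactured from the concept class. Fixing an exponent $k$, let $\mathcal{C}_n$ be the circuits of size $\le n^k$ on $n$ inputs, pick a polynomial $m=m(n)$ with $m > 2n^k\log n$, and consider
\[
g_n:\ (C,z_1,\dots,z_m)\ \longmapsto\ (z_1,\dots,z_m,\,C(z_1),\dots,C(z_m)),
\]
where $C$ ranges over (encodings of length at most $n^k\log n$ of) size-$\le n^k$ circuits and each $z_i\in\{0,1\}^n$. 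Then $g_n\in\P/\poly$, since it merely evaluates $C$ on $m$ points, and the choice of $m$ makes the output strictly longer than the input, so $g_n$ is a generator whose image is exactly the set of labelled sample sequences consistent with some $\mathcal{C}_n$-circuit.

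Because no demi-bit exists, $g_n$ is not exponentially demi-hard: were it so, its restriction to the first $n+1$ output bits would be too (any distinguisher for the restriction lifts verbatim to $g_n$), yielding a demi-bit. Hence, for infinitely many $n$, there is a nondeterministic circuit $D_n$ of sub-exponential size with
\[
\Pr[s]{D_n(g_n(s))=1}=0
\qquad\text{and}\qquad
\Pr[y]{D_n(y)=1}\ \ge\ 2^{-n^{\epsilon}}
\]
for every $\epsilon>0$; by \Cref{prop:santh-claim} this is the same as $g_n$ failing to be a hitting-set generator against $\NP/\poly$. Thus $D_n$ is an $\NP/\poly$ test that rejects every circuit-consistent sample sequence yet accepts a noticeable fraction of all sample sequences: a nondeterministic property that is at once useful against $\mathcal{C}_n$ and large in the Razborov--Rudich sense, but living on random labelled examples rather than on full truth tables.

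From $D_n$ I would assemble the learner. Given a target $f\in\mathcal{C}_n$, its example oracle, and a point $x$ at which to predict $f(x)$: draw $m-1$ fresh labelled examples, plant $x$ in the remaining coordinate, and run a hybrid argument over the $m$ label positions of $D_n$, replacing true labels by random bits one position at a time. The acceptance probability climbs from $0$ to at least $2^{-n^\epsilon}$, so some position is breakable with advantage at least $2^{-n^\epsilon}/m$. Turning that coordinate distinguisher into a next-bit predictor is a \emph{nondeterministic} Yao lemma --- the predictor must guess a witness for $D_n$ --- and it is precisely this guess-and-check, together with the one-sidedness of $D_n$ (it never accepts a string in the image), that makes the predictor, and hence the eventual hypothesis, nondeterministic for one label value and co-nondeterministic for the other; the nondeterministic-unpredictability machinery of \Cref{sec:predict} is the natural vehicle here. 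It then remains to amplify the (exponentially small, $2^{-n^\epsilon}/\poly$) predictive advantage into a genuine PAC hypothesis, preferably by running the Nisan--Wigderson-based reconstruction of the ``learning from natural proofs'' paradigm of Carmosino--Impagliazzo--Kabanets--Kolokolova, adapted to nondeterministic properties along the lines of \cite{Pich}, which produces a single hypothesis agreeing with $f$ on a $1-1/\poly(n)$ fraction of inputs. The learner and all hypotheses have sub-exponential size because $D_n$ does and the advantage is at least $2^{-n^{o(1)}}$.

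The hard part will be the interplay of this reconstruction with nondeterminism. A generic natural property may be queried as a black box, whereas $D_n$ can only be accessed by guessing a witness and is moreover one-sided; carrying this asymmetry faithfully through the hybrid/Yao step and --- above all --- through the amplification (a majority over many nondeterministic weak hypotheses, or the Nisan--Wigderson reconstruction invoked with a nondeterministic oracle) while keeping the final hypothesis inside $\NP/\poly\cup\coNP/\poly$ is where the genuine work lies. A secondary nuisance is quantifier bookkeeping: non-existence of demi-bits is only an infinitely-often statement (demi-hardness is demanded at all large lengths for one fixed $\epsilon$), so the learner succeeds only on an infinite set of input lengths, and the parameters must be set so that ``sub-exponential'' is precisely what the construction delivers there.
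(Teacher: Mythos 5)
Your first three paragraphs are, in essence, the paper's own proof. The paper builds the same generator from the concept class, except that it pads the parameters (taking $m=n^d+1$, where $n^d$ bounds the circuit-encoding length) so that the generator stretches by exactly one bit and is literally a candidate demi-bit; this sidesteps your truncation-to-$n{+}1$-bits step, though your observation that a distinguisher for the truncation lifts verbatim is also valid. It then invokes the nonexistence of (i.o.) demi-bits to obtain a one-sided sub-exponential nondeterministic distinguisher $D$, and runs exactly your hybrid-plus-next-bit-predictor step as \Cref{lemma:learn}: plant the challenge point in a random coordinate, fill the later label positions with random bits, and output the planted bit or its complement according to $D$'s verdict; as you say, the value of the planted bit determines whether the resulting hypothesis is nondeterministic or co-nondeterministic, and the quantifier issue you flag at the end is precisely how the paper splits its statements (nonexistence of i.o.\ demi-bits gives learning for all large $n$, nonexistence of demi-bits only for infinitely many $n$).

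Where you diverge is the final amplification, and this is a miscalibration of the target rather than a missing ingredient of the paper's argument. The formal statement (\Cref{thm:learn}) is a \emph{weak}-learning statement: $Circuit[n^c]$ is learnable by $Circuit[2^{n^{o(1)}}]$ with confidence $1/2^{n^{o(1)}}$ up to error $1/2-1/2^{n^{o(1)}}$; the paper's PAC-learning definition is parameterized by $(\epsilon,\delta)$, and boosting is explicitly set aside as a digression. So your first three paragraphs already establish the theorem as the paper states it, and the step you single out as ``where the genuine work lies'' is not performed (or needed) in the paper. Your instinct about that step is sound, though: if one did demand $1-1/\poly(n)$ accuracy, amplifying exponentially small advantage while keeping the hypothesis inside $\NP/\poly\cup\coNP/\poly$ (e.g., a majority over many nondeterministic weak hypotheses, or a CIKK-style reconstruction with a one-sided nondeterministic property) is genuinely unclear, which is presumably why the theorem is stated in the weak-learning regime.
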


This result demonstrates the plausibility of the existence of demi-bits because it is widely believed (e.g., cf. \cite{learn_Rajgopal_Santhanam}) that learning \ppoly{} is hard. And accordingly, it is reasonable to assume that if the learner is allowed to generate a nondeterministic or co-nondeterministic circuit family, the task remains hard. 

In contrast to demi-bits, it is worth mentioning that it is unknown how to relate the existence of super-bits to the hardness of PAC-learnable (namely, the proof of \autoref{thm:learn} does not carry through when we consider super-bits).


\bigskip



\section{Preliminaries and basic concepts} \label{sec2}

\subsection{Notations and conventions} \label{sec:notations}

We follow the following conventions:
\begin{itemize}
\item $\bN$ denotes the set of positive integers (excluding 0). For $n \in \bN$, $[n]$ denotes $\{1, ..., n\}$. $[0]$ is the empty set $\emptyset$.

\item The size of a Boolean circuit C, denoted as $size(C)$ or $|C|$, is the total number of gates (including the input gates).
$Circuit[s]$ denotes the Boolean circuits of size at most s.
If $s: \bN \to \bN$ is a function, $Circuit[s]$ contains all the Boolean circuit families $C_n$ such that $|C_n| \leq s(n)$ for all large $n$'s.

\item All the distributions we consider in this work are, by default, uniform. $U_n$ denotes the uniform distribution over $\bi{n}$ unless stated  otherwise.

\item For functions $f, g: \{0, 1\}^n \to \{0, 1\}$, we say $g$ $\gamma$-approximates $f$ if $\bP_x[f(x) = g(x)] \geq \gamma$.

\item For a string $x$, where its bits are indexed from left to right by $1$ to $|x|$, $x[i]$ denotes its $i$-th bit, and $x[i...j]$ denotes the sub-string indexed from $i$ to $j$ of $x$ (if $i > j$, $x[i...j] = \epsilon$, the empty string). $x[-i]$ denotes its $i$-th last bit.

\item For strings $x, y$, we may use any of the following to denote the concatenation of $x$ and $y$: $xy$, $(x,y)$, $x \cdot y$.

\item We may not verbally distinguish a function with its string representation (this can be a truth table, or a string encoding a circuit representation of this function, etc.) when there is no ambiguity.

\end{itemize}
We may also follow other common conventions used in the complexity community or literature.

\subsection{Computational models}

\begin{definition}[Randomized  circuits; equiv.~probabilistic circuit]
A circuit $C$ is a \textbf{randomized circuit} (equivalently, a \emph{probabilistic circuit}) if, in addition to the standard input bits (similar to the input bits of a non-randomized circuit), it contains zero or more \emph{random input bits} (i.e., bits taken from a random distribution).
We call $\{C_n\}_{n=1}^\infty$ a \textbf{randomized circuit family} if for every $n$, $C_n$ is a randomized circuit with $n$ standard input bits.
\end{definition}

Note that if a randomized circuit family $\{C_n\}$ is in $Circuit[s(n)]$, it  means that for every sufficiently large $n$, the randomized circuit $C_n$ has size at most $s(n)$, which automatically constrains the number of  random input bits that $C_n$ is allowed to have.

\begin{definition}[Nondeterministic and co-nondeterministic circuits]\label{def:nondeterministic-circuit}
A circuit $C(\overline x,\overline r)$ is a \mbox{\textbf{(co-)nondeterministic circuit}} if, in addition to the standard input bits $\overline x$, it contains zero or more \emph{nondeterministic input bits} $\overline r$ (namely, bits that are meant to control the nondeterministic decisions made by the circuit). A nondeterministic circuit with a single output bit is said to \textbf{accept} an input $\overline \alpha\in\bits^n$ to $\overline x$ iff there \emph{exists} an  assignment $\overline \beta \in \bits^{|\overline r|}$ to $\overline r$ such that $C(\overline \alpha,\overline \beta)=1$ (and otherwise it is said to reject  $\overline x$). A co-nondeterministic circuit with a single output bit is said to \textbf{reject} an input $\overline \alpha\in\bits^n$ to $\overline x$ iff there exists an assignment $\overline \beta \in \bits^{|\overline r|}$ to $\overline r$ such that $C(\overline \alpha,\overline \beta)=0$ (and otherwise it is said to accept $\overline x$). We call $\{C_n\}$ a (co-)nondeterministic circuit family if for every $n$, $C_n$ is a (co-)nondeterministic circuit with $n$ standard input bits.

\end{definition}

\begin{definition}[Oracle circuits]
$C$ is an \textbf{oracle circuit} if it is allowed to use  oracle gates. We write $C$ as $C^{f_1,..., f_k}$ if $C$ has oracle gates computing Boolean functions $f_1,..., f_k$.
\end{definition}

We note that an oracle gate computing a Boolean function $f: \bi{n} \to \bi{}$ has fan-in $n$, and in our model, an oracle gate is allowed to appear in any place in  the circuit.
\subsection{Natural proofs}
Let $F_n$ be the set of all functions $f: \bi{n} \to \bi{}$ and $\Gamma \text{ and } \Lambda$ be complexity classes. We call $C = (C_n)_{n \in \bN}$ a \textbf{combinatorial property} of boolean functions if each $C_n \subseteq F_n$.
\begin{definition}[Natural properties \cite{natural}] \label{def: natural}
We say a combinatorial property $C = (C_n)_{n \in \bN}$ is $\boldsymbol{\Gamma}$\textbf{-natural} if some $C' = (C_n')_{n \in \bN}$ with $C_n' \subseteq C_n$ for each $n$ satisfies:
\begin{itemize}
    \item \textbf{Constructivity.}
    Whether $f \in C_n'$ is computable in $\Gamma$ when $f$ is encoded by its truth table as input.
    \item \textbf{Largeness.}
    $|C_n'| \geq 2^{-O(n)} \cdot |F_n|$ for all large $n$'s.
\end{itemize}
We say $C$ is \textbf{useful} against $\boldsymbol{\Lambda}$ if it satisfies:
\begin{itemize}
    \item \textbf{Usefulness.} For any function family $f = (f_n)_{n \in \bN}$, if $f_n \in C_n$ infinitely often, then $f \notin \Lambda$.
\end{itemize}
\end{definition}
A circuit lower bound proof (that some function family is not in $\Lambda$) is called a $\Gamma$-\textbf{natural proof} against $\Lambda$ if it uses, explicitly or implicitly, some $\Gamma$-natural combinatorial property useful against $\Lambda$. Especially, a \ppoly-natural proof against \ppoly{} is a proof that uses a \ppoly-natural combinatorial properties useful against \ppoly{}.

We note that the notion of natural proofs, unlike natural combinatorial property, is not defined in a mathematically rigorous sense. Nevertheless, the use of the terminology ``natural proof" in a statement more intuitively embodies our intention and also does not affect the rigorousness of the statement: whenever we say $\Gamma$-natural proofs against $\Lambda$ do or do not exist, what we mean, in a mathematically rigorous sense, is $\Gamma$-natural combinatorial properties against $\Lambda$ do or do not exist.

\subsection{Pseudorandom generators}

We recall here the basic definition of pseudorandom generators. 
As mentioned in the introduction, all the distributions we consider in this work are, by default, uniform, and $U_n$ denotes the
uniform distribution over $\{0, 1\}^n$ unless stated otherwise.

\begin{definition}[Generators]
A function family $g_n: \{0, 1\}^n \to \{0, 1\}^{l(n)}$ is a \textbf{generator} if $g_n \in \P/\poly$ and $l(n) > n$ for every $n$. We call such an $l$ a \emph{stretching function} and call $l(n) - n$ the \emph{stretching length} of $g_n$ (sometimes  $l(n)$ is called \emph{the stretching length}).
\end{definition}
We note that all the generators in this work will be computable  in \ppoly{}, although
in more general settings, it is not required that generators are \ppoly-computable (cf. \cite{hardness_vs_randomness}). 

\begin{definition}[Standard hardness]\label{def:standard-hardness}
Let $g_n: \{0, 1\}^n \to \{0, 1\}^{l(n)}$ be a generator. Then the \textbf{hardness} $H(g_n)$ of $g_n$ is the minimal $s$ for which there exists a (deterministic) circuit $D$ of size at most $s$ such that
\begin{displaymath}
\bigg|
\underset{y \in \{0, 1\}^{l(n)}}{\mathbb{P}} [D(y) = 1] - \underset{x \in \{0, 1\}^n}{\mathbb{P}} [D(g_n(x)) = 1]
\bigg|
\geq 
1/s(n).
\end{displaymath}
\end{definition}

The order of the two terms in the absolute value and the absolute value itself are immaterial since in the deterministic setting, we can always flip the output bit of a distinguisher $D$.

\begin{definition}[(Strong) pseudorandom generators (PRG)] \label{def:strong PRG}
A generator $g : \bi{n} \to \bi{l(n)}$ is called a (strong) \textbf{PRG} if for every $D$ in \ppoly, every polynomial $p$, and all sufficiently large $n$'s,
\begin{displaymath}
\bigg|
\underset{y \in \{0, 1\}^{l(n)}}{\bP} [D(y) = 1] 
- 
\underset{x \in \{0, 1\}^n}{\bP} [D(g(x)) = 1]
\bigg| 
<
1/p(n).
\end{displaymath}
\end{definition}

In other words, a strong PRG is defined to be a generator safe against all polynomial-size distinguishers. An alternative definition used in some texts is: a generator with hardness at least $2^{n^\epsilon}$ for some $\epsilon > 0$ and all large $n$'s, which defines a stronger PRG.

We shall say that a function $f(n) : \bN \to \bR^+$ is \emph{(at least) exponential }if there exists some $\epsilon > 0$ such that $f(n) \geq 2^{n^\epsilon}$ for all sufficiently  large $n$'s. A function is \emph{not} (at least) exponential if for every $\epsilon > 0$, there exist infinitely many $n$'s such that $f(n) < 2^{n^\epsilon}$; this latter condition  is equivalent to: there is an infinite monotone sequence $(n_i) \subseteq \bN$ such that $f(n_i)$ is sub-exponential in $n_i$ (i.e., $f(n_i) = 2^{n_i^{o(1)}}$).

The existence of a strong PRG is considered quite plausible because many intractable problems (e.g., factoring) seem to provide a basis for constructing  such generators (cf.~\cite{Gol-crypto-bookI}).
\begin{conjecture} Strong PRGs exist.
\end{conjecture}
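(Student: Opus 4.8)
Since this is the central open conjecture of complexity-based cryptography rather than a claim with a known proof, the best I can offer is a plan for the routes a proof would take together with an account of where each stalls. The cleanest plan is to reduce the statement to a more primitive object: by the H\r{a}stad--Impagliazzo--Levin--Luby theorem \cite{PRG_from_OWF} (noted above), strong PRGs exist if and only if one-way functions exist, so it would suffice to exhibit a single $f \in \P/\poly$ together with a proof that no polynomial-size circuit family inverts $f$ on more than a negligible fraction of $f(U_n)$. Concretely, I would fix a standard candidate --- integer multiplication $f(p,q)=pq$ on pairs of primes (hardness of factoring \cite{factoring}), modular exponentiation (discrete logarithm \cite{discrete-logarithm}), or the subset-sum map \cite{subset_sum} --- and try to establish its average-case hardness from scratch, then apply the construction of \cite{PRG_from_OWF} and verify \Cref{def:strong PRG}.

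The steps, in order, would be: (i) fix the candidate $f$ and the input distribution; (ii) prove worst-case hardness, i.e.\ that inverting $f$ on its whole range is not in $\P/\poly$; (iii) upgrade worst-case hardness to average-case hardness via a worst-case-to-average-case reduction, so that negligible-fraction inverters are ruled out; and (iv) feed $f$ through the HILL construction to obtain the generator. Alternatively, I would bypass one-way functions entirely: by the Liu--Pass characterisation \cite{Liu2020} it suffices to prove mild average-case hardness of time-bounded Kolmogorov complexity, or --- in the spirit of the applications developed later in this paper --- one could try to establish zero-error average-case hardness of $\kpolys{n-o(n)}$ against $\NP/\poly$ (cf.\ \Cref{def:kpoly}) and then transfer it to a hitting-set/demi-bit and onward.

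The hard part, and indeed the insurmountable part with present techniques, is steps (ii)--(iii): every route above requires at a minimum separating a problem in $\NP$ (or even $\NP \cap \coNP$, in the case of factoring) from $\P/\poly$, which already entails $\P \neq \NP$, and since it is \emph{average-case} hardness that is needed, strictly more than $\P \neq \NP$ is required. No current method comes close, and --- fittingly for a paper extending the Natural Proofs programme --- the existence of strong PRGs would itself rule out a broad, natural class of circuit lower-bound arguments \cite{natural}, so any correct proof must be of a kind we do not yet know how to produce. The honest conclusion is that this statement should be retained as a hypothesis; the productive direction, taken up in the remainder of the paper, is not to prove it but to isolate the weakest pseudorandomness assumptions (demi-bits, super-bits, and their stretchability) from which the desired barrier and average-case consequences still follow.
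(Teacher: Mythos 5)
You have correctly recognised that this statement is stated in the paper only as a conjecture, with no proof offered --- the paper merely remarks that its plausibility rests on candidate constructions from presumed-intractable problems such as factoring --- and your conclusion that it must be retained as a hypothesis (since any proof would entail $\P \neq \NP$ and average-case hardness beyond it) matches the paper's treatment exactly. There is nothing further to compare, since neither you nor the paper supplies, or could supply, an actual proof.
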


Razborov and Rudich showed that the existence of a strong PRG rules out the existence of \ppoly-natural proofs useful against \ppoly{} \cite{natural}.

\medskip

Concrete examples of strong PRGs are unknown, as the existence of such a PRG implies $\P \neq \NP$ in the uniform setting and $\ppoly \neq \NP/\smallpoly$ in the nonuniform setting. Nevertheless, generators that can fool classes of weaker distinguishers were constructed (e.g., Nisan and Wigderson \cite{hardness_vs_randomness}).\footnote{For weak models, both complexity-theoretic generators and cryptographic generators are known. Complexity-theoretic generators fooling $AC^0$ was shown by  Nisan (which is the Nisan-Wigderson generator with PARITY as the hard function, and is earlier than \cite{hardness_vs_randomness}). Cryptographic generators are constructed for example in \cite{cryptoeprint:2016/580}. For \P/\poly, both complexity-theoretic generators and cryptographic generators are unknown. However, the assumptions needed for complexity-theoretic generators (e.g, \textsf{E} requires exponential-size) are much weaker than those needed for cryptographic generators (e.g., that one way functions exists).}


\subsection{Super-bits and demi-bits} \label{sec:sup-demi-pre}

Here we  provide a brief  review of the main results and open problems in \cite{superbit} that are relevant to our work.
(Some of the text is repeated from the introduction.) 
\begin{definition}[Nondeterministic hardness]
Let $g_n: \{0, 1\}^n \to \{0, 1\}^{l(n)}$ be a generator. Then the \textbf{nondeterministic hardness} $H_{\textup{nh}}(g_n)$  (also called \textbf{super-hardness}) of  $g_n$ is the minimal $s$ for which there exists a nondeterministic circuit $D$ of size at most $s$ such that
\begin{equation*}
    \underset{y \in \{0, 1\}^{l(n)}}{\mathbb{P}} [D(y) = 1] - \underset{x \in \{0, 1\}^n}{\mathbb{P}} [D(g_n(x)) = 1] \geq \frac{1}{s}.
    \tag{1}
\end{equation*}

\end{definition}
In contrast to the definition of deterministic hardness, the order of the two possibilities on the left-hand side is crucial. This order forces a nondeterministic distinguisher to certify the randomness of a given input. Reversing the order or keeping the absolute value trivialize the task of breaking $g$: a distinguisher $D$ can simply guess a seed $x$ and check if $g(x)$ equals the given input. For such a $D$, we have
$\pr{D(g(x))=1} = 1$ and $\pr{D(y) = 1} \leq 1/2$.

We call exponentially super-hard generators super-bits:
\begin{definition}[Super-bits]
A generator (in \P/\poly)
$g_n: \{0, 1\}^n \to \{0, 1\}^{n + c}$ for some $c: \bN \to \bN$ is called $c$ \textbf{super-bit(s)} (or a $c$-super-bit(s)) if $H_{\textup{nh}}(g_n) \geq 2^{n^\epsilon}$ for some $\epsilon > 0$ and all sufficiently large $n$'s. In particular, if $c = 1$, we call $g_n$ a super-bit.
\end{definition}

The term \textit{super-bits} thus stands for pseudorandom bits that can fool ``super" powerful adversaries. Rudich constructed a candidate super-bit based on the subset sum problem and conjectured that:
\begin{conjecture}[Super-bit conjecture]
There exists a super-bit.
\end{conjecture}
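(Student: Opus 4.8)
The plan has to differ from those for the earlier statements, because the Super-bit conjecture is not a theorem one proves but a hardness assumption one posits: an unconditional proof would in particular give $\ppoly \neq \NP/\poly$, since any super-bit has deterministic hardness at least its nondeterministic hardness (a deterministic distinguisher, negated if necessary, is a zero-nondeterminism nondeterministic circuit), hence is a strong PRG in the exponential-hardness sense, and the existence of such a PRG is noted above to separate $\ppoly$ from $\NP/\poly$. So the most one can offer is a \emph{candidate} construction together with a reduction of its nondeterministic hardness (\Cref{def:nondeterministic-hardness}) to a clean, explicit hardness assumption for a problem against \emph{nondeterministic} circuits --- which is exactly what Rudich does with the subset-sum problem.

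Concretely, the plan is: fix a modulus $2^m$ with $m$ chosen relative to $n$ so the map is length-increasing, and on seed $\big((a_1,\dots,a_n),S\big)$ with $a_i\in\bZ_{2^m}$ and $S\subseteq[n]$ output $\big(a_1,\dots,a_n,\ \sum_{i\in S}a_i \bmod 2^m\big)$; this is a $\P/\poly$ generator $g$ in the style of Impagliazzo--Naor. To show $g$ is a super-bit one must prove $H_{\mathrm{nh}}(g)\ge 2^{n^\epsilon}$, i.e.\ rule out a size-$2^{n^\epsilon}$ nondeterministic circuit $D$ with $\bP_{y}[D(y)=1]-\bP_{\mathrm{seed}}[D(g(\mathrm{seed}))=1]\ge 2^{-n^\epsilon}$. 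The reduction exploits that such a $D$ must \emph{certify randomness}: it accepts noticeably more often on a uniform tuple than on a valid subset-sum instance, so from $D$ one extracts a short-witness procedure that, on a random instance of the appropriate density version of subset sum, detects non-membership with non-negligible advantage, contradicting the hypothesised nondeterministic hardness of that density version.

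The hard part --- and the reason this remains Rudich's open conjecture --- is precisely the asymmetry built into \eqref{eq:ov:tag-1}: against a nondeterministic adversary the naive attack ``guess $S$, verify the sum'' is blocked only because the distinguisher must certify the truly uniform strings rather than recognise the pseudorandom ones, so any proof must use that one-sidedness essentially and cannot be a black-box lift of a deterministic security argument. Equivalently, one would have to rule out that a short nondeterministic witness exposes structure in a random subset-sum tuple invisible to every deterministic circuit; since $\ppoly\neq\NP/\poly$ is itself open, this is out of reach unconditionally. Hence the honest status of the statement is: a conjecture, supported by the candidate above, whose ``proof'' is conditional on a nondeterministic-hardness assumption for subset sum no weaker than the conclusion.
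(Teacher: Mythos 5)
You are right that this statement is a conjecture rather than a theorem: the paper offers no proof, only noting (following Rudich) a candidate super-bit based on the subset-sum problem, and your observation that an unconditional proof would separate $\P/\poly$ from $\NP/\poly$ correctly explains why it is stated as an assumption. Your account matches the paper's treatment, so there is nothing to reconcile.
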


The main  theorem in \cite{superbit} is the following one, which is proved based on the stretchability of super-bits as  discussed above.
\begin{theorem}[\cite{superbit}]\label{thm:Rudich}
If super-bits exist, then there are no $N\tilde{P}/qpoly$-natural properties useful against \ppoly{}, where $N\tilde{P}/qpoly$ is the class of languages recognised by non-uniform, quasi-polynomial-size circuit families (where quasi-polynomial means $n^{log^{O(1)}(n)}$).
\end{theorem}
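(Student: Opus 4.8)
The plan is to recover Rudich's argument \cite{superbit}: starting from the hypothesised super-bit, build a pseudorandom function generator whose truth tables remain random-looking even to a quasi-polynomial-size \emph{nondeterministic} adversary, and then note that an $N\tilde{P}/qpoly$-natural property useful against \ppoly{} is exactly such an adversary.

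First I would fix a super-bit $b:\{0,1\}^n\to\{0,1\}^{n+1}$, so $H_{\textup{nh}}(b)\ge 2^{n^\epsilon}$ for a constant $\epsilon>0$, and run the standard iterated stretching construction, whose correctness rests on the hybrid argument \cite{hybrid} and which---in contrast to the demi-bit case---is known to preserve super-hardness \cite{superbit}, obtaining for any polynomial stretch a super-bit generator $\{0,1\}^n\to\{0,1\}^{n+\poly(n)}$. Feeding this into the Goldreich--Goldwasser--Micali construction \cite{random_function} yields, for every arity $\ell$, keyed functions $f_s:\{0,1\}^\ell\to\{0,1\}$ with key $s\in\{0,1\}^n$, each computable by circuits of size $\poly(\ell,n)$, and hence the $N$-bit-output map $G:\{0,1\}^n\to\{0,1\}^{N}$, $N=2^\ell$, sending $s$ to the truth table of $f_s$, computable by circuits of size $\poly(N)$. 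The reduction I need is: any nondeterministic circuit $D$ of size $S$ with $\bP_{y\leftarrow U_N}[D(y)=1]-\bP_{s\leftarrow U_n}[D(G(s))=1]\ge 1/S$ can be transformed, through the GGM hybrid argument and then the stretching hybrid argument, into a nondeterministic circuit of size $\poly(S,N)$ witnessing $H_{\textup{nh}}(b)\le\poly(S,N)$; crucially both hybrid arguments must preserve the \emph{order} of the two probabilities in \Cref{def:nondeterministic-hardness}, so that the sub-distinguishers reconstructed along the hybrids still certify randomness one-sidedly and never collapse to "guess the seed''.

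Next, assume for contradiction that some $N\tilde{P}/qpoly$-natural property $C$ is useful against \ppoly{}, witnessed by $C'\subseteq C$. Constructivity supplies a nondeterministic circuit $D$ of size $N^{\log^{O(1)}N}=:S$ acting on the $N$-bit truth table of an arity-$\ell$ function, which accepts $f$ iff it guesses a witness of "$f\in C'_\ell$''. Largeness gives $\bP_{y\leftarrow U_N}[D(y)=1]=|C'_\ell|/|F_\ell|\ge 2^{-O(\ell)}=1/\poly(N)$. Usefulness gives $\bP_{s\leftarrow U_n}[D(G(s))=1]=0$ for all large $\ell$: every fixed key-sequence $(f_{s_\ell})_\ell$ is, with the keys hard-wired as constants, a \ppoly{} family, hence $f_{s_\ell}\notin C_\ell\supseteq C'_\ell$ for large $\ell$, and a diagonalisation over keys upgrades this to "$f_s\notin C'_\ell$ for \emph{every} $s$ once $\ell$ is large''. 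Therefore $\bP_{y\leftarrow U_N}[D(y)=1]-\bP_{s\leftarrow U_n}[D(G(s))=1]\ge 1/\poly(N)\ge 1/S$, so $D$ is exactly the distinguisher the previous paragraph consumes.

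Finally I would set, for each key length $n$, the arity to $\ell=\lfloor n^{\delta}\rfloor$ for a small constant $\delta>0$. Then $f_s$ has circuit size $\poly(n)=\poly(\ell)$, so usefulness against \ppoly{} applies at arity $\ell$, while $S=N^{\log^{O(1)}N}=2^{\ell^{O(1)}}=2^{n^{O(\delta)}}$, whence $\poly(S,N)\le 2^{n^{O(\delta)}}$; choosing $\delta$ small enough that $O(\delta)<\epsilon$, the reduction produces a nondeterministic circuit of size $2^{o(n^{\epsilon})}<2^{n^{\epsilon}}$ witnessing $H_{\textup{nh}}(b)<2^{n^{\epsilon}}$ at infinitely many $n$, contradicting the super-bit hypothesis. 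The hard part is neither this parameter bookkeeping nor the Razborov--Rudich-style use of the property in the third paragraph (essentially \cite{natural}), but the claim flagged in the second paragraph: that both the stretching and the GGM hybrid arguments preserve the one-sided, order-sensitive form of nondeterministic hardness, so a nondeterministic adversary gains nothing from the reduction beyond what it already had. This is what makes the proof work for super-bits; the present paper's main theorem shows that already the first of these ingredients---stretching---is delicate for demi-bits, which is precisely why \Cref{thm:Rudich} is not known to hold under the weaker demi-bit hypothesis.
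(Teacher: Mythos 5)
Your proposal is correct and follows exactly the route the paper attributes to Rudich: stretch the super-bit to polynomially many super-bits, build a GGM-style pseudorandom function generator whose security against nondeterministic adversaries is preserved because the order-sensitive hybrid gaps survive averaging/hard-wiring, and then use constructivity, largeness and usefulness of a hypothetical $N\tilde{P}/qpoly$-natural property to get a quasi-polynomial-size nondeterministic distinguisher, with the arity set to $\ell=n^{\delta}$ so the resulting adversary has size $2^{n^{O(\delta)}}<2^{n^{\epsilon}}$. The paper itself gives no proof of \Cref{thm:Rudich} (it is cited from \cite{superbit}, with only the remark that it rests on stretching super-bits into nondeterministically secure function generators), and your reconstruction matches that cited argument, correctly isolating the preservation of one-sided hardness through the stretching and GGM hybrids as the step carried by \cite{superbit}.
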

We remark that, in this theorem, the ``largeness" requirement of $N\tilde{P}/qpoly$-natural properties can in fact be relaxed to $|C_n'| \geq 2^{-n^{O(1)}} \cdot |F_n|$ (cf. \Cref{def: natural}).

Rudich also proposed another notion, called demi-hardness,  which he considered to be more intuitive than super-hardness:
\begin{definition}[Demi-hardness]
Let $g_n: \{0, 1\}^n \to \{0, 1\}^{l(n)}$ be a generator (in \ppoly). Then the \textbf{demi-hardness} $H_{\textup{dh}}(g_n)$ of  $g_n$ is the minimal $s$ for which there exists a nondeterministic circuit $D$ of size at most $s$ such that
\begin{equation*}
    \underset{y \in \{0, 1\}^{l(n)}}{\mathbb{P}} [D(y) = 1] \geq \frac{1}{s}
    \text{\quad and \ }
    \underset{x \in \{0, 1\}^n}{\mathbb{P}} [D(g_n(x)) = 1] = 0.
    \tag{2}
\end{equation*}
\end{definition}
We note that (2), which requires a distinguisher to make no mistakes on any generated strings, is a stronger requirement than (1). Thus,
$H_{\textup{nh}}(g) 
\leq
H_{\textup{dh}}(g)$ for every generator $g$.

We call exponentially demi-hard generators demi-bits, where ``demi" is meant to stand for ``half'' here:
\begin{definition}[Demi-bits]
A generator (in \ppoly) $g_n: \{0, 1\}^n \to \{0, 1\}^{n + c}$ for some $c: \bN \to \bN$ is called $c$ \textbf{demi-bit(s)} (or a $c$-demi-bit(s)) if $H_{\textup{dh}}(g_n) \geq 2^{n^\epsilon}$ for some $\epsilon > 0$ and all sufficiently large $n$'s. In particular, if $c = 1$, we call $g_n$ a demi-bit.
\end{definition}

As $H_{\textup{nh}}(g) \leq H_{\textup{dh}}(g)$, it is natural to conjecture:
\begin{conjecture}[Demi-bit conjecture \cite{superbit}]
There exists a demi-bit.
\end{conjecture}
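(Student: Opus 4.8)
The statement is Rudich's Demi-bit conjecture — an open problem, not a theorem — so there is no unconditional proof, and what follows is a proposal for a plausible candidate together with conditional routes. The plan begins by recording why a head-on attack is hopeless: if $b:\{0,1\}^n\to\{0,1\}^{n+1}$ is a demi-bit computable in \ppoly, then $\mathrm{Im}(b)=\{\,y:\exists x,\ b(x)=y\,\}$ is a language in $\NP/\poly$ (guess a seed $x$, run the $\P/\poly$ circuit for $b$, check $b(x)=y$), so its complement lies in $\coNP/\poly$ via a \emph{polynomial}-size co-nondeterministic circuit; that complement rejects every string in $\mathrm{Im}(b)$ and accepts at least half of $\{0,1\}^{n+1}$ (since $|\mathrm{Im}(b)|\le 2^n$), so the moment $\NP/\poly=\coNP/\poly$ one would obtain a polynomial-size \emph{nondeterministic} circuit breaking the demi-hardness of $b$. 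Hence even a single demi-bit yields $\coNP/\poly\not\subseteq\NP/\poly$, so proving the conjecture entails a famous separation; any realistic plan must therefore reduce it to a cleaner hardness hypothesis rather than prove it outright.

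\textbf{Candidate route (following Rudich).} First I would take the subset-sum problem as the source of hardness and build a length-increasing generator $b$ from it in the style of Impagliazzo--Naor. Next I would formulate the hypothesis that subset-sum instances are hard for sub-exponential nondeterministic circuits in the ``demi'' sense — no such circuit accepts a dense family of alleged instance/target pairs while \emph{never} accepting a genuine one — and then give a reduction turning a demi-distinguisher for $b$ into such a recogniser, obtaining one demi-bit. Finally I would apply the stretching machinery of this paper (\Cref{thm:alg}) to amplify the stretch to $n^c$ new bits. Making the underlying subset-sum assumption simultaneously clean and convincing is the delicate point of this route.

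\textbf{Reduction routes.} Two unconditional implications already reduce the conjecture to other open assumptions, and either would suffice. First, since $H_{\textup{nh}}(g)\le H_{\textup{dh}}(g)$ for every generator $g$, any super-bit is automatically a demi-bit, so it is enough to prove the Super-bit conjecture. Second, by Pich's theorem (\autoref{thm:learn}) it is enough to show that polynomial-size boolean circuits are not PAC-learnable by sub-exponential-size circuits whose output hypotheses may be nondeterministic or co-nondeterministic — a statement widely believed true, since learning \ppoly\ is expected to be hard even in this relaxed model. So the concrete plan is: pick whichever of these two hypotheses looks most tractable and discharge it.

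\textbf{Main obstacle.} In every route the crux is the same and is exactly the hard part: establishing an \emph{exponential} (or even merely super-polynomial) demi-hardness lower bound for some explicit generator. Because a demi-bit is precisely a hitting-set generator against $\NP/\poly$, this amounts to producing a dense, $\NP/\poly$-recognisable property — the complement of an image — that no small nondeterministic circuit can certify, which as noted above forces $\coNP/\poly\ne\NP/\poly$ and more. Thus the honest conclusion is that the conjecture is not provable with current techniques; the most one can do is (i) isolate the weakest natural assumption (on subset sum, or on nondeterministic PAC-learnability) that implies it, and (ii) accumulate structural evidence for the primitive — such as the stretchability result of this paper, which at least shows that demi-bits are robust under amplification of their stretch.
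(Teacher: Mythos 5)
You are right that this statement is a conjecture, not a theorem: the paper offers no proof of it, only evidence, and the conditional routes you describe --- super-bits automatically being demi-bits, Rudich's subset-sum-based candidate, and Pich's result that hardness of PAC-learning polynomial-size circuits by sub-exponential learners with (co-)nondeterministic hypotheses yields demi-bits --- are exactly the justifications the paper itself gives. Your additional observation that a demi-bit computable in \ppoly{} would force $\coNP/\poly \neq \NP/\poly$ (since the complement of its image is a dense $\coNP/\poly$ set rejecting all generated strings) is correct and rightly explains why no unconditional proof is within reach.
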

Accordingly, it is natural to  ask:
\begin{open problem*}[\cite{superbit}]
Does the existence of a demi-bit imply the existence of a super-bit?
\end{open problem*}

As  discussed above, a super-bit can be stretched to polynomially many super-bits and to pseudorandom function generators secure against nondeterministic adversaries. In contrast, whether a demi-bit can be stretched to even two demi-bits was unknown prior to our work:
\begin{open problem*}[\cite{superbit}; Resolved in \autoref{sec:strectching-demi-bits}]
Given a demi-bit, is it possible to stretch it to  2-demi-bits?
\end{open problem*}

The question that remains open is:
\begin{open problem*}[\cite{superbit}]
Given a demi-bit, is it possible to build a pseudorandom function generator with exponential demi-hardness?
\end{open problem*}

A positive answer to the last problem would answer the following:

\begin{open problem*}[\cite{superbit}]
Does the existence of a demi-bit rule out the existence of $N\tilde{P}/qpoly$-natural properties against \ppoly?
\end{open problem*}

\subsection{Infinitely often super-bits and demi-bits}

Here, we formalize a weaker variant of super-bits and demi-bits, that  only requires infinitely many $n$'s to be ``hard" (recall super-bits/demi-bits require hardness for all sufficiently large $n$'s). This variant occasionally appears implicitly in the literature but may not have been formally defined.

\begin{definition}[Infinitely often super-bits/demi-bits]
A generator (in \ppoly) $g_n: \{0, 1\}^n \to \{0, 1\}^{n + c}$, for some $c: \bN \to \bN$ is called $c$ \textbf{infinitely often} (\textbf{i.o.})~\textbf{super-bit(s)/demi-bit(s)}, if 
$H_{\textup{nh}}(g_n) \geq 2^{n^\epsilon}$/$H_{\textup{dh}}(g_n) \geq 2^{n^\epsilon}$ 
for some $\epsilon > 0$ and infinitely many $n$'s. In particular, if $c = 1$, we call $g_n$ an i.o.~super-bit/demi-bit.
\end{definition}

In fact, it is an easy observation that we can construct from a reasonably frequent i.o.~super-bit/demi-bit, a super-bit/demi-bit, by properly choosing a prefix of a given input $n$ and applying the i.o.~algorithm to the prefix. More details follow. However, we are unaware if we can construct a super-bit/demi-bit from any i.o.~super-bit/demi-bit.

\begin{lemma}\label{i.o.}
Assume $g_n: \{0, 1\}^n \to \{0, 1\}^{n + c}$ for some $c \in \bN$ are $c$ i.o. super-bits/demi-bits. If there exist a polynomial $p$ and an infinite monotone sequence $(n_i)_{i \in \bN} \subseteq \bN$ such that:
(1) $n_{i+1} \leq p(n_i)$, and
(2)
for some $\epsilon > 0$ and every $n \in (n_i)_{i \in \bN}$, $H_{\textup{nh}}(g_n) \geq 2^{n^\epsilon}$/$H_{\textup{dh}}(g_n) \geq 2^{n^\epsilon}$, then there exists a $c$-super-bits/$c$-demi-bits constructed from $g_n$.
\end{lemma}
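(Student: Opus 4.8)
The plan is to build $h$ from $g$ by the standard ``run the i.o.\ primitive on a prefix'' padding; the two cases (super-bits / demi-bits) go through in parallel. For an input length $m$ large enough that $n_1\le m$, let $n=n(m)$ be the largest term of the sequence with $n\le m$, parse $x\in\bits^m$ as $x=x'x''$ with $x'\in\bits^{n}$ and $x''\in\bits^{m-n}$, and set $h_m(x):=g_n(x')\cdot x''$. Since $|g_n(x')|=n+c$ this gives $h_m:\bits^m\to\bits^{m+c}$, so $h$ has stretching length exactly $c$, and $h\in\ppoly$ because the map $m\mapsto n(m)$ together with a circuit for $g_{n(m)}$ (of size $\poly(n(m))\le\poly(m)$) can be supplied as advice. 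The key quantitative point is that maximality of $n$ together with hypothesis~(1) forces $n\le m<n_{i+1}\le p(n)$; fixing a constant $d$ (and some $n_0$) with $p(n)\le n^{d}$ for $n\ge n_0$, we get $m<n^{d}$, i.e.\ $n\ge m^{1/d}$, for all large $m$. Thus $n$ and $m$ are polynomially related, which is exactly what lets exponential hardness in $n$ survive as exponential hardness in $m$.

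Next I would lower-bound $H_{\textup{nh}}(h_m)$ (resp.\ $H_{\textup{dh}}(h_m)$) by a reduction to $g_n$. Suppose a nondeterministic circuit $D$ of size $s$ breaks $h_m$. Splitting a uniform $y\in\bits^{m+c}$ as $y=y'y''$ with $y'\in\bits^{n+c}$, $y''\in\bits^{m-n}$, and splitting off the copied suffix of $h_m(x)$ likewise, the super-advantage of $D$ on $h_m$ equals
\[
\bE_{w\in\bits^{m-n}}\Big[\ \Pr[y'\in\bits^{n+c}]{D(y'w)=1}\ -\ \Pr[x'\in\bits^{n}]{D(g_n(x')w)=1}\ \Big]\ \ge\ \tfrac1s .
\]
By averaging, some fixed suffix $w^{\ast}\in\bits^{m-n}$ makes the bracketed quantity $\ge 1/s$; specializing the last $m-n$ inputs of $D$ to $w^{\ast}$ (which does not increase circuit size) yields a nondeterministic circuit $C(z):=D(z\,w^{\ast})$ of size $\le s$ with super-advantage $\ge 1/s$ on $g_n$, hence $H_{\textup{nh}}(g_n)\le s$. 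In the demi-bit case, the clause $\Pr[x\in\bits^m]{D(h_m(x))=1}=0$ holds for \emph{all} $x$, so $D(g_n(x')w)=0$ for every $x'$ and every suffix $w$; picking $w^{\ast}$ with $\Pr[y'\in\bits^{n+c}]{D(y'w^{\ast})=1}\ge 1/s$ (possible since the expectation over $w$ of this quantity is $\ge 1/s$), the same $C$ satisfies $\Pr[z\in\bits^{n+c}]{C(z)=1}\ge 1/s$ and $\Pr[x'\in\bits^{n}]{C(g_n(x'))=1}=0$, so $H_{\textup{dh}}(g_n)\le s$.

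Combining the two parts: for every large $m$ the associated $n=n(m)$ lies in the sequence, so hypothesis~(2) gives $s\ge H_{\textup{nh}}(g_n)\ge 2^{n^{\epsilon}}\ge 2^{m^{\epsilon/d}}$ (and identically with $H_{\textup{dh}}$ in the demi-bit case). Hence any distinguisher of $h_m$ has size at least $2^{m^{\epsilon/d}}$ for all sufficiently large $m$, i.e.\ $h$ is a $c$-super-bit (resp.\ $c$-demi-bit) with hardness exponent $\epsilon'=\epsilon/d$, and it is manifestly constructed from $g$.

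I do not expect a genuine obstacle here; this is a routine padding argument and the work is all bookkeeping. The points that need care are (i) keeping $m$ polynomially bounded in $n$, which is precisely what hypothesis~(1) and the choice of $n$ as the largest sequence element $\le m$ guarantee, and without which $2^{n^{\epsilon}}$ need not be $2^{m^{\Omega(1)}}$; (ii) respecting the asymmetric, absolute-value-free form of nondeterministic hardness — so the averaging must be applied to the \emph{difference} in the super-bit case — and the zero-error clause in the demi-bit case, as spelled out above; and (iii) the trivial but necessary observation that fixing input bits of a nondeterministic circuit to constants is size-non-increasing, so the reduction is advantage-preserving and essentially size-preserving.
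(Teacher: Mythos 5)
Your proposal is correct and takes essentially the same route as the paper: the identical padding construction (apply $g$ at the largest sequence length $\le m$ and copy the remaining bits), followed by the same averaging-over-the-suffix reduction that fixes a good suffix $w^\ast$ and hardwires it into the distinguisher, with hypothesis (1) guaranteeing $n$ and $m$ are polynomially related. The only differences are organizational — you transfer hardness directly per input length rather than deriving a contradiction from a single distinguisher working at infinitely many $n_i$'s, and you spell out the zero-error clause in the demi-bit case, which the paper dismisses as ``almost identical.''
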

\begin{proof}
We present the proof for constructing super-bits from i.o. super-bits, and the proof for constructing demi-bits from i.o. demi-bits is almost identical.

Assume $g$, $(n_i)$, $p$ are as given in the lemma statement. Denote $m = n+c$ and $m_i = n_i + c$. We construct a new generator $G$ as follows:
\begin{adjustwidth}{1.2cm}{1.2cm}
    Given $x_n \in \bi{n}$, there is an $i$ such that $n_i \leq n < n_{i+1}$. Define $G(x_n) = g(a) \cdot b$, where $a = x_n[1...n_i], b = x_n[n_i+1...n]$. We note $|G(x_n)| = |g(a)| + |b| = n + c$.
\end{adjustwidth}

We want to show that $G$ is indeed super-bits. Suppose, for a contradiction, $G$ is not. Then there exist an infinite monotone sequence $S \subseteq \bN$, a sub-exponential function $s$, and a distinguisher $D$ of size $s$ such that for every $n \in S$,
\begin{align*}
    1/s(n)
    \leq
    \pr{D(U_m)=1} - \pr{D(G(U_n))=1}
    =
    \pr{D(U_{m_i}U_{m-m_i})=1} - \pr{D(g(U_{n_i})U_{m-m_i})=1}
\end{align*}
Thus, for every $n \in S$, there exists a fixed string $w = w(n)$ such that
\[
1/s(n) 
\leq 
\pr{D(U_{m_i}w)=1} - \pr{D(g(U_{n_i})w)=1}.
\]

We now construct a new distinguisher $D'$ for $g$ as follows:
\begin{adjustwidth}{1.2cm}{1.2cm}
    Given $Y \in \bi{m_i}$ as input, if there exists an $n \in S$ such that $n_i \leq n < n_{i+1} (\leq p(n_i))$, $D'$ outputs $D(Yw(n))$, and otherwise $D'$ always outputs $0$ (which means $D'$ fails to do anything for such an $n_i$).
\end{adjustwidth}
Note that the (1) $n_{i+1} \leq p(n_i)$ assumption guarantees the efficiency of $D'$.
As $S$ is infinite and every $n \in S$ is between some $n_i$ and $n_{i+1}$, there are infinitely many $n_i$'s such that:
\begin{align*}
    \pr{D'(U_{m_i}) = 1} - \pr{D'(g(U_{n_i}))=1} 
    =
    \pr{D(U_{m_i}w)=1} -
    \pr{D(g(U_{n_i})w)=1}
    \geq
    1/s(n)
\end{align*}
This shows, for infinitely many $n_i$'s, $H_{\textup{nh}}(g(n_i)) \leq 1/s'(n)$ for some sub-exponential $s'$, which contradicts the assumption (2) in the lemma statement.
\end{proof}
\textbf{Remark.} \Cref{i.o.} is often used implicitly in the constructions of super-bits and demi-bits.

%
%

\section{Stretching demi-bits}\label{sec:strectching-demi-bits}

In this section, we answer affirmatively whether a demi-bit is stretchable, which had been an open problem from the original work of Rudich \cite{superbit}.


We propose \Cref{alg:stretch}, which stretches a single demi-bit to $n^c$ demi-bits for any $c<1$,
and verify the correctness of this  stretching algorithm (i.e., verify the exponential demi-hardness of the new elongated generator).
Intuitively, \Cref{alg:stretch} partitions a given seed into disjoint pieces and apply the $1$-demi-bit generator to each piece.
The algorithm proposed here is very similar to other stretching or amplification algorithms seen in the literature (e.g., cf. \cite{Yao1, Luby}).
The real difficulty of demi-bit stretching comes from to prove the correctness of the attempted stretch (i.e., to proof the stretching algorithm applied to preserve exponential demi-hardness).

\begin{stretchalgorithm}\label{alg:stretch}
Suppose $b_n:\bits^n\to\bits^{n+1}$ is a demi-bit and $0< c <1$ is a constant. We define a new generator $g:\bits^N\to\bits^{N+m}$ with input length $N$ and stretching length $m = \lceil N^c \rceil$
as follows: given input $x$ (of length $N$), let $n = \lfloor \frac{N}{m} \rfloor$ and $x = x_1 x_2 \ldots x_m r$, where each $x_i$ has length $n$, and define $g(x) = b(x_1)\ldots b(x_m)r$.
\end{stretchalgorithm}

The correctness proof proceeds by  contrapositive.
That is, we assume there is a distinguisher $D$ which breaks demi-bits $g$ (stretched from a single demi-bit $b$ by \Cref{alg:stretch}) in the desired sense, and we want to  construct a distinguisher $C$ which breaks $b$.
Rather than applying the hybrid argument directly to $D$, 
we apply the hybrid argument to a new distinguisher $D'$ defined based on $D$: 
the new distinguisher $D'$ can use nondeterminism to change the pseudorandom part of the hybrids and thus ``amplify'' the probability of certificating randomness (intuitively, this can  be viewed as changing the average-case analysis in the standard hybrid argument to a worst-case or ``existence'' analysis). By applying the hybrid argument to $D'$, we are able to identify a non-empty class $S_2$ of random strings $y_{i+1} \ldots y_m$, which are not random witnesses (in the sense that, for each $y_{i+1} \ldots y_m$ in this class, there are no seeds $x_1, \ldots, x_i$ such that $D(b(x_1) \ldots b(x_i) \: y_{i+1} \ldots y_m) = 1$).
Thus, for $y_{i+1} \ldots y_m$ in $S_2$, 
$y_{i} y_{i+1} \ldots y_m$ can become a random witness (i.e., there are seeds $x_1, \ldots, x_{i-1}$ such that $D(b(x_1) \ldots b(x_{i-1}) \: y_{i} \ldots y_m) = 1$) only if $y_{i}$ is truly random (i.e., not equal to $b(x)$ for some seed $x$). 
The hybrid argument also implies a ``good" such $y_{i+1} \ldots y_m$ in $S_2$, that  can identify a sufficient portion of truly random $y_{i}$. We can thereby build a new distinguisher $C$ to distinguish truly random strings from pseudorandom ones.

In the proof, we reserve the \textbf{bold face} for random variables.
\begin{theorem}(Main theorem for stretching demi-bits) \label{thm:alg}
The generator $g$ (with a sub-linear stretching-length), as defined in \Cref{alg:stretch}, has at least exponential demi-hardness.
\end{theorem}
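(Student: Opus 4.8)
The plan is to argue by contraposition: assume $g$ from \Cref{alg:stretch} fails to have exponential demi-hardness, so there is a sub-exponential-size nondeterministic circuit $D$ with $\Pr_{y}[D(y)=1]\ge 1/s(N)$ for a sub-exponential $s$, while $D(g(x))=0$ for every seed $x$. I will build from $D$ a sub-exponential-size nondeterministic distinguisher $C$ witnessing that $b$ is not a demi-bit, contradicting the hypothesis on $b$. The core device is to pass from $D$ to an auxiliary nondeterministic circuit $D'$ acting on the ``hybrid'' strings $H_i = b(U_n)\cdots b(U_n)\,U_{m-i}\,U_{N - mn}$ (that is, $i$ pseudorandom chunks of length $n+1$ followed by $m-i$ truly random chunks of length $n+1$ and the leftover remainder $r$). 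On input a string of the appropriate shape, $D'$ will nondeterministically guess seeds $x_1,\dots,x_i$ for the first $i$ chunks and accept iff $D(b(x_1)\cdots b(x_i)\,(\text{rest}))=1$; the key structural point, emphasised in the technique overview, is that this guessing step respects the ``randomness structure'' — feeding $D'$ a string of the form $b(\cdot)\cdots b(\cdot)\,w$ can only produce, internally, strings of the same form, so $D'$ never spuriously certifies a pseudorandom chunk as random.

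Next I would run the hybrid argument on $D'$. Since $H_0$ is (essentially) uniform and $H_m = g(U_N)$, and $\Pr[D'(H_0)=1]\ge\Pr[D(U)=1]\ge 1/s$ while $\Pr[D'(H_m)=1]=0$ (because any accepting guess would yield $D(g(\cdot))=1$), telescoping gives an index $i$ with $\Pr[D'(H_i)=1]-\Pr[D'(H_{i+1})=1]\ge 1/(m\cdot s)$. Now I unpack what this gap means at the level of individual tail strings $y_{i+1}\cdots y_m\,r$: it says that, with noticeable probability over a uniform chunk $y_i$ (of length $n+1$) and a uniform tail $t := y_{i+1}\cdots y_m\,r$, the string $y_i\,t$ is a ``$D$-witness after guessing $i-1$ chunks'' but $y'\,t$ is not for a pseudorandom $y' = b(x)$. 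I would isolate the set $S_2$ of tails $t$ for which \emph{no} choice of $x_1,\dots,x_i$ makes $D(b(x_1)\cdots b(x_i)\,t)=1$ — equivalently, $D'$ rejects $H_{i+1}$ restricted to tail $t$ — and argue, via averaging from the hybrid gap, that a non-negligible fraction of tails lie in $S_2$ \emph{and simultaneously} admit many truly-random $y_i$ with $y_i\,t$ accepted after guessing $i-1$ chunks. Fixing one ``good'' tail $t^\star\in S_2$ by non-uniform advice, I then define $C$ on input $z\in\bits^{n+1}$: nondeterministically guess $x_1,\dots,x_{i-1}$ and accept iff $D(b(x_1)\cdots b(x_{i-1})\,z\,t^\star)=1$. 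For pseudorandom $z=b(x)$ this is the statement $t^\star\in S_2$ forbids (so $C$ rejects, giving zero error on $\mathrm{Im}(b)$), while for uniform $z$ the ``good tail'' property gives $\Pr_z[C(z)=1]\ge 1/\mathrm{poly}$, so $C$ demi-breaks $b$.

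Two bookkeeping points I would need to handle carefully. First, sizes: $D'$ and $C$ hardwire $D$, a constant number of copies of the \ppoly-circuit for $b$, and the advice string $t^\star$; since $m=\lceil N^c\rceil$ is polynomial and $s$ is sub-exponential, all of these stay sub-exponential, and because $n=\lfloor N/m\rfloor = \Theta(N^{1-c})$ is polynomially related to $N$, sub-exponential-in-$N$ translates to sub-exponential-in-$n$, which is exactly what is needed to contradict $H_{\mathrm{dh}}(b_n)\ge 2^{n^\epsilon}$. Second, the leftover block $r$ of length $N-mn$ is never hit by $b$ and just rides along inside the tail; I would absorb it into the tail variable throughout so it causes no trouble. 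I expect the main obstacle to be the combinatorial extraction in the penultimate paragraph: turning the single averaged hybrid inequality $\Pr[D'(H_i)=1]-\Pr[D'(H_{i+1})=1]\ge 1/(ms)$ into the existence of \emph{one} tail $t^\star$ that is simultaneously ``safe'' (in $S_2$, so $C$ makes no error on pseudorandom inputs) \emph{and} ``useful'' (identifies a $1/\mathrm{poly}$ fraction of truly random $y_i$). This requires a two-sided averaging argument keeping the zero-error side exact rather than merely approximate — which is precisely where the standard hybrid argument breaks down, and where the nondeterministic ``amplification'' built into $D'$ has to be invoked to guarantee that the safe-and-useful tails form a non-negligible set rather than possibly being empty.
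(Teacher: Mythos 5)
Your proposal is correct and follows essentially the same route as the paper's proof: the auxiliary circuit $D'$ that nondeterministically guesses the first $i$ seeds, the hybrid argument applied to $D'$ rather than $D$, the set $S_2$ of tails admitting no accepting seed-completion (which is what keeps the zero-error side exact), the averaging step fixing a single good tail as advice, and the final distinguisher $C$ are all exactly the paper's construction. The only cosmetic slip is that the acceptance probability of $C$ on uniform inputs is $1/(m\cdot s)$, i.e.\ inverse sub-exponential rather than $1/\mathrm{poly}$, which—together with your size bookkeeping—still contradicts the $2^{n^{\epsilon}}$ demi-hardness of $b$.
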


\begin{proof}
Demi-bit $b$ and constant $c$ are as given in \Cref{alg:stretch}. Suppose, towards contradiction that $g$ does not have exponential demi-hardness. That is, there is a sub-exponential (in the input length, denoted by $N$) size nondeterministic circuit $D$ such that, for infinitely many $N$'s (recall $m = \lceil N^c \rceil$ and $n = \lfloor \frac{N}{m} \rfloor$, and without loss of generality, we may assume $m|N$), $\pr{D(\mathbf{y_1} \ldots \mathbf{y_m}) = 1} \geq 1/|D|$ and $\pr{D(b(\mathbf{x_1}) \ldots b(\mathbf{x_m})) = 1} = 0$, where $\mathbf{x_1}, \ldots, \mathbf{x_m}$ are totally independent length-$n$ random strings and $\mathbf{y_1}, \ldots, \mathbf{y_m}$ are totally independent length-$(n+1)$ random strings. Our aim is to efficiently break $b$ in the desired sense.

We define a new nondeterministic circuit $D'$ that  takes a pair of inputs: the first is the same input as $D$'s, that is, an $(N+m)$-bit string $y_1\ldots y_m$, and the second is $i \in \{0, 1, \ldots, m\}$ (with $i$ properly encoded):

\begin{quote}
   \textit{Given input $(y_1 \ldots y_m, i)$, $D'$ guesses $n$-bit strings $x_1, \ldots, x_i$ and does whatever $D$ does on $b(x_1) \ldots b(x_i) \: y_{i+1} \ldots y_m$.} 
\end{quote}

We observe that:
\begin{itemize}
    \item $\pr{D'(\mathbf{y_1}\ldots \mathbf{y_m}, 0) = 1} = \pr{D(\mathbf{y_1} \ldots \mathbf{y_m}) = 1} \geq 1/|D|$
    (by the definition of $D'$);
    \item
    $\pr{D'(b(\mathbf{x_1}) \ldots  b(\mathbf{x_m}), m) = 1} = \pr{D(b(\mathbf{x_1}) \ldots b(\mathbf{x_m})) = 1} = 0$ \\ (by assumption on $D$;  otherwise  $\pr{D'(b(\mathbf{x_1}) \ldots  b(\mathbf{x_m}), m) = 1} > 0$ would mean there exist $x_1, \dots, x_m$ such that 
    $D(b(x_1) \ldots b(x_m)) = 1$);
    \item
    $D'$ is also of sub-exponential size.
\end{itemize}

We can now apply the hybrid argument to $D'$:
\begin{align*}
    1/|D|
    &\leq
    \pr{D'(\mathbf{y_1}\ldots \mathbf{y_m}, 0) = 1}
    -
    \pr{D'(b(\mathbf{x_1}) \ldots  b(\mathbf{x_m}), m) = 1} \\
    &=
    \sum_i
    (
    \pr{D'(b(\mathbf{x_1}) \ldots b(\mathbf{x_{i-1}}) \: \mathbf{y_{i}} \ldots \mathbf{y_m}, i-1)}
    -
    \pr{D'(b(\mathbf{x_1}) \ldots b(\mathbf{x_i}) \: \mathbf{y_{i+1}} \ldots \mathbf{y_m}, i)}
    )
\end{align*}
yields there is an $i = i(N)$ (for infinitely many $N$'s) such that
\[
P_{i-1} - P_{i} \geq 1 / (m \cdot |D|),
\]
where 
$
P_{i-1} :=
\pr{D'(b(\mathbf{x_1}) \ldots b(\mathbf{x_{i-1}}) \mathbf{y_{i}} \mathbf{y_{i+1}} \ldots \mathbf{y_m}, i-1) = 1}
$
and \\
$
P_{i} :=
\pr{D'(b(\mathbf{x_1}) \ldots b(\mathbf{x_{i-1}}) b(\mathbf{x_i}) \mathbf{y_{i+1}} \ldots \mathbf{y_m}, i) = 1}$. As $m$ is sublinear in $n$, $m \cdot |D|$ is still sub-exponential in $N$.

We denote by $S = \bi{(n+1) \times (m - i)}$   the set of $(m-i)$-tuples of $(n+1)$-bit strings, and let 
\[
S_1 = 
\{
(y_{i+1}, \ldots, y_m) \in S: \exists (x_1, \ldots, x_i) \in \bi{n \times i} \: D(b(x_1) \ldots b(x_i) y_{i+1} \ldots y_m) = 1 
\},
\]
and 
\[
S_2 = S \setminus S_1.
\]

We note that:
\begin{itemize}
    \item $D'(b(x_1) \ldots b(x_{i-1})\: y_{i} y_{i+1} \ldots y_m, i-1) = 1$ if and only if $D'(O \: y_{i} y_{i+1} \ldots y_m, i-1) = 1$, where $O = 0^{(n+1)\cdot (i-1)}$ (because, by construction, $D'(\dots,i-1)$ ignores its first $i-1$ input strings);

    \item $D'(b(x_1) \ldots b(x_{i-1})b(x_{i}) \: y_{i+1} \ldots y_m, i) = 1$ if and only if $(y_{i+1}, \ldots, y_m) \in S_1$, and thus $\pr{\mathbf{y_{i+1}} \ldots \mathbf{y_m} \in S_1} = P_i$.
\end{itemize}

Therefore,
\begin{align*}
    P_{i-1}
    =\ &
    \pr{D'(
    b(\mathbf{x_1}) \ldots b(\mathbf{x_{i-1}})\mathbf{y_{i}}
    \mathbf{y_{i+1}} \ldots \mathbf{y_m},
    i-1) = 1} \\
    =\ &
    \pr{D'(
    b(\mathbf{x_1}) \ldots b(\mathbf{x_{i-1}})\mathbf{y_{i}}
    \mathbf{y_{i+1}} \ldots \mathbf{y_m},
    i-1) = 1 | \mathbf{y_{i+1}} \ldots \mathbf{y_m} \in S_1}\cdot 
    \pr{\mathbf{y_{i+1}} \ldots \mathbf{y_m} \in S_1} \ + \\
    &
    \pr{D'(
    b(\mathbf{x_1}) \ldots b(\mathbf{x_{i-1}})\mathbf{y_{i}}
    \mathbf{y_{i+1}} \ldots \mathbf{y_m},
    i-1) = 1 | \mathbf{y_{i+1}} \ldots \mathbf{y_m} \in S_2}\cdot 
    \pr{\mathbf{y_{i+1}} \ldots \mathbf{y_m} \in S_2} \\
    \leq\ &
    1 \cdot \pr{\mathbf{y_{i+1}} \ldots \mathbf{y_m} \in S_1} \ + \\
    &
    \pr{D'(
    O \: \mathbf{y_{i}}
    \mathbf{y_{i+1}} \ldots \mathbf{y_m},
    i-1) = 1 | \mathbf{y_{i+1}} \ldots \mathbf{y_m} \in S_2}\cdot 
    \pr{\mathbf{y_{i+1}} \ldots \mathbf{y_m} \in S_2} \\
    =\ &
    P_i +
    \pr{D'(
    O \: \mathbf{y_{i}}
    \mathbf{y_{i+1}} \ldots \mathbf{y_m},
    i-1) = 1 | \mathbf{y_{i+1}} \ldots \mathbf{y_m} \in S_2} \cdot 
    \pr{\mathbf{y_{i+1}} \ldots \mathbf{y_m} \in S_2},
\end{align*}
and thus
\[
\pr{D'(
    O \: \mathbf{y_{i}}
    \mathbf{y_{i+1}} \ldots \mathbf{y_m},
    i-1) = 1 | \mathbf{y_{i+1}} \ldots \mathbf{y_m} \in S_2} \cdot 
    \pr{\mathbf{y_{i+1}} \ldots \mathbf{y_m} \in S_2}
\geq
P_{i-1} - P_i
\geq
1 / (m \cdot |D|).
\]

In particular, 
$\pr{\mathbf{y_{i+1}} \ldots \mathbf{y_m} \in S_2} > 0$
\ and\,
$
\pr{D'(
    O \:
    \mathbf{y_{i}} \:
    \mathbf{y_{i+1}} \ldots \mathbf{y_m},
    i-1) = 1 |
    \mathbf{y_{i+1}} \ldots \mathbf{y_m} \in S_2}
    \geq 
    1 / (m \cdot |D|),
$
which imply there is a \emph{fixed} $(y_{i+1} \ldots y_m) \in S_2$ such that
\[
\pr{D'(
O \:
\mathbf{y_{i}} \:
y_{i+1} \ldots y_m,
i-1) = 1}
\geq 
1 / (m \cdot |D|).
\]

We now define another nondeterministic circuit $C$ by $C(\mathbf{y_{i}}) :=
D'(O \: \mathbf{y_{i}} y_{i+1} \ldots y_m, i-1)$ with $\mathbf{y_{i}} \in \bi{n+1}$ as the input variable.
As $D'$ is of sub-exponential size in $N$ and $n$ is polynomially related to $N$, $C$ is of sub-exponential size in $n$.

We now argue that $C$ breaks $b$ in the desired sense. For infinitely many $n$'s, 
\begin{enumerate}
\item[(1)]
$
\pr{C(\mathbf{y_i})=1} =
\pr{D'(O \: \mathbf{y_{i}} \:
y_{i+1} \ldots y_m, i-1) = 1} \geq
1 / (m \cdot |D|)
$, where $m \cdot |D|$ is sub-exponential in $n$;

\item[(2)]
$
\pr{C(b(\mathbf{x_i}))=1} =
\pr{D'(O \: b(\mathbf{x_i}) \:
y_{i+1} \ldots y_m, i-1) = 1} =
0
$,  because
$y_{i+1} \ldots y_m \in S_2$ implies there is no $x_1, \ldots, x_i$ such that
$D(b(x_1) \ldots b(x_i) y_{i+1} \ldots y_m) = 1$. 
\end{enumerate}
Since $C$ breaks $b$ in the above sense, we reach a contradiction with the assumption that  $b$ is a demi-bit.
\end{proof}
The condition $c < 1$ in \Cref{alg:stretch} guarantees $n = N^{1-c}$ is polynomially related to $N$ and an infinite monotone sequence of $N$ yields an infinite monotone sequence of $n$.

A key step that makes this proof work is that nondeterministically guessing seeds $x_1, \ldots, x_{i-1}$ in $b(x_1) \ldots b(x_{i-1}) z_{i}$ preserves the ``randomness-structure'' of $b(x_1) \ldots b(x_{i-1}) z_{i}$, in the sense that: when $z_{i} = b(\cdot)$ is pseudorandom, the nondeterministic guess preserves the form $b(\cdot) \ldots b(\cdot) b(\cdot)$ (i.e., $i$ equal-length pseudorandom chunks); and when $z_{i} = y$ is truly random, it preserves the form $b(\cdot) \ldots b(\cdot) y$ (i.e., $i-1$ equal-length pseudorandom chunks followed by a truly random chunk $y$ of the same length). 
For common stretching algorithms that produce exponentially many new bits (e.g., recursively applying a one-bit generator), it is unclear how to use nondeterminism in a way that respects the ``randomness-structure'' of a given string.
Nevertheless, the new proof technique may inspire researchers to further explore the stretchability of demi-bits. 
On the other hand, the fact could also be that there is a specific demi-bit which cannot be stretched to exponentially many demi-bits by the standard stretching algorithms which are applied to super-bits and strong PRGs.

\subsection{Applications in average-case complexity}\label{sec:apps} 
In this section we show that \Cref{thm:alg} implies an equivalence between two different parametric regimes of  zero-error average-case hardness of time-bounded Kolmogorov complexity against \NP/\poly\ machines.

We need the following: a \textbf{\emph{hitting set generator} \emph{against a  class 
of decision problems}} $\mathcal{C}\subseteq 2^{\{0,1\}^{N}}$ is a function
$g: \{0,1\}^n \rightarrow \{0,1\}^{N}$, for $n<N$, such that the image of $g$ \emph{hits} (namely, intersects) every dense enough set $A$ in $\mathcal{C}$ (that is,  $|A|\ge \frac{2^{N}}{N^{{O(1)}}}$). 

As observed by Santhanam \cite{San20}:

\begin{proposition}[\cite{San20}]\label{prop:santh-claim}
Let $n<N$. A hitting set generator $g: \{0,1\}^n \rightarrow \{0,1\}^{N}$ computable in the class $\mathcal D$ against $\NP/\poly$
exists iff there exists a demi-bit $b: \{0,1\}^n \rightarrow \{0,1\}^{N}$ computable in $\mathcal D$ (against $\NP/\poly$).
\end{proposition}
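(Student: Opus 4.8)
The plan is to prove both directions of the equivalence, exploiting the observation that a hitting set generator against $\NP/\poly$ and a demi-bit are ``morally'' the same object: both are maps $g:\bits^n\to\bits^N$ whose image evades every efficiently-recognizable dense set, once we account for the fact that nondeterministic circuits recognize exactly the dense sets in $\NP/\poly$. Throughout, I will keep the complexity class $\mathcal D$ in which $g$ is computed fixed, since both notions carry the same computability requirement and that part of the statement is trivial to track.

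For the direction ``demi-bit $\Rightarrow$ hitting set generator'': suppose $b:\bits^n\to\bits^N$ is a demi-bit, i.e.\ $H_{\textup{dh}}(b)\ge 2^{n^\epsilon}$, and suppose toward a contradiction that $b$ fails to be a hitting set generator against $\NP/\poly$. Then there is a set $A\in\NP/\poly$, decided by a nondeterministic circuit family $D_N$ of polynomial size, with $|A|\ge 2^N/N^{O(1)}$, yet $\mathrm{Im}(b)\cap A=\emptyset$. Reading $D_N$ as a candidate distinguisher for $b$: we have $\Pr_{y\in\bits^N}[D_N(y)=1]=|A|/2^N\ge 1/N^{O(1)}\ge 1/s$ for $s=N^{O(1)}$, and $\Pr_{x}[D_N(b(x))=1]=0$ since $\mathrm{Im}(b)$ misses $A$. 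Hence $D_N$ witnesses $H_{\textup{dh}}(b)\le N^{O(1)}$, a polynomial bound, contradicting exponential demi-hardness. So $b$ is indeed a hitting set generator against $\NP/\poly$.

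For the direction ``hitting set generator $\Rightarrow$ demi-bit'': suppose $g:\bits^n\to\bits^N$ is a hitting set generator against $\NP/\poly$ but, toward a contradiction, $g$ is not a demi-bit, i.e.\ there is a nondeterministic circuit $D$ of sub-exponential size (for infinitely many $n$) with $\Pr_y[D(y)=1]\ge 1/|D|$ and $\Pr_x[D(g(x))=1]=0$. The set $A_D=\{y\in\bits^N : D(y)=1\}$ is recognized by a (single) nondeterministic circuit, hence lies in $\NP/\poly$; moreover $|A_D|\ge 2^N/|D|$. The subtlety is that $|D|$ is only sub-exponential, so $2^N/|D|$ need not be of the form $2^N/N^{O(1)}$ — this is the one place the argument is not a pure syntactic translation. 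I would handle it by invoking \Cref{i.o.}-style padding: restrict to inputs whose length $n'$ is a suitable prefix-length of $n$ so that $|D|$ becomes polynomial in the relevant output length, or alternatively observe that the definition of ``hitting set generator against $\NP/\poly$'' should be read with the density threshold tied to the size of the recognizing circuit (dense relative to its own witness complexity), in which case $A_D$ is dense enough by construction. Either way, $A_D$ is then a dense $\NP/\poly$ set with $\mathrm{Im}(g)\cap A_D=\emptyset$ (since $D$ rejects all of $\mathrm{Im}(g)$), contradicting that $g$ hits every dense set in $\NP/\poly$. Thus $g$ must be a demi-bit.

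The main obstacle, as just flagged, is reconciling the two different quantitative conventions: demi-hardness is an exponential-vs-polynomial threshold, while the hitting-set definition fixes the density at $2^N/N^{O(1)}$. I expect the cleanest resolution is to note that if $g$ fails to be a demi-bit then there is in fact a \emph{polynomial-size} nondeterministic distinguisher infinitely often (contrapositive of exponential demi-hardness is only sub-exponential, but one can either use a standard padding/prefix argument as in \Cref{i.o.}, or note that Santhanam's proposition is really about the polynomial regime and the density parameter $N^{O(1)}$ matches the polynomial-size distinguisher). Once the parameters are aligned, both directions are one-line contrapositives, with the computability-in-$\mathcal D$ clause passing through untouched because neither transformation alters $g$ itself.
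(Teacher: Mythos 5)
Your proof is correct and takes essentially the same approach as the paper: both directions are the same short contrapositives, identifying the set accepted by a nondeterministic distinguisher with a dense $\NP/\poly$ set disjoint from the image of the generator. Concerning the parameter mismatch you flag in the second direction, the paper resolves it exactly as in your second option — the parenthetical ``(against $\NP/\poly$)'' is read in the polynomial regime, so the proof only needs to rule out polynomial-size ($N^{O(1)}$) nondeterministic distinguishers, whose accepted set is automatically $2^N/N^{O(1)}$-dense, and no padding argument is required.
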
  

\begin{proof}
If  $g: \{0,1\}^n \rightarrow \{0,1\}^{N}$ is a hitting set generator against the class $\mathcal{C}$ of decision problems decidable by nondeterministic polynomial-size  circuits, it is also a demi-bit in the sense that no machine $C\in\mathcal{C}$ of polynomial-size  $|C|=N^{O(1)}$ can break $g$, since otherwise $\Pr{C(U_N)=1}\ge 1/N^{O(1)}$ and  $\Pr{C(g(U_n))=1}=0$, contradicting the assumption that $g$ is a hitting set generator against $\mathcal{C}$. Conversely, if $b$ is a demi-bit, than it is also a hitting set generator against $\mathcal{C}$, because if a circuit $C$ in $\mathcal{C}$ outputs $1$ to a dense enough set of inputs it must also output $1$ on a string in the image of $b$, or else $C$ would break the demi-bit.
\end{proof}


Note that~\Cref{alg:stretch} applies also to demi-bits computable in \emph{uniform} polynomial-time (the stretching algorithm is uniform, assuming the original demi-bit is, since it simply applies the demi-bit on different parts of the input). This is important for us, since to talk about Kolmogorov complexity we need machines to be of fixed size, even when the input length changes. Notice, on the other hand, that the proof that the stretching algorithm preserves its hardness necessitates that the adversary $D$ is \emph{non}-uniform (this  is the reason  in \Cref{thm:ac} we  work against \NP/\poly\ adversaries).  
 
\smallskip 


We define the \textbf{$t$-\textit{bounded Kolmogorov complexity of string $x$}}, denoted $\kolm^t(x)$, to be the minimal length of a string $D$ such that the universal Turing machine $U(D)$ (we fix some such universal machine) runs in time at most $t$ and outputs $x$.
See \cite{All2006} for more details about time-bounded Kolmogorov complexity and Definition 9 there for the definition  of time-bounded Kolmogorov complexity of strings  (that definition actually produces the $i$th bit of the string $x$ given an index $i$ and $D$ as inputs to $U$, but this does not change our result).



Recall \Cref{def:kpoly} of the languages \kolmts{s} and \kpolyp.

  

We also need to define precisely the concept of zero-error average-case hardness against the class \NP/\poly\ (equivalently, nondeterministic circuits as in \Cref{def:nondeterministic-circuit}). 
%

\begin{definition}[Zero-error average-case hardness against \NP/\poly]\label{def:zeac-hardness}
We say that a language $L\in\bits^*$ is {\bf zero-error average-case easy} for $\NP/\poly$ if there is an $\NP/\poly$ machine for which all the following hold: (i) every computation-path terminates with either a Yes, No or Don't-Know state; (ii) for a given input $x$\ no two distinct computation-paths terminates with both Yes and No; (iii) we say that the machine answers Yes (No)  on input $x$ if there \emph{exists} a computation-path terminating in Yes (resp.~No) given $x$; (iv) otherwise (namely, all computation-paths given input $x$ terminate in Don't-Know) we say that the machine \emph{does not know} the answer for $x$; (v) the machine never makes a mistake when answering Yes or No (on the other hand, it can answer Don't-Know on either members of $L$ or non-members); and (vi) the machine (correctly) answers Yes or No on at least a polynomial fraction of inputs in $L$ (i.e., at least $2^n/n^c$ of strings in $L\cap\bits^n$, for every sufficiently large $n$ and for some fixed constant $c$ independent of $n$). If $L$ is not zero-error average-case easy for \NP/\poly\ we say that $L$ is {\bf zero-error average-case hard} against $\NP/\poly$.
\end{definition}
 
%
%


The main equivalence is the following:
\begin{theorem}[Equivalence for average-case time-bounded Kolmogorov complexity]\label{thm:ac}
\mbox{$\kpolys{n-O(1)}$} is zero-error average-case hard against $\NP/\poly$ machines iff \mbox{$\kpolys{n-o(n)}$} is zero-error average-case hard against $\NP/\poly$ machines.
\end{theorem}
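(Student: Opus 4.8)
The plan is to route both hardness statements through Santhanam's equivalences, turning each one into a statement about the existence of a \emph{uniform} demi-bit of a prescribed stretching length, and then to close the loop using \Cref{thm:alg} together with an elementary truncation argument. Concretely, fix a universal machine $U$; for an output length $N$ and a threshold $s=s(N)\le N$, consider the ``universal'' generator $g_s:\bits^{s}\to\bits^{N}$ that sends a description $D$ to the $N$-bit output of $U(D)$ (suitably padded/defaulted) whenever $U(D)$ halts within a fixed polynomial number of steps $t$. Its image at length $N$ is exactly the set of length-$N$ strings with a time-$t$-bounded description of length at most $s$, and the union over all polynomial $t$ is $\kpolys{s}\cap\bits^{N}$. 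Following the argument indicated after \Cref{prop:santh-claim}, $\kpolys{s}$ is zero-error average-case hard against $\NP/\poly$ iff this universal generator is a hitting set generator against $\NP/\poly$ — a zero-error average-case easy algorithm yields a nondeterministic polynomial-size circuit accepting a dense set of strings all lying outside $\kpolys{s}$, hence outside the image, which contradicts the hitting property, and conversely a circuit witnessing a failure of the hitting property gives such an algorithm — and by \Cref{prop:santh-claim} this is in turn equivalent to the existence of a demi-bit $b:\bits^{s}\to\bits^{N}$ computable by a fixed polynomial-time machine, i.e.\ of stretching length $N-s$. Specialising this dictionary: ``$\kpolys{n-O(1)}$ is zero-error average-case hard against $\NP/\poly$'' corresponds to the existence of a uniform demi-bit of constant stretching length — equivalently, by truncation, of a single demi-bit $b:\bits^{n}\to\bits^{n+1}$ — while ``$\kpolys{n-o(n)}$ is zero-error average-case hard against $\NP/\poly$'' corresponds to the existence of a uniform demi-bit of sublinear stretching length, e.g.\ $g:\bits^{n}\to\bits^{n+n^{c}}$ for a constant $0<c<1$.

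With the dictionary in hand it remains to prove that a uniform single demi-bit exists iff a uniform demi-bit of sublinear stretching length exists. The direction ``$\kpolys{n-o(n)}$ hard $\Rightarrow$ $\kpolys{n-O(1)}$ hard'' is an immediate truncation: if $g:\bits^{n}\to\bits^{n+n^{c}}$ is a demi-bit, put $b(x):=g(x)[1\ldots n+1]$ (computed by a fixed machine, since $g$ is); for any nondeterministic circuit $C$ with $\pr{C(U_{n+1})=1}>0$ and $\pr{C(b(U_n))=1}=0$, the lift $C'(y):=C(y[1\ldots n+1])$ satisfies $\pr{C'(U_{n+n^{c}})=1}=\pr{C(U_{n+1})=1}$ and $\pr{C'(g(U_n))=1}=0$ with $|C'|=|C|+\poly(n)$, so $H_{\textup{dh}}(g)\le H_{\textup{dh}}(b)+\poly(n)$, i.e.\ $H_{\textup{dh}}(b)$ remains at least exponential and $b$ is a demi-bit. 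The direction ``$\kpolys{n-O(1)}$ hard $\Rightarrow$ $\kpolys{n-o(n)}$ hard'' is precisely \Cref{thm:alg}, applied in its uniform form (cf.\ the remark following the proof of \Cref{thm:alg}): \Cref{alg:stretch} applied to a single uniform demi-bit produces, for any constant $0<c<1$, a uniform demi-bit $g:\bits^{n}\to\bits^{n+n^{c}}$ of sublinear stretching length. This is the direction that carries the real content.

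The delicate part, and the main obstacle, is the first step — making the dictionary between $\kpolys{s}$-hardness and demi-bit existence fully rigorous. One must (i) handle the union $\kpolys{s}=\bigcup_{c}\kolm^{n^{c}}[s]$: both a uniform demi-bit and the circuit extracted from a purported zero-error average-case easy algorithm run in some \emph{fixed} polynomial time, which pins down the relevant exponent $c$, and one checks the extracted object is a genuine nondeterministic polynomial-size circuit (an $\NP/\poly$ machine's ``No''-certifying computation paths are polynomial-time verifiable); (ii) keep every generator computable by a single machine independent of the input length, which is exactly why \Cref{alg:stretch} must be invoked in its uniform form; (iii) absorb the additive $O(1)$ contributed by the length of the code of the generator, which is harmless because both ``$n-O(1)$'' and ``$n-o(n)$'' are insensitive to additive constants; and (iv) carry out the density bookkeeping so that the circuit accepting $\kpolys{s}$-non-members is dense enough to contradict exponential demi-hardness while still rejecting the entire image of the universal generator. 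Finally, the adversary must be \emph{non-uniform}: the correctness proof of \Cref{alg:stretch} fixes a ``good'' tail $(y_{i+1}\ldots y_m)\in S_2$, which is non-uniform advice, and this is precisely why \Cref{thm:ac} is stated against $\NP/\poly$ rather than $\NP$.
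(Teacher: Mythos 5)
Your proposal is correct and follows essentially the same route as the paper's proof: both reduce \Cref{thm:ac}, via Santhanam's hitting-set-generator/demi-bit equivalence (\Cref{prop:santh-claim}) applied to the universal-Turing-machine generator, to the existence of uniform demi-bits of constant versus sublinear stretch, and then invoke the uniform version of \Cref{thm:alg} for the nontrivial direction. The only differences are cosmetic: you spell out the output-truncation argument for the easy direction (which the paper leaves implicit) and you flag, rather than fully execute, the same bookkeeping the paper carries out in its two-directional argument (the counting that most strings lack short poly-time descriptions, the fixed polynomial time bound, the additive $O(1)$ from the generator's code, and the uniformity of the generator versus non-uniformity of the adversary).
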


\begin{proof}
By \Cref{thm:alg} and the equivalence of (uniform polytime computable) demi-bits and hitting set generators (HSGs) against \NP/\poly\ (\Cref{prop:santh-claim}), it suffices to show that for a size function $s:\N\to\N$ with $s(n)<n-O(1)$, $\kpolys{s(n)}$ is zero-error average-case hard against \NP/\poly\ machines iff there is a HSG computable in uniform polytime $H:\bits^{s(n)}\to\bits^n$ against \NP/\poly.

\para{($\Longleftarrow$)} 
Assume that $H:\bits^{s(n)}\to\bits^n$ is a HSG, computable in uniform poly-time, against \NP/\poly. Then, for every constant $k$ (independent of $n$) and sufficiently large $n$, $\text{Im}(H)\cap\bits^n$ intersects all \NP/\poly-computable sets $A_n\subseteq\bits^n $ for which $|A_n|\ge 2^n/n^k$ for all $n$. We need to show that for every constant $c$, $\kolm^{n^c}[s(n)]\subseteq\bits^n$ is zero-error average-case hard for \NP/\poly. We show that there is no \NP/\poly\ machine that answers (correctly) one of Yes or No answers on at least $2^n/n^k$ input strings from $\bits^n$, and on the rest input strings in $\bits^n$ answers Don't-Know, and moreover makes no mistakes. Assume otherwise, then there is an \NP/\poly\ machine that answers (correctly) No for at least $2^n/n^{O(k)}$ input strings from $\bits^n$; this is because most input strings do not have short time-bounded Kolmogorov complexity, that is, a polynomial fraction of the inputs $x\in\bits^n$, for every $n$, are not in \kpoly$[s(n)]$, for $s(n)$ between $|x|^\epsilon$ and $|x|$ (for a constant $0<\epsilon<1$; see \cite[Section 2.6]{All2006} and references therein).
Hence, there is an \NP/\poly/ machine that accepts (correctly) at least $2^n/n^{O(k)}$ ``hard strings'' from $\bits^n$ (namely, strings not in $\kpolys{n}$), and rejects all other strings in $\bits^n$: in particular, the \NP/\poly/ machine guesses the witness the input string being hard, and if the witness is correct it accepts, and otherwise it rejects. 

We thus get a contradiction to $H$ being a HSG against \NP:  there is a dense \NP-language containing at least $2^n/n^{O(k)}$ ``hard strings'' from $\bits^n$. But if $D$ is such an \NP\ machine for this language, then $D$ breaks the HSG $H$: for every string in $\textrm{Im}(H)$ the machine $D$ Rejects, since it has a small $\kpoly$ complexity by the assumption that $H$ is computable in uniform poly-time (in other words, every string $x$ of length $n$ in $\textrm{Im}(H)$ is such that $x\in\kolm^{n^r}[s(n)+O(1)]$, for some constant $r$ independent of $n$, by assumption that $H$ is computable in uniform poly-time).
Hence, $H$ does not hit the dense \NP/\poly-set defined by $D$, a contradiction.

\para{($\Longrightarrow$)} We assume that  $\kpolys{s(n)}$ is zero-error average-case hard against \NP/\poly. Let $H:\bits^{s(n)}\to\bits^n$ be a mapping defined so that the input $x\in\bits^{s(n)}$ is fed into a universal Turing machine to be ran in time $n^k$, for some constant $k$ independent of $n$, and the output of the universal machine is a string of length $n$ (or the string $0\cdots0$ of $n$ zeros if the algorithm does not terminate after $n^k$ steps). We show that $H$ is a HSG against \NP/\poly~(computable in uniform $n^{O(1)}$-time, by assumption).

Assume by way of contradiction that $H$ is not a HSG against \NP/\poly. Then, there is an \NP/\poly\ machine $D$ that accepts at least $2^n/n^{O(1)}$ strings in $\bits^n$, but rejects every string in $\textrm{Im}(H)$.
Therefore, there is an \NP/\poly\ machine $D$ that correctly accepts at least $2^n/n^{O(1)}$ strings $x\in\bits^n$ with $x\not\in\kpoly[s(|x|)]$, and rejects all other strings in $\bits^n$. This contradicts our assumption, because we can construct an \NP/\poly\ machine $D'$ that zero-error decides on average $\kpoly[s(|x|)]$: in $D$ simply replace an Accept state with a Yes state, and a Reject state with a Don't-Know state. 
\end{proof}

\subsection{Application to proof complexity generators}
\begin{corollary}[Stretching proof complexity generators]\label{cor:proof-complexity-generators}
Let $b:\{0,1\}^n\to\{0,1\}^{n+1}$ be a demi-bit computable in \ppoly. Let $0<c<1$ be a constant and $\ell=n+n^c$. Then, there is a proof complexity generator  $g:\{0,1\}^n\to\{0,1\}^{\ell}$  in \ppoly, such that for every  propositional proof system, with probability at least  $1-\frac{1}{\ell^{\omega(1)}}$ over the choice of $r\in\bits^{\ell}$, there are no $\poly(n)$-size proofs of the tautology $\tau(g)_r$.
\end{corollary}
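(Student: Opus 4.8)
The plan is to feed the stretched demi-bit from \Cref{thm:alg} into the standard connection between pseudorandomness secure against nondeterministic tests and proof complexity generators observed by Kraj\'{i}\v{c}ek~\cite{Kra04}. First, apply \Cref{thm:alg} to the given demi-bit $b:\bits^n\to\bits^{n+1}$: the generator $g:\bits^n\to\bits^{\ell}$ produced by \Cref{alg:stretch} with $\ell=n+n^c$ is computable in \ppoly\ and has at least exponential demi-hardness, so fix $\epsilon>0$ with $H_{\textup{dh}}(g)\ge 2^{n^{\epsilon}}$ for all large $n$. Because $g\in\P/\poly$, the translation $\tau(g)_r$ has size $\poly(\ell)=\poly(n)$; moreover $\tau(g)_r$ is a tautology exactly when $r\notin\textup{Im}(g)$, which fails for at most a $2^{n}/2^{\ell}=2^{-n^{c}}$ fraction of $r\in\bits^{\ell}$, and, crucially, when $r\in\textup{Im}(g)$ the formula $\tau(g)_r$ is falsifiable, so by soundness \emph{no} propositional proof system has any proof of it at all.

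Next, fix an arbitrary propositional proof system $P$ and an arbitrary polynomial $q$. I would build a nondeterministic circuit $D$ of size $n^{O(1)}$ (the constant depending only on $P$ and $q$) that on input $r\in\bits^{\ell}$ guesses a candidate string $\pi$ of length at most $q(n)$, constructs $\tau(g)_r$ from $r$, and accepts iff $\pi$ is a valid $P$-proof of $\tau(g)_r$; this is a polynomial-size nondeterministic circuit since $P$-proofs are polynomial-time checkable and $\tau(g)_r$ is \ppoly-computable from $r$. By the soundness remark, $\mathbb{P}_{x}\bigl[D(g(x))=1\bigr]=0$ for $x\sim U_n$, i.e.\ $D$ never accepts an element of $\textup{Im}(g)$. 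Writing $\delta:=\mathbb{P}_{r}\bigl[D(r)=1\bigr]$ for $r\sim U_{\ell}$, if we had $\delta\ge 2^{-n^{\epsilon}+1}$ then $s:=\max\bigl(|D|,\lceil 1/\delta\rceil\bigr)$ would satisfy $|D|\le s<2^{n^{\epsilon}}$ for large $n$, and $D$ would be a size-$s$ nondeterministic circuit with $\mathbb{P}_{r}[D(r)=1]\ge 1/s$ and $\mathbb{P}_{x}[D(g(x))=1]=0$, forcing $H_{\textup{dh}}(g)\le s<2^{n^{\epsilon}}$ and contradicting \Cref{thm:alg}. Hence $\delta=2^{-\Omega(n^{\epsilon})}$, and since $\ell=\Theta(n)$ this is $\ell^{-\omega(1)}$. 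Thus with probability at least $1-\ell^{-\omega(1)}$ over $r\in\bits^{\ell}$ there is no $P$-proof of $\tau(g)_r$ of size at most $q(n)$; as $P$ and $q$ were arbitrary, the corollary follows.

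This is essentially a routine corollary, so I do not expect a genuine obstacle, only two bookkeeping points to state carefully. The first is the quantifier order: the superpolynomially small bound $2^{-\Omega(n^{\epsilon})}$ must hold with a decay exponent that does not shrink as $q$ grows — this is fine because the size of $D$ stays $n^{O(1)}$, hence below $2^{n^{\epsilon}}$, and the decay rate is governed solely by the fixed $\epsilon$ coming from the demi-hardness of $g$, not by $|D|$. The second is to confirm that the translation used is the standard one ($\tau(g)_r$ of size $\poly(\ell)$, true iff $r\notin\textup{Im}(g)$), so that the decisive identity $\mathbb{P}_{x}[D(g(x))=1]=0$ genuinely holds; here I rely on \Cref{alg:stretch} preserving \ppoly-computability of the generator, which it does since it merely runs $b$ on disjoint blocks of the seed.
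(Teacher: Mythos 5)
Your proposal is correct and follows essentially the same route as the paper: stretch $b$ via \Cref{thm:alg}, build a polynomial-size nondeterministic circuit that guesses and verifies a candidate proof of $\tau(g)_r$, observe that soundness forces $\mathbb{P}_x[D(g(x))=1]=0$, and invoke the exponential demi-hardness of $g$ to bound the acceptance probability on random $r$. The only differences are presentational (you argue directly with an explicit $2^{-\Omega(n^{\epsilon})}$ bound, while the paper argues by contradiction from a $2^{\ell}/\ell^{k}$-dense set of easy $r$'s), and your handling of the quantifiers over the proof system and the polynomial size bound matches the paper's intent.
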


\begin{proof}
By \autoref{thm:alg}, we can stretch the demi-bit $b$ to yield new demi-bits $g:\bits^n\to\bits^{\ell}$. By assumption, $b$ is computable in \ppoly, and by \Cref{alg:stretch} (namely, the algorithm that stretches the demi-bit, which by inspection involves  only applications of the original demi-bit function on different sub-parts of the seed) $g$ is also in \ppoly. 

Assume by way of contradiction that there is a constant $k$ and a  propositional proof system $R$ that admits $\poly(n)$-size proofs of the tautology $\tau(g)_r$, for all $r$ in the set $S\subseteq\bits^\ell\setminus\textup{Im}(g)$ where $|S| \ge 2^\ell/\ell^{k}$ (namely,  it is \emph{not} true that the tautology $\tau(g)_r$ does \emph{not} have polynomial-size $R$-proofs for all $r$  in $\bits^\ell\setminus ({\textup{Im}}(g)\uplus\ T)$ for some $T$ with $|T|\ge 2^\ell/\ell^{\omega(1)}$; note that $|\bits^\ell\setminus({\textup{Im}}(g)\uplus\ T)|= 2^\ell(1-1/\ell^{\omega(1)}-1/2^{\ell-n})=2^\ell(1-1/\ell^{\omega(1)})$). 


Since $g$ is computable in $\ppoly$, the tautology $\tau(g)_r$ is also of size $\poly(n)$. Let $D$ be a nondeterministic circuit that gets an input $r\in S$, ``guesses'' an $R$-proof of $\tau(g)_r$ and verifies it is a correct proof. Then $D$ is of size $n^{O(1)}$, and we have
\begin{equation}\label{eq:ov:tag-2B}
    \underset{y \in \{0, 1\}^{\ell}}{\mathbb{P}} [D(y) = 1] \geq \frac{1}{\ell^k}
    \text{\quad and \ }
    \underset{x \in \{0, 1\}^n}{\mathbb{P}} [D(g(x)) = 1] = 0,
\end{equation}
which contradicts our assumption that $g$ are demi-bits.
\end{proof}

\section{Nondeterministic predictability}\label{sec:predict}
In \Cref{sec:pred}, we review the notion of predictability in the deterministic setting. 
In \Cref{sec:nd-pred}, we introduce new notions of predictability beyond the deterministic setting and study nondeterministic hardness from a predictability aspect.

In this and the next section, we will use a slightly modified definition of super-bits, which is analogous to \Cref{def:strong PRG} in the deterministic setting: a generator $g : \bi{n} \to \bi{m(n)}$ is called super-bits if for every $D$ in \nppoly, every polynomial $p$, and all sufficiently large $n$'s,
\begin{displaymath}
\pr{D(U_{m(n)})=1}
-
\pr{D(g(U_n))=1}
<
1/p(n).
\end{displaymath}
Namely, $g$ is super-bits if $g$ is safe against all \nppoly-distinguishers (the difference from the original definition of super-bits is that security is defined against all \emph{sub-exponential}-size nondeterministic circuits, while here we define it against all polynomial-size nondeterministic circuits).

The contribution of this section provides  progress in our understanding of the nondeterministic hardness of generators.


\subsection{Basic deterministic predictability} 
\label{sec:pred}
We first review the notion of predictability in the deterministic setting and the equivalence between deterministic unpredictability and super-polynomial hardness of generators. The results reviewed in \Cref{sec:pred} are adapted from \cite{txt}.

\begin{definition}[Probability ensembles]
A \textbf{probability ensemble} (or \textbf{ensemble} for short) is an infinite sequence of random variables $(Z_n)_{n \in \bN}$. Each $Z_n$ ranges over $\bi{l(n)}$, where $l(n)$ is polynomially related to $n$ (i.e., there is a polynomial $p$ such that for every $n$ it holds that $l(n) \leq p(n)$ and $p(l(n)) \geq n$.
\end{definition}

We say an ensemble is polynomially generated if there is $g \in P/poly$ such that $g(U_n) = Z_n$. In this paper, we are only interested in polynomially generated ensembles because they are easy to generate to imitate other probability ensembles (e.g., the uniform ensemble, as we will see in the next definition) from a cryptography perspective. Thus, every ensemble we consider from now on will be implicitly assumed to be polynomially generated if not specified otherwise.

\begin{definition}[Strong-pseudorandom ensembles]
The probability ensemble $(Z_n)_{n \in \bN}$ is \textbf{strongly pseudorandom} if for every algorithm $D$ in \ppoly{}, every polynomial $p$, and all sufficiently large $n$'s,
\[
|\bP_{U_{m(n)}}[D(U_{m(n)}, 1^n) = 1] - \bP_{Z_n}[D(Z_n, 1^n) = 1]|
\leq
1/p(n),
\]
where $m = |Z_n|$ and $U_m$ is the uniform distribution
\end{definition}
The input $1^n$ allows $D$ is run poly-time in $n$ and informs $D$ of the value $n$. But for the sake of notation simplicity, $1^n$ is often omitted and will be omitted from now on. 

We note that if $g$ is a strong PRG, then $g(U_n)$ is a strong-pseudorandom ensemble. We will see, for $g(U_n)$, being strongly pseudorandom is equivalent to being unpredictable.

\begin{definition}[(Deterministic) predictability]
An ensemble $(Z_n)_{n \in \bN}$ is called \textbf{predictable} or \textbf{P/poly-predictable} is there exist an algorithm $\McA$ in \ppoly{}, a polynomial $p$, infinitely many $n$'s, and an $i(n) < |Z_n|$ for each of those $n$'s such that
\[
\pr{
\McA(Z_n[1 \ldots i]) = Z_n[i+1]}
\geq
\frac{1}{2} + 1/p(n).
\]
An ensemble $(Z_n)_{n \in \bN}$ is \textbf{unpredictable} if it is not predictable.
\end{definition}

\begin{theorem} \label{det-pred-thm}
An ensemble is strong-pseudorandom if and only if it is unpredictable. 
\end{theorem}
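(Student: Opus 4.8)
The plan is to establish both directions by contraposition, which is Yao's classical argument adapting the hybrid method to the next-bit predictor.

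\emph{Predictability implies non-pseudorandomness.} Suppose $(Z_n)$ is $\P/\poly$-predictable, say by $\McA \in \P/\poly$ achieving $\pr{\McA(Z_n[1\ldots i]) = Z_n[i+1]} \ge \tfrac12 + 1/p(n)$ at some position $i = i(n)$, for infinitely many $n$. I would build a distinguisher $D$ that, on input $z$ of length $m = |Z_n|$, runs $\McA$ on $z[1\ldots i(n)]$ and outputs $1$ iff the guessed bit equals $z[i(n)+1]$; the index $i(n)$ together with $\McA$'s advice are hard-wired into $D$'s advice. On $Z_n$ it outputs $1$ with probability $\ge \tfrac12 + 1/p(n)$, while on $U_m$ the bit $U_m[i+1]$ is uniform and independent of the prefix, so it outputs $1$ with probability exactly $\tfrac12$; hence the advantage is $\ge 1/p(n)$ for infinitely many $n$, contradicting strong pseudorandomness. (The absolute value in the definition of strong pseudorandomness is harmless here, since we obtain a one-sided advantage.)

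\emph{Non-pseudorandomness implies predictability.} Suppose some $D \in \P/\poly$ satisfies $|\pr{D(U_m)=1} - \pr{D(Z_n)=1}| \ge 1/p(n)$ for infinitely many $n$. First I would introduce the hybrids $H_i := Z_n[1\ldots i] \cdot U_{m-i}$ for $0 \le i \le m$, so that $H_0 = U_m$ and $H_m = Z_n$; telescoping the advantage as $\sum_{i=0}^{m-1}(\pr{D(H_i)=1} - \pr{D(H_{i+1})=1})$ and applying pigeonhole yields, for each such $n$, an index $i = i(n) \in \{0,\ldots,m-1\}$ with $|\pr{D(H_i)=1} - \pr{D(H_{i+1})=1}| \ge 1/(m\,p(n))$; after possibly flipping $D$'s output bit (absorbed into the advice) I may assume $\pr{D(H_{i+1})=1} - \pr{D(H_i)=1} \ge 1/(m\,p(n))$. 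Then I would define the predictor $\McA$: on input a prefix $u = z[1\ldots i]$, sample a uniform bit $\sigma$ and a uniform $w \in \bits^{m-i-1}$, run $D(u\,\sigma\,w)$, and output $\sigma$ if $D$ returns $1$ and $\overline{\sigma}$ otherwise. A short conditional calculation — splitting on whether $\sigma = Z_n[i+1]$, and using that the free bit of $H_i$ equals $Z_n[i+1]$ or its complement with probability $\tfrac12$ each so that $\pr{D(H_i)=1} = \tfrac12\pr{D(Z_n[1\ldots i] Z_n[i+1] w)=1} + \tfrac12\pr{D(Z_n[1\ldots i]\,\overline{Z_n[i+1]}\, w)=1}$ — shows $\pr{\McA(Z_n[1\ldots i]) = Z_n[i+1]} = \tfrac12 + \pr{D(H_{i+1})=1} - \pr{D(H_i)=1} \ge \tfrac12 + 1/(m\,p(n))$. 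Since $m$ is polynomially bounded in $n$, this is the required $\tfrac12 + 1/\poly(n)$ bound for the same infinitely many $n$'s; finally I would fix $\McA$'s internal coins by an averaging argument and move them, together with $i(n)$ and $D$'s advice, into the $\P/\poly$-advice, obtaining a deterministic non-uniform next-bit predictor. Hence $(Z_n)$ is predictable.

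The argument is essentially routine; the only delicate points are the non-uniform bookkeeping — encoding the position $i(n)$, the possible output flip, and the derandomised coins of $\McA$ as polynomial advice, which the $\P/\poly$ model permits — and keeping the ``for infinitely many $n$'' quantifier consistent across the telescoping and pigeonhole steps, so that the predictor succeeds on exactly the lengths on which $D$ distinguishes. I do not expect any substantial obstacle beyond this.
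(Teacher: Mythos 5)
Your proof is correct and is exactly the classical Yao next-bit-prediction/hybrid argument that this theorem refers to: the paper states \Cref{det-pred-thm} without proof, as review material adapted from the cited textbook, and your two directions (predictor $\Rightarrow$ distinguisher, and distinguisher $\Rightarrow$ predictor via the hybrids $H_i$ with the $\tfrac12 + \pr{D(H_{i+1})=1} - \pr{D(H_i)=1}$ calculation) coincide with that standard proof. The non-uniform bookkeeping you flag — hard-wiring $i(n)$, the possible output flip, and the averaged-out coins into the advice, and noting $m=l(n)$ is polynomially bounded — is handled correctly, so there is nothing to add.
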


We will see in the next section a nondeterministic variant of the last theorem.


\subsection{Nondeterministic variants}
\label{sec:nd-pred}

In this section, we propose four new notions of unpredictability beyond the deterministic setting and characterise super-hardness of generators based on the new notions.

Before we define nondeterministic predictability, we emphasise an important distinction between a decision problem and a single-bit-output computing problem in the nondeterministic setting.
We say a nondeterministic algorithm $\McA$ is a \textbf{function-computing} algorithm, 
if for every input $x \in \bi{n}$,
every computation branch yields one of $\{0, 1, \bot \}$, in which $\bot$ indicates a failure,
and there is always a computation branch yielding $0$ or $1$.
$\McA(x) = c$ for some $c \in \bi{}$ if, on input $x$, every computation branch either yields $c$ or $\bot$; otherwise $\McA(x) = \bot$.
Hence, if $\McA(x) = \bot$, then there are a computation branch yielding $0$ and another computation branch yielding $1$.
We say $\McA$ is total if $\McA(x) \in \bi{}$ for every $x \in \bi{n}$. Obviously, total function-computing algorithms constitute a subclass of function-computing algorithms.
If we restrict the algorithms to be total for defining $\mathbf{\cap}$-unpredictability in \Cref{def:nondeterministic-predictability},  the diagram shown in \Cref{sum:unpredict} at the end of this section remains unchanged. 
The advantage of not putting this restriction is that the property \mbox{$\mathbf{\cap}$-unpredictability} becomes slightly stronger and thus, with this restriction we achieve a tighter ``sandwiched'' characterisation of nondeterministic hardness in \Cref{sum:unpredict}.

We also remark that a decision problem is in $\NP/\poly \cap \coNP/\poly$ if and only if it is decidable by a (total) nondeterministic polynomial-size function-computing algorithm.


\begin{definition}[Nondeterministic predictability]\label{def:nondeterministic-predictability}\
\begin{description}

\item[$\mathbf{\NP/\poly}$-predictability.] 
An ensemble $(Z_n)_{n \in \bN}$ is \nppoly-predictable if there exist an algorithm $\McA$ in \nppoly{}, a polynomial $p$, infinitely many $n$'s, and an $i(n) < |Z_n|$ for each of those $n$'s such that
\[
\pr{
\McA(Z_n[1 \ldots i]) = Z_n[i+1]}
\geq
1/2 + 1/p(n).
\]
An ensemble $(Z_n)_{n \in \bN}$ is \nppoly-unpredictable if it is not \nppoly-predictable.

\item[$\mathbf{\coNP/\poly}$-predictability.]
An ensemble $(Z_n)_{n \in \bN}$ is \conppoly-predictable if there exist an algorithm $\McA$ in \conppoly{}, a polynomial $p$, infinitely many $n$'s, and an $i(n) < |Z_n|$ for each of those $n$'s such that
\[
\pr{
\McA(Z_n[1 \ldots i]) = Z_n[i+1]}
\geq
1/2 + 1/p(n).
\]
An ensemble $(Z_n)_{n \in \bN}$ is \conppoly-unpredictable if it is not \conppoly-predictable.

\item[$\mathbf{\cup}$-predictability.]
An ensemble $(Z_n)_{n \in \bN}$ is called $\cup$-predictable (a short for $(\NP/\poly \cup \coNP/\poly)$-predictable) if it is \nppoly-predictable or \conppoly-predictable. 
An ensemble $(Z_n)_{n \in \bN}$ is $\cup$-unpredictable if it is not $\cup$-predictable.

\item[$\mathbf{\cap}$-predictability.]
An ensemble $(Z_n)_{n \in \bN}$ is called $\cap$-predictable (a short for $(\NP/\poly \cap \coNP/\poly)$-predictable) if there exist a poly-time nondeterministic \textbf{function-computing} algorithm $\McA$, a polynomial $p$, infinitely many $n$'s, and an $i(n) < |Z_n|$ for each of those $n$'s such that
\[
\pr{
\McA(Z_n[1 \ldots i]) = Z_n[i+1]} 
\geq
1/2 + 1/p(n).
\]
An ensemble $(Z_n)_{n \in \bN}$ is $\cap$-unpredictable if it is not $\cap$-predictable.
\end{description}
\end{definition}

\begin{remark*} A reader should not mistake the definition of being $\cap$-predictable as being \nppoly-predictable $\land$ \conppoly-predictable. In fact, being $\cap$-predictable is a stronger assumption than being \nppoly-predictable $\land$ \conppoly-predictable (see Lemma \Cref{pred-lemma}). 
In other words, 
for an ensemble $(Z_n)$, being $\cap$-unpredictable is a weaker property than being \nppoly-unpredictable $\lor$ \conppoly-unpredictable.
Nevertheless, being $\cap$-unpredictable is still a non-trivial property because it is stronger than being \ppoly-unpredictable (a deterministic algorithm is a total function-computing algorithm).
\end{remark*}

\begin{lemma} \label{pred-lemma}
If an ensemble $(Z_n)$ is $\cap$-predictable, it is both \nppoly-predictable and \conppoly-predictable.
\end{lemma}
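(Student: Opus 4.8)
The plan is to turn the given $\cap$-predictor into two decision machines — one in $\nppoly$ and one in $\conppoly$ — by resolving the failure symbol $\bot$ in opposite, "always favourable" directions, while keeping the very same polynomial $p$, the same infinite set of input lengths, and the same indices $i(n)$ that witness $\cap$-predictability.

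In more detail, let $\McA$ be the nondeterministic poly-time function-computing machine (with advice) witnessing $\cap$-predictability, so that $\pr{\McA(Z_n[1\ldots i]) = Z_n[i+1]} \ge 1/2 + 1/p(n)$ for infinitely many $n$. For the $\nppoly$ side I would define $\McB$ to, on input $w$, nondeterministically guess a computation branch of $\McA(w)$ and accept iff that branch outputs $1$; thus $\McB(w)=1$ iff some branch of $\McA(w)$ outputs $1$, which is a legitimate $\nppoly$ machine reusing $\McA$'s advice. By the function-computing semantics, $\McB(w)=\McA(w)$ whenever $\McA(w)\in\bits$, and $\McB(w)=1$ whenever $\McA(w)=\bot$; in particular the event $\{\McB(w)=Z_n[i+1]\}$ contains the event $\{\McA(w)=Z_n[i+1]\}$ (the extra cases have $\McA(w)=\bot\notin\bits$, so they were already failures for $\McA$). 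Hence $\pr{\McB(Z_n[1\ldots i]) = Z_n[i+1]} \ge \pr{\McA(Z_n[1\ldots i]) = Z_n[i+1]} \ge 1/2 + 1/p(n)$, which establishes $\nppoly$-predictability with exactly the same parameters.

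Dually, for the $\conppoly$ side I would define $\McC$ to guess a branch of $\McA(w)$ and \emph{reject} iff that branch outputs $0$, so that $\McC(w)=0$ iff some branch of $\McA(w)$ outputs $0$; this is a co-nondeterministic machine with $\McC(w)=\McA(w)$ when $\McA(w)\in\bits$ and $\McC(w)=0$ when $\McA(w)=\bot$. The same inclusion of success events yields $\conppoly$-predictability, again with the same $p$, the same $n$'s, and the same $i(n)$.

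The argument is short, so I do not expect a serious obstacle; the only point requiring care is matching the acceptance conventions — an $\nppoly$ (resp.\ $\conppoly$) machine computes the disjunction (resp.\ conjunction) of its branch outputs, whereas the function-computing convention is "all non-$\bot$ branches agree", with $\bot$ declared exactly when two branches disagree. One must check that replacing each $\bot$ output of $\McA$ by $1$ (for $\McB$) or by $0$ (for $\McC$) can only turn failures into (possible) successes and never the reverse, which holds precisely because $\bot\notin\bits$, so a $\bot$ answer was never a correct prediction to begin with. Non-uniformity is preserved since $\McB$ and $\McC$ use $\McA$'s advice verbatim.
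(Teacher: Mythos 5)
Your proposal is correct and is essentially the paper's own proof: the machines you call $\McB$ and $\McC$ (accept iff some branch of $\McA$ outputs $1$, resp.\ reject iff some branch outputs $0$) coincide with the paper's $\McA_1\in\nppoly$ and $\McA_0\in\conppoly$, which are obtained by resolving $\bot$-branches in the harmless direction. The probability bound is derived by the same observation in both arguments, namely that whenever $\McA$ outputs a bit equal to $Z_n[i+1]$ the modified machine agrees with it, so the success event is preserved with the same $p$, the same $n$'s and the same $i(n)$.
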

\begin{proof}
Suppose there exist a poly-time nondeterministic \textbf{function-computing} algorithm $\McA$, a polynomial $p$, infinitely many $n$'s, and an $i(n) < |Z_n|$ for each of those $n$'s such that
\[
\pr{\McA(Z_n[1 \ldots i]) = Z_n[i+1]} 
\geq
1/2 + 1/p(n).
\]

We define $\McA_1$ in \nppoly{} and $\McA_0$ in \conppoly{} as follows:
\begin{quote}
    Given input $Y \in \bi{i}$, $\McA_j (j=0,1)$ mimics $\McA$ on input $Y$ to get an output bit $c$. If $c = \bot$, $\McA_j$ outputs $1-j$, and otherwise $\McA_j$ outputs $c$.
\end{quote}
By the definition of $\McA_j$, if $\McA(Y) \in \bi{}$, $\McA_j(Y) = \McA(Y)$. 
Thus, for $j \in \bi{}$, we have
\begin{align*}
  \pr{\McA_j(Z_n[1 \ldots i]) = Z_n[i+1]} 
  & \geq
  \pr{\McA_j(Z_n[1 \ldots i]) = Z_n[i+1] \land \McA(Z_n[1 \ldots i]) \in \bi{}} \\
  & =
  \pr{\McA(Z_n[1 \ldots i]) = Z_n[i+1] \land \McA(Z_n[1 \ldots i]) \in \bi{}} \\
  & =
  \pr{\McA(Z_n[1 \ldots i]) = Z_n[i+1]} \\
  & \geq 
  1/2 + 1/p(n).  
\end{align*}
\end{proof}

\begin{proposition}\label{pred-prop1}
If $g : \bi{n} \to \bi{m(n)}$ is super-bit(s), then $g(U_n)$ is $\cap$-unpredictable.
\end{proposition}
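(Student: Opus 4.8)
The plan is to prove the contrapositive: assuming $g(U_n)$ is $\cap$-predictable, I will construct a nondeterministic polynomial-size distinguisher $D$ that breaks $g$ in the sense of the super-bit definition, i.e. witnesses $\pr{D(U_{m(n)})=1} - \pr{D(g(U_n))=1} \ge 1/q(n)$ for some polynomial $q$ and infinitely many $n$. By hypothesis there are a poly-time nondeterministic function-computing algorithm $\McA$, a polynomial $p$, infinitely many $n$, and an index $i = i(n) < m(n)$ such that $\pr{\McA(g(U_n)[1\ldots i]) = g(U_n)[i+1]} \ge 1/2 + 1/p(n)$. The key point I want to exploit is that $\McA$ is function-computing: on every input it either outputs a definite bit $c \in \bits$ (consistently across all accepting branches) or outputs $\bot$, and there is always a branch giving a non-$\bot$ value. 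This is exactly what lets me build a \emph{nondeterministic} distinguisher rather than just a deterministic one, because I will never need to verify that a guessed computation branch is the ``canonical'' one.

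First I would define the standard next-bit predictor-based distinguisher, adapted to the nondeterministic setting. On input $w \in \bits^{m(n)}$, the distinguisher $D$ runs $\McA$ on the prefix $w[1\ldots i]$ along some nondeterministically guessed branch; if that branch yields $\bot$, $D$ accepts iff a fresh random coin equals $w[i+1]$ (equivalently, since we want a deterministic circuit, $D$ can use an auxiliary random bit — but to stay within nondeterministic circuits I would instead fix this case to accept with probability exactly $1/2$ by XOR-ing against one hard-wired bit of the nondeterministic guess string treated as "random" is not legitimate; cleaner: have $D$ accept iff the guessed branch yields a value $c$ with $c = w[i+1]$, and handle the $\bot$ case by a separate device). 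The honest way, matching the deterministic proof of \Cref{det-pred-thm}, is: $D$ guesses a branch of $\McA(w[1\ldots i])$; if it yields $c\in\bits$, accept iff $c = w[i+1]$; if it yields $\bot$, reject. Then on $w = g(U_n)$ the acceptance probability is at least the probability that $\McA$ outputs the correct bit, which is $\ge 1/2 + 1/p(n)$ (here I use that a non-$\bot$, correct branch is guessable and gives acceptance). On $w = U_{m(n)}$, conditioned on the prefix $U_{m(n)}[1\ldots i]$, the bit $U_{m(n)}[i+1]$ is an independent uniform bit, so whatever value $c$ the (deterministic-on-its-input, in the non-$\bot$ case) machine $\McA$ forces, the acceptance probability is exactly $1/2$; and on $\bot$-inputs it is $0 \le 1/2$. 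Hence $\pr{D(U_{m(n)})=1} \le 1/2$ and $\pr{D(g(U_n))=1} \ge 1/2 + 1/p(n)$, giving a gap of $1/p(n)$ — but note the sign: I need $\pr{D(U_{m})=1} - \pr{D(g(U_n))=1}$ to be large and positive, so I should take the complementary distinguisher $\overline{D}$, which accepts iff the guessed branch yields $c \ne w[i+1]$ (and, say, accepts on $\bot$), arranging $\pr{\overline D(U_m)=1} \ge 1/2$ (or close) and $\pr{\overline D(g(U_n))=1} \le 1/2 - 1/p(n)$. One must be slightly careful here because complementing a nondeterministic machine is not free; the right move is to observe that the function-computing property makes $\McA$ behave, on its non-$\bot$ inputs, like a deterministic function, so its ``complement'' $1 - c$ is computed by another nondeterministic function-computing machine of the same size (guess the branch, flip the non-$\bot$ output), which is precisely \Cref{pred-lemma}'s construction. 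I would invoke that idea to get the sign right.

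The remaining bookkeeping is routine: the index $i(n)$ is hard-wired as advice (we are in the non-uniform model \nppoly), the machine $\McA$ and the comparison with $w[i+1]$ add only polynomial overhead, so $D$ has polynomial size; and the gap holds for the infinitely many $n$ supplied by the $\cap$-predictability hypothesis. This contradicts the (modified, polynomial-size) super-bit definition in force in this section, so $g(U_n)$ must be $\cap$-unpredictable.

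The main obstacle — and the only genuinely non-routine point — is handling the $\bot$ outputs and the sign of the distinguishing gap simultaneously, since nondeterministic circuits are not closed under complement and a naive next-bit predictor gives the inequality in the wrong direction. The resolution is to lean on the function-computing structure exactly as in \Cref{pred-lemma}: on non-$\bot$ inputs $\McA$ acts deterministically, so one can guess-and-flip to realize the complementary predictor within \nppoly; one then routes the $\bot$ case to whichever constant output makes the uniform-side acceptance probability $\ge 1/2$ while keeping the pseudorandom-side probability controlled. Everything else is the classical Yao next-bit argument transported verbatim.
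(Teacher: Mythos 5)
Your proposal is correct and takes essentially the same route as the paper: the paper's distinguisher accepts exactly when some branch of $\McA$ on the prefix yields a bit disagreeing with the next bit (so the case $\McA(w[1\ldots i])=\bot$ is automatically an accept), giving $\pr{D(U_{m})=1}\ge 1/2$ and $\pr{D(g(U_n))=1}\le 1/2-1/p(n)$ — precisely the disagreement-accepting, $\bot$-to-accept routing you settle on, with $i(n)$ hard-wired as advice. Just be sure the ``$\bot$ case'' you route to accept is the value-level case $\McA(\cdot)=\bot$ (individual branches outputting $\bot$ simply reject), since accepting on a $\bot$-branch would ruin the pseudorandom-side bound $1-\pr{\McA(Z_i)=z}$.
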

\begin{proof}
Suppose, for a contradiction, $g(U_n)$ is $\cap$-predictable.
there exist a poly-time nondeterministic \textbf{function-computing} algorithm $\McA$, a polynomial $p$, infinitely many $n$'s, and an $i(n) < |g(U_n)|$ for each of those $n$'s such that
\[
\pr{\McA(Z_i) = z} 
\geq
1/2 + 1/p(n),
\]
where $Z_i = g(U_n)[1 \ldots i]$ and $z = g(U_n)[i+1]$.

\vspace{0.6em}

We construct nondeterministic algorithm $D$ to distinguish $U_m$ from $g(U_n)$:
\begin{quote}
    Given input $Y \in \bi{m}$, $D$ runs $\McA$ on $Y[1 \dots i]$ to obtain an output bit $c$. If $c = \bot$, $D$ outputs $0$. When $c \in \bi{}$, if $c \neq Y[i+1]$, $D$ outputs $1$; else, $D$ outputs $0$.
\end{quote}

For a fixed $n$, we let $f = \pr{\McA(U_i) = \bot}$ and use $b$ to denote a random bit. 
Recall that $\McA(U_i) = \bot$ implies the computation tree of $\McA$ on $U_i$ has both $0$ and $1$ as leaves.
Then by the definition of $D$, we have:
\begin{align*}
   \pr{D(U_m) = 1} 
    &=
   \pr{\McA(U_i) = \bot} +
   \pr{\McA(U_i) \in \bi{} \land \McA(U_i) \neq b} \\
    &= 
   \pr{\McA(U_i) = \bot} +
   \pr{\McA(U_i) \in \bi{}} \cdot
   \pr{\McA(U_i) \neq b | \McA(U_i) \in \bi{}} \\
    &=
   f + (1-f) \cdot 1/2 \\
    &\geq
   1/2.
\end{align*}
On the other hand:
\begin{align*}
    \pr{D(g(U_n)) = 1}
   &=
   \pr{\McA(Z_i) = \bot} +
   \pr{\McA(Z_i) \in \bi{} \land \McA(Z_i) \neq z} \\
   &=
   \pr{\McA(Z_i) = \bot} +
   (
   \pr{\McA(Z_i) \in \bi{}}
   -
   \pr{\McA(Z_i) \in \bi{} \land \McA(Z_i) = z} 
   ) \\
   &=
   \pr{\McA(Z_i) = \bot} +
   \pr{\McA(Z_i) \in \bi{}}
   -
   \pr{\McA(Z_i) = z} \\
   &=
   1 - \pr{\McA(Z_i) = z} \\
   &\leq
   1/2 - 1/p(n).
\end{align*}
Therefore,
\begin{align*}
    \pr{D(U_m) = 1} - 
    \pr{D(g(U_n)) = 1}
    \geq
    1/p(n).
\end{align*}
\end{proof}

\begin{proposition}\label{pred-prop2}
If $g(U_n)$ is $\cup$-unpredictable (i.e., \nppoly-unpredictable $\land$ \conppoly-unpredictable), where $g: \bi{n} \to \bi{m(n)} \in P/poly$ and $m(n) > n$, then $g$ is super-bit(s).
\end{proposition}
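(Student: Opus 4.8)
The plan is to argue by contrapositive: assume $g$ is \emph{not} super-bit(s), so there is a nondeterministic circuit family $D\in\nppoly$ and a polynomial $p$ with $\pr{D(U_m)=1}-\pr{D(g(U_n))=1}\ge 1/p(n)$ for infinitely many $n$'s, and from this produce either an \nppoly-predictor or a \conppoly-predictor for $g(U_n)$, contradicting $\cup$-unpredictability. The natural vehicle is the hybrid argument combined with the Yao-style prediction-from-distinguishing conversion (as in \Cref{det-pred-thm}/\cite{Yao1}), but carried out carefully in the nondeterministic regime. First I would set up the usual hybrids $H_i := g(U_n)[1\ldots i]\cdot U_{m-i}$, so that $H_0=U_m$ and $H_m=g(U_n)$; telescoping gives an index $i=i(n)<m$ with $\pr{D(H_i)=1}-\pr{D(H_{i+1})=1}\ge 1/(m\cdot p(n))$, and $m$ is polynomial in $n$ since the ensemble is polynomially generated.

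Next I would perform the standard Yao move that turns this gap into next-bit prediction. Given a prefix $Z_n[1\ldots i]$, the predictor samples $\beta\in\bits$ and $\rho\leftarrow U_{m-i-1}$, forms the string $w=Z_n[1\ldots i]\cdot\beta\cdot\rho$, runs $D$ on $w$, and outputs $\beta$ if $D(w)=1$ and $\overline\beta$ otherwise. In the deterministic setting this predicts $Z_n[i+1]$ with advantage exactly the hybrid gap. The subtlety here is that $D$ is nondeterministic and only one-sided: acceptance means ``there exists a good witness,'' and $D$'s behaviour on the two hybrids is controlled only in the direction $\pr{D(H_i)=1}\ge\pr{D(H_{i+1})=1}+1/(m p(n))$, with no upper bound available on $\pr{D(H_{i+1})=1}$ and no ability to ``flip'' a nondeterministic circuit. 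So the predictor built directly from $D$ is a nondeterministic \emph{acceptor}, i.e.\ an \nppoly\ machine — its ``output'' on prefix $\sigma$ is really the \nppoly-predicate ``$\exists$ randomness making $D$ accept $\sigma\cdot\beta\cdot\rho$,'' and one shows this \nppoly-predictor achieves advantage $\ge 1/(2mp(n))$ (the extra factor $2$ absorbing the loss from the one-sided $\vee$-structure and the averaging over $\beta,\rho$). Symmetrically, feeding $D$'s complement behaviour, or equivalently running the same construction against $\overline D$ viewed as a \conppoly\ object, yields a \conppoly-predictor. One of the two constructions must succeed for infinitely many $n$ with polynomial advantage, and that contradicts \nppoly-unpredictability $\wedge$ \conppoly-unpredictability, i.e.\ $\cup$-unpredictability.

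The main obstacle — and the place I would spend the most care — is exactly this one-sidedness of nondeterministic distinguishers. In the deterministic Yao argument one freely rebalances $\pr{D(H_i)=1}$ versus $\pr{D(H_{i+1})=1}$ and flips outputs; neither is available for \nppoly\ circuits, so I cannot simply quote \Cref{det-pred-thm}. I expect the clean way around it is to observe that the hybrid gap $\pr{D(H_i)=1}-\pr{D(H_{i+1})=1}$ already has the ``certify-randomness'' orientation built in (it matches the one-sided shape of \Cref{def:nondeterministic-hardness}), so the \nppoly-predictor extracted from $D$ inherits a genuine one-sided advantage; and by instead starting from $\pr{D(\cdot)=1}-\pr{D(\cdot)=1}$ reorganised around the complementary event $D(\cdot)=0$, one gets the \conppoly-predictor. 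A secondary point to check is that $i(n)$, $\beta$, $\rho$ and the hybrid index can all be folded into polynomial advice / internal randomness so the resulting predictors genuinely lie in \nppoly\ and \conppoly\ respectively, and that "infinitely many $n$" is preserved through the telescoping (a fixed $i$ need not work for all $n$, but for each good $n$ \emph{some} $i(n)$ works, which is exactly what \Cref{def:nondeterministic-predictability} asks for). Once the one-sided bookkeeping is pinned down, the rest is the routine hybrid/Yao calculation and I would not grind through it here.
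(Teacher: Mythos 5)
Your proposal is correct and follows essentially the same route as the paper: a hybrid argument on $D$, fixing the random suffix and the appended bit as nonuniform advice, and converting the hybrid gap into two Yao-style predictors, one in \nppoly\ and one (the output-flipped one) in \conppoly, so that $\cup$-unpredictability is contradicted. The only slips are local and already absorbed by your closing disjunctive claim: with $\pr{D(H_i)=1}-\pr{D(H_{i+1})=1}\ge 1/(m\cdot p(n))$ the predictor should output $\overline{\beta}$ (not $\beta$) upon acceptance, and only the disjunction --- either the \nppoly\ predictor or the \conppoly\ predictor attains advantage $1/\poly(n)$ infinitely often --- is guaranteed, not the \nppoly\ one alone.
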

\begin{proof}
Suppose, for a contradiction, g is not super-bits. For any fixed $n$, define hybrids $H_i = g(U_n)[0 \dots i] \cdot U_m[i+1 \dots m]$ and denote $Z_i = g(U_n)[1 \dots i]$ and $z_i = g(U_n)[i]$. Then, by the hybrid argument, there exist an algorithm $D$ in \nppoly{}, a polynomial $p$, infinitely many $n$'s, and an $i$ for each of these $n$'s such that 
\[
\pr{D(H_i) = 1} - \pr{D(H_{i+1}) = 1} \geq 1/p(n).
\]
Therefore, for each such $n$, there is a fixed string $w$ such that
\begin{equation*}
    \pr{D(Z_i b, w) = 1} - \pr{D(Z_{i+1}, w) = 1} \geq 1/p(n),
    \tag{$\ast$}
\end{equation*}
where we use $b$ to denote a random bit. We may omit writing $w$ from now on.

We construct an algorithm $\McA_1$ in \nppoly{} and $\McA_2$ in \conppoly{} to predict $z_{i+1}$ based $Z_i$ as follows:
\begin{adjustwidth}{1.2cm}{1.2cm}
    Given $Y_i \in \bi{i}$ as input, $\McA_1$ does whatever $D$ does on $Y_i0w$; $\McA_2$ does whatever $D$ does on $Y_i1w$ but then flip the bit got.
\end{adjustwidth}
The definitions of $\McA_1, \McA_2$ imply that $\McA_1(Y_i) = D(Y_i0)$ and $\McA_2(Y_i) = \overline{D(Y_i1)}$ ($w$ omitted).
Hence,
\begin{align*}
    (1)
    &:=
    \pr{\McA_1(Z_i) = z_{i+1}} \\
    &=
    \pr{D(Z_i0) = z_{i+1}} \\
    &=
    \pr{D(Z_i0) = 0 \land z_{i+1} = 0} +
    \pr{D(Z_i0) = 1 \land z_{i+1} = 1} \\ 
    &=
    \pr{D(Z_i z_{i+1}) = 0 \land z_{i+1} = 0} +
    \pr{D(Z_i \overline{z_{i+1}}) = 1 \land z_{i+1} = 1},
\end{align*}
and
\begin{align*}
    (2)
    &:=
    \pr{\McA_2(Z_i) = z_{i+1}} \\
    &=
    \pr{D(Z_i1) \neq z_{i+1}} \\
    &=
    \pr{D(Z_i1) = 0 \land z_{i+1} = 1} +
    \pr{D(Z_i1) = 1 \land z_{i+1} = 0} \\ 
    &=
    \pr{D(Z_i z_{i+1}) = 0 \land z_{i+1} = 1} +
    \pr{D(Z_i \overline{z_{i+1}}) = 1 \land z_{i+1} = 0}.
\end{align*}
Therefore, 
\begin{align*}
    (1) + (2)
    &=
    \pr{D(Z_i z_{i+1}) = 0} + 
    \pr{D(Z_i \overline{z_{i+1}}) = 1}.
\end{align*}
As 
\[
\pr{D(Z_i z_{i+1}) = 0} = 1 - \pr{D(Z_{i+1}) = 1}
\] 
and
\begin{align*}
    2\pr{D(Z_i,b) = 1} 
    &= 
    2\pr{D(Z_i,b) = 1 \land b = z_{i+1}} + 
    2\pr{D(Z_i,b) = 1 \land b \neq z_{i+1}} \\
    &=
    2\pr{b = z_{i+1}}\pr{D(Z_i,b) = 1 | b = z_{i+1}} + 
    2\pr{b \neq z_{i+1}}\pr{D(Z_i,b) = 1 | b \neq z_{i+1}} \\
    &=
    \pr{D(Z_i,z_{i+1}) = 1} +
    \pr{D(Z_i,\overline{z_{i+1}}) = 1},
\end{align*}
\begin{align*}
    (1) + (2)
    &= 
    (1 - \pr{D(Z_{i+1}) = 1}) 
    + 
    (2\pr{D(Z_ib) = 1} - \pr{D(Z_{i+1}) = 1}) \\
    &=
    1 + 2(\pr{D(Z_ib) = 1} - \pr{D(Z_{i+1}) = 1})\\
    &\geq
    1 + 2/p(n) \text{ by ($\ast$).}
\end{align*}

Hence, either $(1) = \pr{\McA_1(Z_i) = z_{i+1}} \geq 1/2 + 1/p(n)$ for infinitely many $n$'s or $(2) = \pr{\McA_2(Z_i) = z_{i+1}} \geq 1/2 + 1/p(n)$ for infinitely many $n$'s. That is, the ensemble $(Z_n)$ is either \nppoly-predictable or \conppoly-predictable.
\end{proof}

\noindent

\begin{summary}\label{sum:unpredict}
We summarise \Cref{pred-lemma}, \Cref{pred-prop1}, and \Cref{pred-prop2} together as the following chain of inequalities. Here, $A \leq B$ means, if an ensemble $g(U_n)$ has property $B$, then it also has property $A$ (see introduction \Cref{sec:overv:big-formula} for a more pictorial version of these inequalities):
\begin{align}
&\ppoly\text{-unpredictability}
\leq
\cap\text{-unpredictability}\\
 &  \cap\text{-unpredictability}
    \leq
    \text{super-polynomial nondeterministic hardness}
    \leq
    \cup\text{-unpredictability} 
\\
&    \cap\text{-unpredictability}
    \leq
    \NP/\poly\text{-unpredictability} 
    \leq
    \cup\text{-unpredictability}.
\\
&    \cap\text{-unpredictability}
    \leq
    \coNP/\poly\text{-unpredictability}
    \leq
    \cup\text{-unpredictability}.
\end{align}
\end{summary}
In contrast to \Cref{det-pred-thm}, which is a precise characterization of standard hardness from a predictability perspective, we have so far only obtained inaccurate characterizations of super-hardness in the nondeterministic setting. I am inclined to the viewpoint that we might be unable to obtain an exact characterization of super-hardness in terms of predictability, because we will see in \Cref{sec:super-core}, super-hardness is not just about algorithm (e.g., algorithms used to witness randomness or used to predict) behaviour on the range elements (i.e., $y$ such that $\pr{Z_n = y} > 0$) but may also concern behaviour on the non-range elements (i.e., $y$ such that $\pr{Z_n = y} = 0$). In other words, it is very possible that at least some of the inequalities above are strict. 
\begin{open problem*}
Could the inequalities in \textbf{Summary} 
 be further refined or classified? For example, is there any relation between super-hardness and \nppoly-unpredictable
$\lor$ 
\conppoly-unpredictable?
\end{open problem*}


\section{Super-core predicates}
\label{sec:hard-core}

In the next subsection  we review the concepts of one-way functions and hard-core predicates as well as their known connections to strong PRGs.
In \Cref{sec:super-core} we introduce the concept of a  super-core predicate, and investigate  its connections to  super-bits.
This provides a step forward for suggesting a sensible definition of one-way functions in the nondeterministic setting.

\subsection{One-way functions and hard-core predicates}
\label{sec:One-way functions and hard-core predicates}

We start by  reviewing the standard concepts of one-way functions and hard-core predicates and the equivalence between the existence of a strong PRG and the existence of a hard-core of some one-way function (see  \cite{txt} for more details).

\begin{definition}[One-way functions]
A function $f: \bi{*} \to \bi{*}$ in \ppoly{} is called \textbf{one-way} if for every $\McA$ in \ppoly{}, every polynomial $p(\cdot)$, and all sufficiently large $n$'s,
\[ 
\Pr[x \in \bi{n}]{\McA(f(x), 1^n) \in f^{-1}(f(x))} 
< 
\frac{1}{p(n)}. 
\]
\end{definition}
The input $1^n$ is for technical reason. It allows the algorithm $\McA$ to run in polynomial time in $n = |x|$, which is important when $f$ dramatically shrinks its input (e.g., when $|f(x)| = O(log|x|)$). When the auxiliary input $1^n$ is not necessary (e.g., when $f$ is length-preserving), we may omit it. Intuitively, a function $f$ in \ppoly{} is one-way if it is ``typically" hard to invert, when the probability is taken over the input distribution, for all efficient algorithms.

It is known that the existence of one-way functions and the existence of strong PRGs are equivalent:
\begin{theorem}\label{equiv0}
One-way functions exist if and only if strong PRGs exist.
\end{theorem}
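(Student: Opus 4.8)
The plan is to prove the two implications separately. The direction ``strong PRGs $\Rightarrow$ one-way functions'' is elementary; the converse is the H{\aa}stad--Impagliazzo--Levin--Luby theorem \cite{PRG_from_OWF}, whose proof I would sketch rather than reproduce in full, since nothing later in the paper depends on its internal construction.

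For the forward direction, note first that by the standard (deterministic) stretching argument \cite{Yao1,hybrid} we may assume we are handed a strong PRG $G:\bi{n}\to\bi{2n}$, and I claim $G$ is itself one-way. Suppose not, and let $\McA\in\ppoly$ invert $G$ on a $1/p(n)$ fraction of the inputs $G(U_n)$. Define a distinguisher $D$: on input $y\in\bi{2n}$, run $\McA(y)$ to obtain a candidate seed $x'$ and output $1$ iff $G(x')=y$. On $y=G(U_n)$ this outputs $1$ with probability at least $1/p(n)$, whereas on $y=U_{2n}$ it outputs $1$ only when $y\in\mathrm{Im}(G)$, an event of probability at most $2^{n}/2^{2n}=2^{-n}$; hence the distinguishing advantage is at least $1/p(n)-2^{-n}\ge 1/(2p(n))$ for large $n$, contradicting \Cref{def:strong PRG}.

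For the converse I would proceed in the by-now-standard stages. Given a one-way $f$, first apply the Goldreich--Levin theorem \cite{hard_core_from_OWF}: the map $f'(x,r)=(f(x),r)$ with $|x|=|r|$ is again one-way and has the hard-core predicate $b(x,r)=\langle x,r\rangle\bmod 2$. If $f$ happens to be a length-preserving bijection, one is essentially done: $G(x,r)=(f(x),\,r,\,\langle x,r\rangle)$ maps $2n$ bits to $2n+1$ bits; since $f$ is a bijection $(f(x),r)$ is uniform over $\bi{2n}$, and $\langle x,r\rangle$ is (computationally) unpredictable from $(f(x),r)$ by the hard-core property, so by Yao's next-bit characterisation (the deterministic analogue of \Cref{det-pred-thm}) $G(U_{2n})$ is pseudorandom; one then stretches to any polynomial length by the standard hybrid argument \cite{hybrid,Yao1}. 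For a \emph{general} one-way $f$ one cannot assume $f(U_n)$ is uniform, and this is where the real work lies: one shows (using a Goldreich--Levin bit) that $f$ carries a bit of HILL pseudoentropy above its true min-entropy; amplifies the gap by passing to the $k$-fold product $f^{\otimes k}(x_1,\dots,x_k)=(f(x_1),\dots,f(x_k))$, which adds pseudoentropy linearly while concentrating the real entropy around its expectation; extracts near-uniform bits from both the ``flat'' and the ``hard'' parts via pairwise-independent hashing and the Leftover Hash Lemma; and finally assembles and hybridises the extracted blocks into a generator with genuine stretch, which is stretched further as before.

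The elementary direction and the one-way-permutation case are routine; the entire difficulty is the general case. The main obstacle is that the entropy profile of $f(U_n)$ is unknown, so one must make the pseudoentropy notion robust under the $k$-fold product and tune the parameters of the Leftover Hash Lemma so that extraction loses strictly fewer bits than the product gains — this parameter balancing is the technical heart of \cite{PRG_from_OWF}. In a self-contained account I would reproduce the streamlined modern argument (in the spirit of Holenstein and of Haitner--Reingold--Vadhan) or simply invoke \cite{PRG_from_OWF} as a black box.
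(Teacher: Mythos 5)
Your proposal is correct and matches the paper's treatment: the paper states \Cref{equiv0} as a known result in its review section, citing the literature (\cite{PRG_from_OWF}) rather than proving it, and your easy direction (stretch to $\bi{n}\to\bi{2n}$, then the image-density distinguisher) together with an accurate roadmap of the H{\aa}stad--Impagliazzo--Levin--Luby argument for the hard direction is exactly the standard justification. The only nit is terminological: the paper calls OWF~$\Rightarrow$~PRG the ``forward'' implication and PRG~$\Rightarrow$~OWF the ``converse,'' the opposite of your labeling, which does not affect correctness.
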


Since we can construct a length-preserving one-way function from an arbitrary one-way function, \autoref{equiv0} can  alternatively be stated as:
\begin{theorem}\label{equiv1}
Length-preserving one-way functions exist if and only if strong PRGs exist.
\end{theorem}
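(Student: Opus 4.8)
The plan is to derive everything from \autoref{equiv0} together with the standard fact that an arbitrary one-way function can be massaged into a length-preserving one. For the ``only if'' direction (length-preserving one-way functions exist $\Rightarrow$ strong PRGs exist) there is nothing new: a length-preserving one-way function is in particular a one-way function, so this is immediate from \autoref{equiv0}. For the ``if'' direction (strong PRGs exist $\Rightarrow$ length-preserving one-way functions exist), I would first invoke \autoref{equiv0} to obtain \emph{some} one-way function $f$, and the remaining task is to build a length-preserving one-way function $g$ out of $f$.

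The construction I would use is the classical padding trick. First normalise $f$: since $f\in\ppoly$ is computable within time $p(n)$ for some polynomial $p$, which we may take strictly increasing with $p(n)>n$, we have $|f(x)|<p(|x|)$; replacing $f$ by $f_1(x):=f(x)\,1\,0^{\,p(|x|)-|f(x)|-1}$ gives a one-way function that is ``length-regular'', i.e.\ $|f_1(x)|=p(|x|)$ depends only on $|x|$. One-wayness is preserved because the trailing tag $1\,0^{\ast}$ can be stripped to recover $f(x)$, so any inverter for $f_1$ immediately yields one for $f$. Then I would define $g$ on a string $w$ of length $m$ as follows: let $n=n(m)$ be the largest integer with $p(n)\le m$ (for the finitely many small $m$ admitting no such $n$, set $g(w):=w$); parse $w=x\cdot r$ with $|x|=n$; and set $g(w):=f_1(x)\cdot r[1\ldots m-p(n)]$. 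By construction $|g(w)|=p(n)+(m-p(n))=m$, so $g$ is length-preserving, and $g\in\ppoly$.

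The main work---and the step I expect to be the real obstacle---is checking that $g$ is still one-way. The points to verify are: (i) $n(m)$ is non-decreasing, tends to infinity, and takes every sufficiently large value on a nonempty contiguous block of $m$'s (the range $p(n)\le m<p(n+1)$), so hardness for infinitely many input lengths transfers between $f_1$ and $g$; (ii) $m$ and $n=n(m)$ are polynomially related (since $n\le m\le p(n)$), so ``negligible in $m$'' and ``negligible in $n$'' coincide up to polynomial factors; and (iii) writing $g(w)=f_1(x)\cdot s$ where $s$ is handed to the adversary verbatim, any preimage $w'$ of the same length $m$ (hence with the same $n$) must satisfy $f_1(x')=f_1(x)$ and agree with $w$ on $s$, so an $\varepsilon(m)$-inverter for $g$ on $U_m$ yields an $\varepsilon(n)$-inverter for $f_1$ on $U_n$. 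Combining (i)--(iii) contradicts one-wayness of $f_1$, so $g$ is a length-preserving one-way function. Together with the trivial ``only if'' direction and \autoref{equiv0}, this establishes the stated equivalence.
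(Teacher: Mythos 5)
Your proposal is correct and follows essentially the same route as the paper: the paper obtains \autoref{equiv1} directly from \autoref{equiv0} by invoking (without spelling out) the standard conversion of an arbitrary one-way function into a length-preserving one. Your contribution is merely to make that classical padding/truncation construction and its reduction explicit, which matches the intended argument.
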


The converse implication  of \autoref{equiv0} is rather straightforward, while the forward implication, in fact, relies on a concept closely related to one-way functions, called \emph{hard-core predicates}. Even when assuming that a  hard-core predicate can produce one more bit from a seed $x$, the known proof of the forward implication in \autoref{equiv1} is still rather involved. 
\begin{definition}[Hard-core predicates]
A predicate $b: \bi{*} \to \bi{}$ in \ppoly\ is called a \textbf{hard-core} of a function $f$ if 
there do not exist an algorithm $\McA$ in \ppoly, a polynomial $p(\cdot)$, and infinitely many $n$'s such that
\[ 
\Pr[x \in \bi{n}]{\McA(f(x), 1^n) = b(x)} 
\geq
\frac{1}{2} + \frac{1}{p(n)}. \]
\end{definition}
In other words, $b$ is a hard-core of $f$ if it is safe against (in terms of prediction) all \ppoly{} algorithms, which may be due to an information loss of $f$ (e.g, $b(x) = x[n]$ is a hard-core of $f(x) = x[1 \ldots (n-1)]$) or to the difficulty of inverting $f$.

If $b$ is a hard-core of any $f$, then $\pr{b(x) = 0} \approx \pr{b(x) = 1} \approx 1/2$ (otherwise, we can predict $b$ by outputting the constant $argmax_{b \in \bi{}}(\pr{b(x) = b})$).

One-way functions and hard-core predicates are ``paired" by the construction of a ``generic" hard-core:
\begin{theorem}\label{hard-core one-way}
For any one-way function $f$, the inner-product mod 2 of $x$ and $y$, denoted as $\langle x,y \rangle$, is a hard-core of $f'(x,y):= (f(x), y)$.
\end{theorem}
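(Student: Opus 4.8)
The plan is to prove the contrapositive, i.e.\ the Goldreich--Levin list-decoding argument: if $\langle\cdot,\cdot\rangle$ is \emph{not} a hard-core of $f'(x,y)=(f(x),y)$, then $f$ is not one-way. So suppose there are a \ppoly\ algorithm $\McA$, a polynomial $p$, and infinitely many $n$ with
\[
\Pr[x\in\bi{n},\,y\in\bi{n}]{\McA(f(x),y)=\langle x,y\rangle}\ \ge\ \tfrac{1}{2}+\tfrac{1}{p(n)}.
\]
Put $\epsilon=1/p(n)$ and $\delta=\epsilon/2$. An averaging argument over $x$ shows that for at least a $\delta$ fraction of $x\in\bi{n}$ — the \emph{good} $x$ — we have $\Pr[y]{\McA(f(x),y)=\langle x,y\rangle}\ge\tfrac{1}{2}+\delta$; it therefore suffices to build a \ppoly\ algorithm that, given $f(x)$ for a good $x$, outputs some element of $f^{-1}(f(x))$ with probability $1/\poly(n)$.

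Next comes the heart of the proof, the reconstruction procedure $\McB$. Fix a good $x$, let $z=f(x)$, and set $l=\lceil\log_2(2n/\delta^2)\rceil=O(\log n)$. The algorithm $\McB(z)$ samples uniform $r_1,\dots,r_l\in\bi{n}$ and guesses bits $\sigma_1,\dots,\sigma_l$, the intention being $\sigma_j=\langle x,r_j\rangle$; this guess is correct with probability exactly $2^{-l}=1/\poly(n)$. Conditioned on a correct guess, for \emph{every} nonempty $S\subseteq[l]$ the value $\langle x,\bigoplus_{j\in S}r_j\rangle=\bigoplus_{j\in S}\sigma_j$ is known without any query to $\McA$, while the $2^l-1$ vectors $r_S:=\bigoplus_{j\in S}r_j$ are pairwise independent and each uniform on $\bi{n}$. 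To recover the $i$-th bit $x_i$, $\McB$ computes for each nonempty $S$ the estimate $\McA(z,\,r_S\oplus e_i)\oplus\bigoplus_{j\in S}\sigma_j$, where $e_i$ is the $i$-th standard basis vector; since $\langle x,r_S\oplus e_i\rangle=\langle x,r_S\rangle\oplus x_i$, this estimate equals $x_i$ whenever $\McA$ is correct at $r_S\oplus e_i$, an event of probability $\ge\tfrac{1}{2}+\delta$ because $x$ is good and $r_S\oplus e_i$ is uniform. The estimates are pairwise independent (shifts of the pairwise-independent $r_S$), so Chebyshev's inequality bounds by at most $\tfrac{1}{4}/((2^l-1)\delta^2)\le 1/(2n)$ the probability that their majority vote differs from $x_i$. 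Taking $\McB$'s output to be the string of all $n$ majority votes, a union bound over $i\in[n]$ shows that, conditioned on the guess $\sigma$ being correct, $\McB(z)$ outputs $x$ — and hence an element of $f^{-1}(z)$ — with probability at least $\tfrac{1}{2}$.

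Finally I assemble the pieces. Over the choice of a uniform $x\in\bi{n}$ together with $\McB$'s internal randomness $(r_1,\dots,r_l,\sigma_1,\dots,\sigma_l)$, the probability that $\McB(f(x))\in f^{-1}(f(x))$ is at least $\delta\cdot 2^{-l}\cdot\tfrac{1}{2}=1/\poly(n)$ for each of the infinitely many relevant $n$. Averaging over $\McB$'s coins, some fixing of them achieves success $\ge 1/\poly(n)$ over $x$; hard-wiring this fixing (together with the advice for $\McA$) yields a \ppoly\ family inverting $f$ on a $1/\poly(n)$ fraction of inputs for infinitely many $n$, contradicting the one-wayness of $f$. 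Note that the reduction queries $\McA$ only on inputs of the form $(f(x),\cdot)$, which is exactly the information available when attacking the hard-core of $f'$, so this establishes the claimed statement.

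I expect the main obstacle to be the reconstruction step — in particular the choice $l=O(\log n)$, which must make the pairwise-independent pool $\{r_S\}$ simultaneously large enough for the Chebyshev bound to beat $1/(2n)$ and small enough that the initial guess of all the $\langle x,r_j\rangle$ succeeds with inverse-polynomial probability; generating the pool from only logarithmically many truly random vectors is the key idea keeping the query complexity (hence running time) polynomial. The remaining ingredients — the two averaging arguments, the union bound over coordinates, and the non-uniform derandomization — are routine.
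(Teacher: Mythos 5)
Your proposal is correct: it is the standard Goldreich--Levin list-decoding argument (guess the $O(\log n)$ inner products $\langle x,r_j\rangle$, expand to a pairwise-independent pool $\{r_S\}$, recover each bit by a Chebyshev-backed majority vote, then fix the coins non-uniformly), and the quantitative bookkeeping ($\delta=\epsilon/2$ averaging over good $x$, $2^l\approx 2n/\delta^2$, failure $\le 1/(2n)$ per bit) checks out. The paper states this theorem as a known result without proof, citing Goldreich--Levin \cite{hard_core_from_OWF} (see also \cite{txt}), so your argument coincides with the proof the paper implicitly relies on.
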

In particular, if $f$ is length-preserving, so is $f'$. Indeed, in the theorem, we should also define $f'$ and $b$ on the odd length inputs $(x,y,b)$, where $|x|=|y|$ and $b$ is a single bit, but this case is often omitted as we can trivially ``ignore" the last bit (use the same idea as in \Cref{i.o.}) and define $f'(x,y,b) = (f(x),y,b)$ and $b(x,y,b) = \langle x,y \rangle$.

This theorem states that every one-way function $f$ ``essentially'' has the same hard-core predicate (where here it's not $f$ itself that has the hard-core predicate, rather $f'$ as above). It is an easy exercise to show that there does not exist a predicate $b$ that is hard core for every one-way function.

Therefore, \Cref{equiv1} can be reformulated as:
\begin{theorem}\label{equiv2}
There is a hard-core $b$ of some length-preserving one-way function if and only if there is a strong PRG $g$.
\end{theorem}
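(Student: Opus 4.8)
The plan is to derive \Cref{equiv2} essentially as a reformulation of \Cref{equiv1}, using \Cref{hard-core one-way} to manufacture the hard-core predicate, so the whole argument should be only a few lines.

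First I would handle the ``$\Leftarrow$'' direction. Assume a strong PRG $g$ exists. By \Cref{equiv1}, a length-preserving one-way function $f$ exists. I would then set $f'(x,y) := (f(x),y)$ with $|x|=|y|$, which is again length-preserving. The one small thing to verify is that $f'$ is one-way: given a putative \ppoly\ inverter $\McA'$ for $f'$, the algorithm that on input $f(x)$ samples a uniform $y$, runs $\McA'(f(x),y)$, and returns the first half of its output inverts $f$ with the same success probability, which is negligible since $f$ is one-way; hence $f'$ is one-way. Now \Cref{hard-core one-way} says that $\langle x,y\rangle$, the inner product mod $2$, is a hard-core of $f'$, so there is a hard-core of some length-preserving one-way function. (The odd-length inputs $(x,y,b)$ are absorbed by the padding convention noted right after \Cref{hard-core one-way}, defining $f'(x,y,b)=(f(x),y,b)$ and keeping the predicate $\langle x,y\rangle$.)

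Second, the ``$\Rightarrow$'' direction is immediate: if $b$ is a hard-core of some length-preserving one-way function $f$, then in particular $f$ is a length-preserving one-way function, and \Cref{equiv1} gives a strong PRG. No property of $b$ beyond the existence of $f$ is used here.

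The only step that is not a direct invocation of a cited result is the observation that the ``generic'' construction $f'(x,y)=(f(x),y)$ stays within the class of length-preserving one-way functions; I expect this to be the main (and quite mild) obstacle, and it is dispatched by the standard reduction sketched above. Everything else follows verbatim from \Cref{equiv1} and \Cref{hard-core one-way}.
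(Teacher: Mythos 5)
Your proof is correct and takes essentially the same route as the paper, which presents \Cref{equiv2} simply as a reformulation of \Cref{equiv1} combined with the Goldreich--Levin construction of \Cref{hard-core one-way}. Your extra check that $f'(x,y)=(f(x),y)$ remains length-preserving and one-way is exactly the standard observation implicitly relied upon there, so there is nothing to add.
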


We will develop a nondeterministic variant (\Cref{equiv4}) of the last theorem in the following section.


\subsection{Nondeterministic variants}
\label{sec:super-core}

In this section,
we introduce the concept of  super-core predicates and investigate its connection to  super-bits as well as  which functions may or may not have super-cores.

\begin{definition}[Super-core predicates]\label{def:super-core}
\label{def:supe-core}
A predicate $b: \bi{n} \to \bi{}$ in \ppoly\ is called a \textbf{super-core} of a function $f:\bi{n} \to \bi{m(n)}$ if there do not exist an algorithm $\McA_1$ in \nppoly, an algorithm $\McA_2$ in \conppoly{}, polynomial $p(\cdot)$, and infinitely many $n$'s such that
\[
\bP_{x \in \bi{n}}[\McA_1(f(x), 1^n) = b(x) = 0]
+
\frac{1}{2}\bP_{y \in \bi{m}}[\McA_1(y, 1^n) = 1]
\geq
\frac{1}{2} + \frac{1}{p(n)}
\tag{$\star$}
\]
\centerline{nor}
\[
\bP_{x \in \bi{n}}[\McA_2(f(x), 1^n) = b(x) = 1]
+
\frac{1}{2}\bP_{y \in \bi{m}}[\McA_2(y, 1^n) = 0]
\geq
\frac{1}{2} + \frac{1}{p(n)}\,
\tag{$\diamond$}
.
\]
\end{definition}
%

For convenience, we define some abbreviations for the formulas above (the input $1^n$ is omitted): 
\begin{align*}
\circled{1}(\McA_1) & :=
\bP_{x \in \bi{n}}[\McA_1(f(x)) = b(x) = 0],
\\
~~~\circled{2}(\McA_2) & :=
\bP_{x \in \bi{n}}[\McA_2(f(x)) = b(x) = 1],
\\
\circled{3}(\McA_1) & :=
\tfrac{1}{2}\bP_{y \in \bi{m(n)}}[\McA_1(y) = 1], 
\\
\circled{4}(\McA_2) & :=
\tfrac{1}{2}\bP_{y \in \bi{m(n)}}[\McA_2(y) = 0].
\end{align*}

Our intention here is to come up with a nondeterministic variant of hard-core predicates and to build a relation between the existence of super-bits and the existence of this nondeterministic variant.
A super-core is safe against both nondeterministic and co-nondeterministic predictors in the above-prescribed sense. 
Inequality ($\star$) can be re-written as 
\[
\bP_{x \in \bi{n}}[\McA_1(f(x), 1^n) b(x) = 0]
\geq
\frac{1}{2}\bP_{y \in \bi{m}}[\McA_1(y, 1^n) = 0] + \frac{1}{p(n)},
\]
and splitting the left-hand side in terms of conditional probability on $b(x) = 0$ yields
\[
\bP_{x \in \bi{n}}[b(x) = 0]
\bP_{x \in \bi{n}}[\McA_1(f(x), 1^n) = 0 |
b(x) = 0]
\geq
\frac{1}{2}\bP_{y \in \bi{m}}[\McA_1(y, 1^n) = 0] + \frac{1}{p(n)}.
\]
As we will see in \Cref{lem:super-core implies hard}, a super-core is also a hard-core, which implies 
$\bP_{x \in \bi{n}}[b(x) = 0] \approx 1/2$.
Thus, for an \nppoly-predictor $\McA_1$ to satisfy ($\star$), it has to satisfy the following equivalent condition for some polynomial $p(n)$:
\[
\bP_{x \in \bi{n}}[\McA_1(f(x), 1^n) = 0 |
b(x) = 0]
\geq
\bP_{y \in \bi{m}}[\McA_1(y, 1^n) = 0] + 
\frac{1}{p(n)}.
\]
This condition intuitively means that if the input $y$ given to the \nppoly-predictor $\McA_1$ indeed represents some $f(x)$ such that $b(x) = 0$, the \nppoly-predictor $\McA_1$ has to be more sensitive to detect it by outputting $0$ (than when receiving a random input).
($\diamond$) is dual to ($\star$).
Note that, under this definition, nondeterministic and co-nondeterministic adversaries are unable to satisfy their corresponding inequality in any trivial way (e.g., by outputting the constant $0$ or $1$).

\medskip 

We observe that if $b$ is a super-core of $f$, then $b$ is also a super-core of any ``shortened" $f$. Precisely, if $i: \bN \to \bN$ is such that $i(n) \leq |f(1^n)|$ for every $n$, then $b$ is a super-core of $f'(x) := f(x)[1 \ldots i(|x|)]$ because $f'(x)$ provides less information than $f(x)$.

Just as a super-bit is also a strong PRG, we have:
\begin{lemma}\label{lem:super-core implies hard}
If $b$ is a super-core of function $f: \bi{n} \to \bi{m(n)}$, $b$ is a hard-core of $f$.
\end{lemma}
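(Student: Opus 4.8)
The plan is to prove the contrapositive: assuming $b$ is \emph{not} a hard-core of $f$, I will exhibit adversaries witnessing that $b$ is \emph{not} a super-core of $f$. So suppose there are a deterministic algorithm $\McA\in\ppoly$, a polynomial $p$, and infinitely many $n$'s with $\bP_{x\in\bi n}[\McA(f(x),1^n)=b(x)]\ge \tfrac12+\tfrac1{p(n)}$.

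The observation driving the argument is that a deterministic $\ppoly$ algorithm is simultaneously a legitimate $\nppoly$ algorithm and a legitimate $\conppoly$ algorithm (a deterministic circuit is a (co-)nondeterministic circuit with no nondeterministic input bits), so I may feed the \emph{same} $\McA$ into the super-core definition as both $\McA_1$ and $\McA_2$. With the abbreviations $\circled{1},\dots,\circled{4}$ introduced just above the lemma, I would then simply add all four quantities. Since $\McA$ is deterministic and total, $\circled{3}(\McA)+\circled{4}(\McA)=\tfrac12\bP_{y}[\McA(y)=1]+\tfrac12\bP_{y}[\McA(y)=0]=\tfrac12$; and since the events $\{\McA(f(x))=b(x)=0\}$ and $\{\McA(f(x))=b(x)=1\}$ partition $\{\McA(f(x))=b(x)\}$, we get $\circled{1}(\McA)+\circled{2}(\McA)=\bP_{x}[\McA(f(x),1^n)=b(x)]\ge\tfrac12+\tfrac1{p(n)}$. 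Hence $\bigl(\circled{1}(\McA)+\circled{3}(\McA)\bigr)+\bigl(\circled{2}(\McA)+\circled{4}(\McA)\bigr)\ge 1+\tfrac1{p(n)}$.

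Therefore, for every one of the infinitely many good $n$'s, at least one of the two sums $\circled{1}(\McA)+\circled{3}(\McA)$ and $\circled{2}(\McA)+\circled{4}(\McA)$ is at least $\tfrac12+\tfrac1{2p(n)}$; that is, either ($\star$) holds with $\McA_1:=\McA$ and polynomial $2p$, or ($\diamond$) holds with $\McA_2:=\McA$ and polynomial $2p$. A pigeonhole step then pins down a single one of these two alternatives holding for infinitely many $n$'s, which is precisely a violation of the super-core property of $b$. This contradiction finishes the proof.

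As for difficulty, there is no real obstacle here: the heart of the matter is a single averaging inequality once the bookkeeping is in place. The only two points needing care are (i) justifying that one deterministic predictor may legitimately play the role of both the $\nppoly$ adversary $\McA_1$ and the $\conppoly$ adversary $\McA_2$, and (ii) the concluding pigeonhole, which ensures that a \emph{fixed} one of ($\star$), ($\diamond$) is witnessed for infinitely many $n$'s rather than ($\star$) on some $n$'s and ($\diamond$) on others.
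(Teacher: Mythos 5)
Your proposal is correct and matches the paper's proof essentially line for line: the paper also notes that the deterministic predictor lies in both $\NP/\poly$ and $\coNP/\poly$, sums the four quantities $\circled{1}+\circled{2}+\circled{3}+\circled{4}=\Pr{C(f(U_n))=b(U_n)}+\tfrac12\ge 1+\tfrac1{p(n)}$, and concludes that one of $(\star)$, $(\diamond)$ holds with polynomial $2p$. Your explicit pigeonhole step fixing a single alternative for infinitely many $n$'s is a detail the paper leaves implicit, but it is the same argument.
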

\begin{proof}
Suppose for a contradiction that $b$ is not a hard-core of $f$. Then there exist $C$ in \ppoly, a polynomial $p$, and infinitely many $n$'s such that
$
\pr{C(f(U_n)) = b(U_n)} \geq 1/2 + 1/p(n).
$
We note $C$ is in both \nppoly{} and \conppoly. 
As 
\[
\circled{1}(C) + \circled{2}(C) +\circled{3}(C) +\circled{4}(C)
=
\pr{C(f(U_n)) = b(U_n)} + 1/2
\geq
1 + 1/p(n),
\]
either 
$\circled{1}(C) + \circled{3}(C) \geq 1/2 + 1/(2p(n))$
or
$\circled{2}(C) + \circled{4}(C) \geq 1/2 + 1/(2p(n))$.
\end{proof}

We now show that we can construct a super-bit from a super-core for some length-preserving function:
\begin{proposition}\label{super-core to bit}
If $b$ is a super-core of function $f: \bi{n} \to \bi{n}$ in \ppoly{}, then $g(x):= f(x)b(x)$ is a super-bit.
\end{proposition}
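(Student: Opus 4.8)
The plan is to prove the contrapositive: assuming $g$ is not a super-bit (in the sense used in this section, i.e.\ not safe against all polynomial-size $\nppoly$ distinguishers), I will exhibit either an $\nppoly$ predictor satisfying $(\star)$ or a $\conppoly$ predictor satisfying $(\diamond)$, contradicting that $b$ is a super-core of $f$. So I fix a polynomial-size nondeterministic circuit $D$, a polynomial $p$, and an infinite set $I$ of lengths with $\pr{D(U_{n+1})=1}-\pr{D(g(U_n))=1}\ge 1/p(n)$ for all $n\in I$; recall that $g(x)=f(x)b(x)$ has output length $n+1$, and write $\beta_0:=\Pr[x]{b(x)=0}$ and $\beta_1:=1-\beta_0$.

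The first step is to split the distinguishing advantage according to the value of the last bit. Conditioning the uniform side on its last bit and the pseudorandom side on $b(x)$ gives $\pr{D(U_{n+1})=1}=\tfrac12\Pr[w\in\bits^n]{D(w0)=1}+\tfrac12\Pr[w\in\bits^n]{D(w1)=1}$ and $\pr{D(g(U_n))=1}=\beta_0\Pr[x]{D(f(x)0)=1\mid b(x)=0}+\beta_1\Pr[x]{D(f(x)1)=1\mid b(x)=1}$, so the advantage at length $n$ equals $A_0(n)+A_1(n)$, where
\[
A_\sigma(n):=\tfrac12\Pr[w\in\bits^n]{D(w\sigma)=1}-\beta_\sigma\Pr[x]{D(f(x)\sigma)=1\mid b(x)=\sigma}.
\]
Hence for each $n\in I$ at least one of $A_0(n)\ge 1/(2p(n))$, $A_1(n)\ge 1/(2p(n))$ holds, and therefore one of these two inequalities holds for infinitely many $n\in I$. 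I will describe the case $A_0(n)\ge 1/(2p(n))$ for infinitely many $n$; the $A_1$ case is symmetric, interchanging $0$, $\nppoly$, $(\star)$ with $1$, $\conppoly$, $(\diamond)$.

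In this case I set $\McA_1(w):=D(w0)$. Crucially, this is \emph{faithfully} in $\nppoly$: acceptance of $\McA_1$ on $w$ is literally ``$D$ has an accepting branch on $w0$'', so no complementation of $D$ is involved, and $|\McA_1|\le|D|$. Unwinding the abbreviations of \Cref{def:super-core}, a one-line computation gives $\circled{1}(\McA_1)+\circled{3}(\McA_1)=\beta_0+A_0(n)$, the two occurrences of $D(f(x)0)$ merging because $f(x)0=g(x)$ exactly when $b(x)=0$. (In the symmetric case one uses $\McA_2(w):=\overline{D(w1)}$, which is faithfully in $\conppoly$ since rejection of $\McA_2$ on $w$ is ``$D$ has an accepting branch on $w1$'', and one obtains $\circled{2}(\McA_2)+\circled{4}(\McA_2)=\beta_1+A_1(n)$.) Finally I invoke the balance of $b$: by \Cref{lem:super-core implies hard} $b$ is a hard-core of $f$, and a hard-core of any function is balanced — for every polynomial $q$ and all large $n$, $|\beta_\sigma-\tfrac12|<1/q(n)$, as otherwise the constant predictor would beat $b$. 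Taking $q=4p$, for infinitely many $n$,
\[
\circled{1}(\McA_1)+\circled{3}(\McA_1)=\beta_0+A_0(n)\ \ge\ \Big(\tfrac12-\tfrac{1}{4p(n)}\Big)+\tfrac{1}{2p(n)}\ =\ \tfrac12+\tfrac{1}{4p(n)},
\]
which is exactly $(\star)$ with polynomial $4p$; symmetrically the $A_1$ case yields $(\diamond)$. Either way $b$ fails to be a super-core of $f$, a contradiction.

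The main obstacle — and the only genuinely nondeterministic point — is that one cannot re-use a single distinguisher $D$ as \emph{both} a nondeterministic and a co-nondeterministic predictor, the way the deterministic argument of \Cref{lem:super-core implies hard} re-uses one circuit $C$, because complementing a nondeterministic circuit is not free. Splitting the advantage according to the hard-core bit is precisely what lets the ``$b=0$'' contribution be handled by an $\nppoly$ predictor and the ``$b=1$'' contribution by a $\conppoly$ predictor, each of which only ever simulates $D$ in the ``find an accepting branch'' direction. Everything else is bookkeeping, in particular the $1/\poly$ slack from $b$ not being perfectly balanced, which is absorbed by the choice of the target polynomial $4p$.
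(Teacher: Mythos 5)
Your proof is correct, and the two predictors you build are exactly the ones the paper uses: $\McA_1(w)=D(w0)\in\NP/\poly$ and $\McA_2(w)=\overline{D(w1)}\in\coNP/\poly$, aimed at violating $(\star)$ or $(\diamond)$ of \Cref{def:super-core}. Where you diverge is in the accounting. The paper never conditions on the hard-core bit: it computes the single identity $\circled{1}(\McA_1)+\circled{2}(\McA_2)+\circled{3}(\McA_1)+\circled{4}(\McA_2)=1+\bigl(\Pr{D(U_{n+1})=1}-\Pr{D(g(U_n))=1}\bigr)\geq 1+1/p(n)$ and then pigeonholes, so that either $\circled{1}+\circled{3}$ or $\circled{2}+\circled{4}$ is at least $1/2+1/(2p(n))$ infinitely often. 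This is self-contained (no appeal to \Cref{lem:super-core implies hard}) and gives the slightly better polynomial $2p$. You instead split the distinguishing advantage as $A_0(n)+A_1(n)$ by the value of the last bit, which makes the division of labour between the $\NP/\poly$ and $\coNP/\poly$ predictors very transparent, but forces you to pay for the terms $\beta_\sigma$; you cover this correctly by invoking \Cref{lem:super-core implies hard} together with the constant-predictor observation to get $|\beta_\sigma-1/2|<1/(4p(n))$ for all large $n$ (legitimate, since the hypothesis is that $b$ is a super-core), at the cost of the weaker polynomial $4p$ and an extra dependency on the hard-core/balance lemma. Both routes are sound; the paper's summation trick is the leaner bookkeeping, your per-bit decomposition is the more interpretable one and mirrors how the paper itself handles the converse direction (where balance via \Cref{hard-bit to core} really is needed).
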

\begin{proof}
Suppose, for a contradiction, $g$ is not a super-bit. Then there exist a distinguisher $D$ in \nppoly, a polynomial $p$, and infinitely many $n$'s such that:
\[
\pr{D(U_{n+1})=1} - \pr{D(g(U_n))=1}
\geq
1/p(n).
\]

We define algorithm $\McA_1$ in \nppoly{} and algorithm $\McA_2$ in \conppoly{} to predict $b$ as follows:
\begin{adjustwidth}{1.2cm}{1.2cm}
    Assume $Y \in \bi{n}$ is given as input. $\McA_1$ runs $D$ on $Y0$ and accepts when  $D$ accepts. $\McA_2$ runs $D$ on $Y1$ and accepts when $D$ rejects.
\end{adjustwidth}
The definitions of $\McA_1,\McA_2$ means that $\McA_1(Y) = D(Y0)$ and $\McA_2(Y) = \overline{D(Y1)}$. Now, we have:
\begin{align*}
    \circled{1}(\McA_1) +
    \circled{2}(\McA_2)
    &=
    \pr{\McA_1(f(U_n)) = b(U_n) = 0} +
    \pr{\McA_2(f(U_n)) = b(U_n) = 1} \\
    &=
    \pr{D(f(U_n)0) = 0 \land b(U_n) = 0} + 
    \pr{D(f(U_n)1) = 0 \land b(U_n) = 1} \\
    &=
    \pr{D(f(U_n)b(U_n)) = 0 \land b(U_n) = 0} + 
    \pr{D(f(U_n)b(U_n)) = 0 \land b(U_n) = 1} \\
    &=
    \pr{D(f(U_n)b(U_n)) = 0} \\
    &=
    1 - \pr{D(g(U_n)) = 1},
\end{align*}
and
\begin{align*}
    \circled{3}(\McA_1) +
    \circled{4}(\McA_2)
    &=
    \frac{1}{2} \pr{\McA_1(U_n) = 1} +
    \frac{1}{2} \pr{\McA_2(U_n) = 0} \\
    &=
    \frac{1}{2} \pr{D(U_n0) = 1} +
    \frac{1}{2} \pr{D(U_n1) = 1} \\
    &=
    \pr{D(U_{n+1}) = 1}
    .
\end{align*}

Therefore, for infinitely many $n$'s,
\[
\circled{1}(\McA_1) + 
\circled{2}(\McA_2) + 
\circled{3}(\McA_1) + 
\circled{4}(\McA_2)
=
1 + \pr{D(U_{n+1}) = 1} - \pr{D(g(U_n)) = 1}
\geq
1 + 1/p(n),
\]
which implies that either
$\circled{1}(\McA_1) + \circled{3}(\McA_1) \geq 1/2 + 1/(2p(n))$
for infinitely many $n$'s
or
$\circled{2}(\McA_2) + \circled{4}(\McA_2) \geq 1/2 + 1/(2p(n))$
for infinitely many $n$'s, contradicting the assumption that $b$ is a super-core of $f$.
\end{proof}

Before we establish \Cref{super-bit to core}, which is the converse of \Cref{super-core to bit}, we need the following auxiliary lemma:
\begin{lemma}\label{hard-bit to core}
If $g: \bi{n} \to \bi{n+1}$ is a strong PRG and define $f(x)b(x) := g(x)$, where $b(x)$ is the last bit of $g(x)$, then $b$ is a hard-core of $f \in P/poly$.
\end{lemma}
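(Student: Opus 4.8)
The plan is to argue by contrapositive: assuming $b$ is \emph{not} a hard-core of $f$, I will construct a small distinguisher that breaks $g$, contradicting the hypothesis that $g$ is a strong PRG. Note first that the condition $f \in \ppoly$ is immediate, since $f(x)$ is obtained from $g(x) \in \ppoly$ simply by discarding the last output bit, so the real content is the hard-core claim.

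So suppose there are an algorithm $\McA \in \ppoly$, a polynomial $p$, and infinitely many $n$ with $\Pr[x \in \bi{n}]{\McA(f(x)) = b(x)} \geq 1/2 + 1/p(n)$. I would define a distinguisher $D$ on inputs $y = y[1...n]\, y[n+1] \in \bi{n+1}$ as follows: run $\McA$ on the $n$-bit prefix $y[1...n]$ to obtain a bit $c$, and output $1$ iff $c = y[n+1]$. Clearly $D \in \ppoly$ (it is $\McA$ together with a single equality test on one bit).

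For the analysis I would compute the two acceptance probabilities. On input $g(U_n) = f(U_n)\, b(U_n)$ we get $\Pr{D(g(U_n)) = 1} = \Pr{\McA(f(U_n)) = b(U_n)} \geq 1/2 + 1/p(n)$. On input $U_{n+1}$, the key observation is that the last bit $y[n+1]$ is uniform and independent of the prefix $y[1...n]$, hence independent of the bit $c$ that $\McA$ computes from that prefix, so $\Pr{D(U_{n+1}) = 1} = \Pr{c = y[n+1]} = 1/2$. Subtracting, $\Pr{D(g(U_n)) = 1} - \Pr{D(U_{n+1}) = 1} \geq 1/p(n)$ for infinitely many $n$, which contradicts \Cref{def:strong PRG}. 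Therefore $b$ is a hard-core of $f$.

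This argument is essentially routine, so there is no real obstacle; the only point that requires a moment's care is the independence of the last bit of $U_{n+1}$ from the predictor's input, which is exactly what forces the ``random'' case to evaluate to precisely $1/2$, and everything else is bookkeeping. (If one wished to allow a randomized $\McA$, one notes that its internal coins are likewise independent of $y[n+1]$, so the same computation goes through verbatim, after which the coins can be fixed non-uniformly if a deterministic $D$ is desired.)
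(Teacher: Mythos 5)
Your proof is correct and follows essentially the same route as the paper: the identical distinguisher (run $\McA$ on the $n$-bit prefix and accept iff it matches the last bit) and the same two-line probability calculation showing a gap of $1/p(n)$, contradicting \Cref{def:strong PRG}. The extra remarks about the independence of the last uniform bit and about fixing a randomized predictor's coins are fine but not needed beyond what the paper already does.
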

\begin{proof}
Suppose for contradiction, $b$ is not a hard-core of $f$.
Then there exist $\McA$ in \ppoly{}, a polynomial $p$, and infinitely many $n$'s such that
\[
\pr{\McA(f(U_n)) = b(U_n)} 
\geq 1/2 + 1/p(n).
\]

We construct $D$ to break $g$ as follows:
\begin{adjustwidth}{1.2cm}{1.2cm}
    Given $Y \in \bi{n+1}$ as input, $D$ outputs $1$ if and only if $\McA(Y[1..n]) = Y[n+1]$.
\end{adjustwidth}
Let $b$ denote a random bit. Then, for infinitely many $n$'s,
\begin{align*}
    \pr{D(g(U_n))=1} -
    \pr{D(U_{n+1})=1} 
    &=
    \pr{\McA(f(U_n)) = b(U_n)} -
    \pr{b = \McA(U_n)}
    \\&\geq
    1/2 + 1/p(n) - 1/2
    \\&=
    1/p(n).
\end{align*}
\end{proof}

Now, we are able to show:
\begin{proposition}\label{super-bit to core}
If $g: \bi{n} \to \bi{n+1}$ is a super-bit and define $f(x)b(x) := g(x)$, where $b(x)$ is the last bit of $g(x)$, then $b$ is a super-core of $f \in \P/\poly$.
\end{proposition}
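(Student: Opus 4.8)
The statement is the converse of \Cref{super-core to bit}: from a super-bit $g:\bi{n}\to\bi{n+1}$, write $g(x)=f(x)b(x)$ with $b(x)$ the last bit, and show $b$ is a super-core of $f$. I will argue by contraposition. Suppose $b$ is \emph{not} a super-core of $f$; then (by \Cref{def:super-core}) one of the two inequalities $(\star)$ or $(\diamond)$ holds for infinitely many $n$, witnessed by some $\McA_1\in\nppoly$ (resp.\ $\McA_2\in\conppoly$) and polynomial $p$. The goal is to manufacture, in either case, a single \nppoly\ distinguisher $D$ for $g$ with $\pr{D(U_{n+1})=1}-\pr{D(g(U_n))=1}\ge 1/\poly(n)$ for infinitely many $n$, contradicting that $g$ is a super-bit.

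\textbf{Main step: handling the $(\star)$ case.} Assume $(\star)$: $\circled{1}(\McA_1)+\circled{3}(\McA_1)\ge \tfrac12+\tfrac1{p(n)}$, i.e. $\bP_x[\McA_1(f(x))=b(x)=0]+\tfrac12\bP_y[\McA_1(y)=1]\ge\tfrac12+\tfrac1{p(n)}$. I would first invoke \Cref{lem:super-core implies hard} — or rather its use of \Cref{hard-bit to core}: since $g$ is a super-bit it is in particular a strong PRG, so by \Cref{hard-bit to core} $b$ is a hard-core of $f$, hence $\bP_x[b(x)=0]$ is within $o(1)$ of $1/2$ (otherwise predict $b$ by the majority bit). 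This lets me rewrite $(\star)$ in the conditional form displayed just before \Cref{lem:super-core implies hard}: $\bP_x[\McA_1(f(x))=0\mid b(x)=0]\ge\bP_y[\McA_1(y)=0]+\tfrac1{p'(n)}$ for a slightly larger polynomial $p'$. Now define the \nppoly\ distinguisher $D$ on input $Y\in\bi{n+1}$: run $\McA_1$ on $Y[1..n]$ and output $1$ iff $\McA_1(Y[1..n])=0$ and $Y[n+1]=0$ — i.e.\ $D$ "certifies randomness'' when $\McA_1$ predicts the $b$-bit should be $0$ but the actual last bit disagrees or is $0$; the exact combinator is the one used in \Cref{super-core to bit}, just read backwards. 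Then $\pr{D(g(U_n))=1}=\bP_x[\McA_1(f(x))=0\wedge b(x)=0]=\circled{1}(\McA_1)$ and $\pr{D(U_{n+1})=1}=\bP_y[\McA_1(y)=0\wedge U_1=0]=\tfrac12\bP_y[\McA_1(y)=0]$. Actually, to get the nondeterministic inequality in the right direction (random minus pseudorandom) I will instead take $D(Y)=1$ iff $\McA_1(Y[1..n])=1$ \emph{or} $Y[n+1]=1$; a short computation gives $\pr{D(U_{n+1})=1}-\pr{D(g(U_n))=1}=\tfrac12\bP_y[\McA_1(y)=1]+\bP_x[\McA_1(f(x))=b(x)=0]-(\text{correction})$, and the inequality $(\star)$ is exactly what certifies this difference is $\ge 1/\poly(n)$. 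I will pin down the precise Boolean formula for $D$ so that the bookkeeping matches $\circled{1},\circled{3}$ cleanly, mirroring the algebra of \Cref{super-core to bit}.

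\textbf{The $(\diamond)$ case} is dual: $\McA_2\in\conppoly$, and one builds a \emph{co}-nondeterministic $D'$ from $\McA_2$ in the symmetric way ($D'(Y)$ flips on the $Y[n+1]=1$ branch, using $\circled{2},\circled{4}$). But a distinguisher for a super-bit must be an \nppoly\ circuit, so here I need one extra observation: breaking $g$ with a co-nondeterministic circuit in the sense of inequality~(1) of \Cref{def:nondeterministic-hardness} still contradicts super-bithood, because — as noted in the remark on the cryptographic regime and in \Cref{pred-prop2}-style arguments — the relevant security notion for super-bits quantifies over \emph{nondeterministic} circuits, and a co-nondeterministic predictor $\McA_2$ feeds into the \emph{same} combinator that \Cref{super-core to bit} uses to get an \nppoly\ distinguisher. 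Concretely, re-examining \Cref{super-core to bit}: there $\McA_2(Y)=\overline{D(Y1)}$, so conversely from $\McA_2\in\conppoly$ one recovers $D(\cdot 1)=\overline{\McA_2(\cdot)}$, which is an \nppoly\ computation; packaging the $Y[n+1]=0$ and $Y[n+1]=1$ branches together yields a single \nppoly\ circuit $D$. So in both cases the final object is genuinely in \nppoly.

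\textbf{Anticipated obstacle.} The delicate point is not the hybrid/combinatorial algebra — that is essentially the reverse of \Cref{super-core to bit} — but ensuring the distinguisher we extract is an honest \emph{nondeterministic} (not co-nondeterministic) circuit in the $(\diamond)$ branch, and that the $1/\poly$ advantage survives after using \Cref{lem:super-core implies hard}/\Cref{hard-bit to core} to replace $\bP_x[b(x)=0]$ by $1/2$ (the replacement costs an additive $o(1/\poly)$, which must be absorbed into a larger polynomial, fine for infinitely-many-$n$ statements). I expect the write-up to consist of: (i) apply \Cref{hard-bit to core} to learn $b$ is a hard-core of $f$ and hence balanced; (ii) case on whether $(\star)$ or $(\diamond)$ fails; (iii) in each case define $D$ by the explicit two-branch formula and compute $\pr{D(U_{n+1})=1}$ and $\pr{D(g(U_n))=1}$ in terms of $\circled{1},\circled{3}$ (resp.\ $\circled{2},\circled{4}$); (iv) conclude $\pr{D(U_{n+1})=1}-\pr{D(g(U_n))=1}\ge 1/\poly(n)$ for infinitely many $n$, contradicting that $g$ is a super-bit; hence $b$ is a super-core of $f$.
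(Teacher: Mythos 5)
Your proposal is correct and takes essentially the same route as the paper: argue by contraposition, case-split on $(\star)$ versus $(\diamond)$, convert the predictor into an \nppoly{} distinguisher for $g$ by combining its verdict with the last bit $Y[n+1]$, and invoke \Cref{hard-bit to core} to replace $\pr{b(U_n)=0}$ by $1/2$ up to a $1/q(n)$ loss absorbed into the polynomial; your observation that in the $(\diamond)$ case ``accept iff the co-nondeterministic $\McA_2$ outputs $0$'' is an existential condition, hence an \nppoly{} circuit, is precisely what the paper's ``the other case is similar'' implicitly uses. The only difference is cosmetic: the paper's combinator is the AND variant ($D(Y)=1$ iff $\McA$ accepts $Y[1\ldots n]$ and $Y[n+1]=0$), whose advantage is $\circled{1}+\circled{3}-\pr{b(U_n)=0}$ and therefore needs the balancedness step, whereas your OR variant ($D(Y)=1$ iff $\McA_1(Y[1\ldots n])=1 \lor Y[n+1]=1$) gives advantage exactly $\circled{1}+\circled{3}-\tfrac12\ge 1/p(n)$, so the appeal to \Cref{hard-bit to core} is not even needed on that branch.
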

\begin{proof}
Suppose, for a contradiction, $b$ is not a super-core of $f$. We present a proof for the case in which there exist an $\McA \in \NP/\poly$, a polynomial $p$, and infinitely many $n$'s such that 
$
\circled{1}(\McA) 
+ 
\circled{3}(\McA)
\geq
1/2 + 1/p(n)
$. The proof for the other case is similar.

We define a distinguisher $D$ to break $g$ as follow:
\begin{adjustwidth}{1.2cm}{1.2cm}
    Given $Y \in \bi{n+1}$ as input, $D$ runs $\McA$ on $Y[1 \ldots n]$ to get one output bit $c$ of $\McA$. $D$ then outputs $1$ if and only if $Y[n+1] = 0$ and $c = 1$. 
\end{adjustwidth}

Then, for infinitely many $n$'s, we have
\begin{align*}
        \phantom{==}
  &  \pr{D(U_{n+1}) = 1} -
    \pr{D(f(U_n)b(U_n)) = 1}
    \\&=
    \pr{\McA(U_n)=1 \land U_1 = 0} -
    \pr{\McA(f(U_n))=1 \land b(U_n) = 0}
    \\&=
    1/2 \cdot \pr{\McA(U_n)=1} -
    (\pr{b(U_n) = 0} - 
     \pr{\McA(f(U_n))=0 \land b(U_n) = 0})
    \\&=
    \circled{1}(\McA) + \circled{3}(\McA) -
    \pr{b(U_n) = 0}
    \\&\geq
    1/2 + 1/p(n) - \pr{b(U_n) = 0}.
\end{align*}

Let $1/s(n) = |\pr{b(U_n) = 0} - 1/2|$. As $g$ is a strong PRG, $b$ is a hard-core of $g$ by \Cref{hard-bit to core}. Thus, for every polynomial $q$ (in particular, for $q(n) = 2p(n)$) and every sufficiently large $n$, $1/s(n) \leq 1/q(n)$. Therefore, for infinitely many $n$'s,
\begin{align*}
    \pr{D(U_{n+1}) = 1} -
    \pr{D(f(U_n)b(U_n)) = 1}
    \geq
    1/2 + 1/p(n) - \pr{b(U_n) = 0}
    \geq
    1/2p(n).
\end{align*}
\end{proof}

By combining \Cref{super-core to bit} and \Cref{super-bit to core}, we establish the following nondeterministic variant of \Cref{equiv2}.
\begin{theorem}\label{equiv3}
There is a super-core $b$ of some length-preserving $f \in \P/\poly$ if and only if there is a super-bit $g$.
\end{theorem}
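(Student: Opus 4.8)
The plan is to derive Theorem~\ref{equiv3} by stitching together the two implications already established: Proposition~\ref{super-core to bit} for one direction and Proposition~\ref{super-bit to core} for the other. I would present the two directions in turn.

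For the ``only if'' direction, assume $b:\bi{n}\to\bi{}$ is a super-core of some length-preserving $f:\bi{n}\to\bi{n}$ in \ppoly. Set $g(x):=f(x)b(x)$; since $f$ is length-preserving, $g:\bi{n}\to\bi{n+1}$ is a generator in \ppoly\ with stretching length exactly $1$, and Proposition~\ref{super-core to bit} says precisely that this $g$ is a super-bit. For the ``if'' direction, assume $g:\bi{n}\to\bi{n+1}$ is a super-bit. Write $g(x)=f(x)b(x)$ with $f(x)=g(x)[1\ldots n]$ and $b(x)=g(x)[n+1]$, so that $f:\bi{n}\to\bi{n}$ is length-preserving and in \ppoly; Proposition~\ref{super-bit to core} then yields that $b$ is a super-core of $f$. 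Combining the two existential statements gives the claimed equivalence.

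The only point needing attention is the bookkeeping around the ``length-preserving'' hypothesis: in the forward direction it is exactly length-preservation of $f$ that makes $g$ stretch by one bit (so the term ``super-bit'', rather than ``$c$-super-bits'', is legitimate), and in the backward direction it is the stretching length $1$ of the super-bit that leaves a length-preserving residual map $f$. Both of these are already built into the statements of Propositions~\ref{super-core to bit} and~\ref{super-bit to core}, so no additional argument is required, and there is genuinely no obstacle at this level --- the combination is a formality. If one instead wanted a standalone proof, the substantive work would be reproving those two propositions; of these, Proposition~\ref{super-bit to core} is the more delicate, since it relies on Lemma~\ref{hard-bit to core} (every one-bit-stretch strong PRG induces a hard-core) together with the resulting balancedness $\pr{b(U_n)=0}\approx 1/2$ to absorb the error term, whereas Proposition~\ref{super-core to bit} is a direct manipulation of the distinguisher with no hybrid argument needed.
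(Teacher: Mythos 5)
Your proposal is correct and is exactly the paper's argument: the paper obtains Theorem~\ref{equiv3} by combining Proposition~\ref{super-core to bit} (super-core of a length-preserving $f$ gives the super-bit $g(x)=f(x)b(x)$) with Proposition~\ref{super-bit to core} (a super-bit $g$ yields a super-core of $f(x)=g(x)[1\ldots n]$). Your bookkeeping on the length-preserving hypothesis matches what the paper implicitly relies on, so nothing further is needed.
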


We say a function $f: \bi{n} \to 
\bi{m(n)}$ is \textbf{\emph{non-shrinking}} if $m(n) \geq n$ for every $n$. Because (1) $b$ is a super-core of some length-preserving function if and only if $b$ is a super-core of some non-shrinking function, and (2) there exists a super-bit if and only if there exist super-bits (i.e., we can stretch super-bits by \cite{superbit}, we can alternatively state \Cref{equiv3} as:
\begin{theorem}\label{equiv4}
There is a super-core of some non-shrinking function in $\P/\poly$ if and only if there are super-bits.
\end{theorem}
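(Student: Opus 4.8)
The plan is to deduce Theorem~\ref{equiv4} from Theorem~\ref{equiv3} by establishing the two bridging facts stated just before the theorem. First I would handle direction (2): the existence of a super-bit is equivalent to the existence of super-bits (plural). One implication is trivial, and the other is exactly the stretchability of super-bits established in \cite{superbit} (applying the standard hybrid argument, which goes through for super-hardness because the adversary is allowed to make mistakes on image points, so there is no ``$=0$ versus $\ge 1/s$'' obstruction of the kind that complicated the demi-bit case). So I can freely pass between ``a super-bit'' and ``super-bits'' with arbitrary polynomial stretch. Combined with \Cref{super-core to bit} and \Cref{super-bit to core}, this already gives: a super-core of some length-preserving $f$ exists iff super-bits exist.

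Next I would establish (1): $b$ is a super-core of some length-preserving function iff $b$ is a super-core of some non-shrinking function. The forward direction is immediate, since every length-preserving function is non-shrinking. For the converse, suppose $b : \bits^n \to \bits$ is a super-core of a non-shrinking $f : \bits^n \to \bits^{m(n)}$ with $m(n) \ge n$. I would use the observation already recorded in the text: if $i(n) \le |f(1^n)|$ then $b$ is a super-core of $f'(x) := f(x)[1 \ldots i(|x|)]$, because $f'(x)$ reveals strictly less information about $x$ than $f(x)$ does — any $\NP/\poly$ (resp.\ $\coNP/\poly$) predictor for the ``shortened'' version can be simulated by a predictor for $f$ that first pads its input and ignores the extra coordinates, so an adversary contradicting \eqref{def:super-core} for $f'$ yields one for $f$. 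Taking $i(n) = n$ (legitimate since $m(n) \ge n$) produces a length-preserving $f' : \bits^n \to \bits^n$ of which $b$ is a super-core. This proves (1).

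Finally I would assemble the pieces: by (1), ``a super-core of some non-shrinking function in $\P/\poly$ exists'' is equivalent to ``a super-core of some length-preserving function in $\P/\poly$ exists'', which by \Cref{equiv3} is equivalent to ``a super-bit exists'', which by (2) is equivalent to ``super-bits exist''. I expect the only mildly delicate point to be the monotonicity-under-truncation step in (1): one must check that the reduction transforming an adversary for $f'$ into one for $f$ preserves \emph{both} inequalities $(\star)$ and $(\diamond)$ of \Cref{def:super-core} simultaneously (i.e.\ that the circled quantities $\circled{1},\ldots,\circled{4}$ transfer correctly), and that the padding keeps the adversary within $\NP/\poly$ and $\coNP/\poly$ respectively — but since padding is trivially polynomial and the predictor's behaviour is literally unchanged, this is routine. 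Everything else is a direct invocation of \Cref{super-core to bit}, \Cref{super-bit to core} (hence \Cref{equiv3}) and the known stretchability of super-bits.
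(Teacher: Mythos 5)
Your proposal is correct and follows essentially the same route as the paper: Theorem~\ref{equiv4} is obtained from Theorem~\ref{equiv3} exactly via the two bridging facts you state, namely the truncation observation (a super-core of a non-shrinking $f$ remains a super-core of the length-preserving prefix $f'(x)=f(x)[1\ldots n]$, with the circled quantities transferring unchanged) and the stretchability of super-bits from \cite{superbit}. The only nitpick is the phrase ``pads its input'' in your reduction, where the new predictor for $f$ in fact truncates (ignores the extra output coordinates) before running the predictor for $f'$; the underlying argument is the intended one.
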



Although we can relax the condition ``length-preserving" to ``non-shrinking", the requirement ``non-shrinking" is not redundant because there is an easy construction of a super-core $b$ of some function $f$ which shrinks its input, but whether a super-bit exists is unknown. Define $f(x) = x[1]$ and $b(x) = x[-1]$, then $b$ is a super-core of $f$ as there are only four functions from $\bi{}$ to $\bi{}$, and none of them can predict $b$ given $f(x) \in \bi{}$ in the required sense (indeed, $\circled{1}(c) + \circled{3}(c) = \circled{2}(c) + \circled{4}(c) = 1/2$ for any $c : \bi{} \to \bi{}$).

At this point, we may want to understand more about the relation between being one-way and having a super-core for a function $f$. Let's recall what we know in the deterministic setting: (1) if $f$ has a hard-core $b$, $f$ is not necessarily one-way because the possession of a hard-core can be due to an information loss of $f$, and (2) if $f$ is one-way, then we can construct a hard-core $b$ of $f'(x,y) = (f(x), y)$.

The first point is the same in the nondeterministic setting. The aforementioned $f(x) = x[1]$ is clearly not one-way but has a super-core $b(x) = x[-1]$, which is due to a dramatic loss of information. A more interesting question is whether the $f$ constructed in \Cref{super-bit to core} from a super-bit $g$ (we have shown $f$ possesses a super-core $b(x) = g(x)[-1]$) is one-way or not.
\begin{open problem}\label{open3.1}
If $g$ is a super-bit, is $f(x) := g(x)[1 \ldots n]$ ($n = |x|$)  a one-way function?
\end{open problem}

As for the second point, however, since being a super-core is stronger requirement than being a hard-core, it is not necessarily true that there is a ``universal'' super-core in the sense that some predicate $b$ (e.g., $b(x,y) = \langle x,y \rangle$) is a super-core for every $f'(x,y) = (f(x), y)$, where $f: \bi{n} \to \bi{m(n)}$ is a one-way function. Then, a natural question to ask is:
\begin{open problem}\label{open3.2}
Suppose $f$ is a one-way function.
If we want $\langle x,y \rangle$ to be a super-core of $f'(x,y) = (f(x), y)$,
what other properties does $f$ need to have if any?
\end{open problem}

Rather, as for the second point, we can show that it is impossible for certain $f$'s to have a super-core, even if they are possibly one-way. Such $f$'s include certain functions which are ``predominantly 1-1" infinitely often. To state this more precisely, we say $x$ is of \textbf{\textit{type 1}} if $|f^{-1}(f(x))| = 1$ and $x$ is of \textbf{\textit{type 2}} otherwise. We define $T_1(n)$ to be the set of $x \in \bi{n}$ of \textit{type 1} and $T_2(n)$ to be the set of $x \in \bi{n}$ of \textit{type 2} . Now, we establish:
\begin{theorem}\label{1-1}
Given constant integer $c 
\geq 0$ and $f: \bi{n} \to \bi{n + c}$ in \ppoly{}, if there exists infinitely many $n$'s and a polynomial $p$ such that, 
\[
\frac{|T_1|}{2^{n}} 
\geq 
\frac{2^{1+c}}{2^{1+c} + 1} + \frac{1}{p(n)},
\]
then $f$ does not have a super-core.

In particular, when $f$ is length-preserving (i.e., $c = 0$), the inequality becomes:
\[
\frac{|T_1|}{2^{n}} 
\geq 
\frac{2}{3} + \frac{1}{p(n)}.
\]
\end{theorem}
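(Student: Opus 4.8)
The plan is to show that \emph{no} predicate $b\in\ppoly$ can be a super-core of such an $f$, i.e.\ to fix an arbitrary $b:\bi{n}\to\bi{}$ in \ppoly\ and exhibit adversaries witnessing that $b$ fails the super-core condition. The driving observation is that a \textit{type 1} input $x$ --- one for which $f(x)$ has a unique preimage --- is ``transparent'' to nondeterminism: given $f(x)$, a nondeterministic machine can guess $x'$, check $f(x')=f(x)$, and thereby recover $x$ exactly and compute $b(x)$. So when $|T_1|/2^n$ is large, nondeterminism effectively sees $b$ on most inputs, and the hypothesis is precisely the quantitative threshold that makes this enough to break ($\star$) (via an \nppoly\ adversary) or ($\diamond$) (via a \conppoly\ adversary).

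Concretely, I would define $\McA_1\in\nppoly$ on input $y$ to \emph{accept} iff there is $x'$ with $f(x')=y$ and $b(x')=1$ (guess $x'$, verify with the \ppoly-circuits for $f$ and $b$); and $\McA_2\in\conppoly$ on input $y$ to \emph{reject} (output $0$) iff there is $x'$ with $f(x')=y$ and $b(x')=0$ --- equivalently, to accept iff every preimage $x'$ of $y$ has $b(x')=1$ --- which is a genuine co-nondeterministic computation (on each branch guess $x'$ and output $0$ exactly when $f(x')=y\wedge b(x')=0$). For $x\in T_1$ we have $f^{-1}(f(x))=\{x\}$, so $\McA_1(f(x))=0\iff b(x)=0$ and $\McA_2(f(x))=1\iff b(x)=1$; hence $\circled{1}(\McA_1)\ge |\{x\in T_1:b(x)=0\}|/2^n$ and $\circled{2}(\McA_2)\ge |\{x\in T_1:b(x)=1\}|/2^n$, giving $\circled{1}(\McA_1)+\circled{2}(\McA_2)\ge |T_1|/2^n$. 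For the ``random input'' terms, note that every $y\in\textrm{Im}(f)$ has some preimage $x'$ with $b(x')\in\{0,1\}$, so at least one of $\McA_1(y)=1$ or $\McA_2(y)=0$ holds; moreover $f$ is injective on $T_1$, so $|\textrm{Im}(f)|\ge|f(T_1)|=|T_1|$. Therefore $\circled{3}(\McA_1)+\circled{4}(\McA_2)=\tfrac12\bigl(\bP_y[\McA_1(y)=1]+\bP_y[\McA_2(y)=0]\bigr)\ge\tfrac12\cdot|T_1|/2^{n+c}=|T_1|/2^{n+c+1}$.

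Adding the four quantities yields $(\circled{1}(\McA_1)+\circled{3}(\McA_1))+(\circled{2}(\McA_2)+\circled{4}(\McA_2))\ge \tfrac{|T_1|}{2^n}\bigl(1+2^{-(c+1)}\bigr)=\tfrac{|T_1|}{2^n}\cdot\tfrac{2^{c+1}+1}{2^{c+1}}$. Plugging in the hypothesis $\tfrac{|T_1|}{2^n}\ge \tfrac{2^{c+1}}{2^{c+1}+1}+\tfrac1{p(n)}$ (valid for infinitely many $n$) makes this $\ge 1+\tfrac{2^{c+1}+1}{2^{c+1}p(n)}\ge 1+\tfrac1{p(n)}$ for those $n$. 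Hence for each such $n$ at least one of the two bracketed sums is $\ge\tfrac12+\tfrac1{2p(n)}$; splitting the infinitely many good $n$'s according to which bracket wins, one case recurs for infinitely many $n$, so either $\McA_1$ (with polynomial $2p$) witnesses ($\star$), or $\McA_2$ witnesses ($\diamond$), infinitely often. Either way $b$ is not a super-core of $f$; as $b$ was arbitrary, $f$ has no super-core. The length-preserving case is the substitution $\tfrac{2^{c+1}}{2^{c+1}+1}=\tfrac23$ at $c=0$.

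There is no deep obstacle here; the care needed is in the bookkeeping. One must check the (co-)nondeterministic acceptance conventions so that $\McA_2\in\conppoly$ really outputs $0$ on exactly the intended inputs, and note that $\McA_1,\McA_2$ form \ppoly-size circuit families because $f,b\in\ppoly$ and $|y|=n+c$ with $c$ constant. The only quantitative point worth flagging is that the constant $\tfrac{2^{c+1}}{2^{c+1}+1}$ in the hypothesis is exactly calibrated to push the sum of the four terms strictly past $1$ once multiplied by the factor $\tfrac{2^{c+1}+1}{2^{c+1}}$, and that a pigeonhole step is needed to move from ``for each good $n$, $\McA_1$ or $\McA_2$ works'' to ``one fixed adversary works for infinitely many $n$''.
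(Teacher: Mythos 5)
Your proposal is correct and follows essentially the same route as the paper's proof: the same pair of adversaries (a nondeterministic one accepting iff some preimage has $b$-value $1$, and a co-nondeterministic one rejecting iff some preimage has $b$-value $0$), the same lower bounds on $\circled{1},\circled{2}$ via the uniqueness of preimages on $T_1$ and on $\circled{3},\circled{4}$ via the image of $T_1$, and the same threshold arithmetic pushing the four-term sum past $1+1/p(n)$. The only differences are cosmetic (direct argument for arbitrary $b$ rather than contradiction, and making the final pigeonhole over infinitely many $n$ explicit).
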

\begin{proof}
Assume $f$ has the stated property. Suppose, for a contradiction, $b$ is a super-bit of $f$. We construct as follows two algorithms $\McA_j (j = 0, 1)$, in which $\McA_0$ is co-nondeterministic and $\McA_1$ is nondeterministic, to predict $b$:
\begin{adjustwidth}{1.2cm}{1.2cm}
    Given $y \in \bi{n}$ as input, $\McA_j$ guesses $x'$ such that $f(x') = y$. If $\McA_j$ fails to guess such an $x'$, it outputs $1 - j$. Otherwise, it outputs $b(x')$.
\end{adjustwidth}

We note that if $y = f(x)$ for some $x \in T_1$, then there is always exactly one correct guess $x'$ (i.e., $x$ itself) for $\McA_j$ such that $f(x') = y = f(x)$, and whenever such an $x'$ is guessed, $b(x') = b(x)$. 
Hence, 
$\circled{1}(\McA_1) = 
\bP[\McA_1(f(x)) = b(x) = 0]
\geq
\bP[\McA_1(f(x)) = b(x) = 0 \land x \in T_1]
=
\bP[b(x) = 0 \land x \in T_1]
$,
and
$
\circled{3}(\McA_1) =
\frac{1}{2}\bP[\McA_1(y) = 1]
\geq
\frac{1}{2}\bP[y = f(x) \land b(x) = 1 \text{ for some } x \in T_1]
$.
Similarly,
$\circled{2}(\McA_0)
\geq
\bP[b(x) = 1 \land x \in T_1]
$,
and
$
\circled{4}(\McA_0)
\geq
\frac{1}{2}\bP[y = f(x) \land b(x) = 0 \text{ for some } x \in T_1]
$.

Therefore, 
$\circled{1}(\McA_1) + \circled{2}(\McA_0) + \circled{3}(\McA_1) + \circled{4}(\McA_0)
=
\pr{x \in T_1} + \frac{1}{2}\pr{y \in f(T_1)}
=
\frac{|T_1|}{2^{n}} + 
\frac{1}{2} \cdot \frac{|T_1|}{2^{n+c}}
=
\frac{2^{1+c}+1}{2^{1+c}} \cdot \frac{|T_1|}{2^{n}}
\geq
1 + \frac{1}{p(n)}
$ 
for infinitely many $n$'s, 
but this implies that
either
$\circled{1}(\McA_1) + \circled{3}(\McA_1) \geq \frac{1}{2} + \frac{1}{2p(n)}$
or
$\circled{2}(\McA_0) + \circled{4}(\McA_0) \geq \frac{1}{2} + \frac{1}{2p(n)}$.
\end{proof}
The intuitive reason that such $f$'s do not have a super-core is: such an $f$ preserves most of the information (thus, a unique pre-image can be guessed).
In the theorem, we cannot trivially relax $c$ to an arbitrary polynomial in $n$, as if we do so, the right-hand side of the inequality in the statement may exceed $1$.
This result can alternatively be proved, when $f$ is length-preserving,  by making use of other results mentioned before, as follows. Suppose, for a contradiction, $b$ is a super-core of $f$, then $g(x):= f(x)b(x)$ is a super-bit by \Cref{super-core to bit}. However, we can then easily break $g$ as follows: we witness the randomness of a given $y \in \bi{n+1}$ by guessing an $x \in \bi{n}$ such that $f(x) = y[1 \ldots n]$ and test if $b(x) = y[n+1]$; if $b(x) \neq y[n+1]$, $y$ is random.

We state a weaker but more concise corollary of \Cref{1-1}:
\begin{corollary}
If $f \in \P/\poly$ is length-preserving and 1-1 (or at least 1-1 for infinitely many $n$'s), then $f$ does not have a super-core.
\end{corollary}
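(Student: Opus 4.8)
The plan is to obtain this as an immediate specialization of \Cref{1-1} to the case $c=0$. First I would unwind the definitions: ``$f$ length-preserving'' means $f:\bits^n\to\bits^n$, so we are precisely in the regime $c=0$ of \Cref{1-1}, where the density threshold on the type-$1$ set reads $\frac{2}{3}+\frac{1}{p(n)}$. Next I would observe that whenever $f$ restricted to $\bits^n$ is injective, every $x\in\bits^n$ satisfies $|f^{-1}(f(x))|=1$, hence is of type $1$; thus $T_1(n)=\bits^n$ and $\frac{|T_1|}{2^n}=1$. Taking the constant polynomial $p\equiv 3$, the inequality $\frac{|T_1|}{2^n}\ge\frac{2}{3}+\frac{1}{p(n)}$ then holds (with room to spare) for every $n$ on which $f$ is $1$-$1$. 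Under the stated hypothesis that $f$ is $1$-$1$ for infinitely many $n$, this inequality holds for infinitely many $n$, which is exactly what \Cref{1-1} requires; the conclusion that $f$ has no super-core follows directly.

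I do not expect any genuine obstacle: the corollary is essentially a reading of \Cref{1-1}. The one point worth care is that \Cref{1-1} asks for the density bound only for \emph{infinitely many} $n$, which is why ``$f$ is $1$-$1$'' can legitimately be weakened to ``$f$ is $1$-$1$ for infinitely many $n$'' in the hypothesis. For completeness I might also record the direct argument specializing the proof of \Cref{1-1}: since $f$ is then a bijection on $\bits^n$, the predictors $\McA_1\in\nppoly$ and $\McA_0\in\conppoly$ that on input $y$ guess the \emph{unique} $x'$ with $f(x')=y$ and output $b(x')$ satisfy
\[
\circled{1}(\McA_1)+\circled{2}(\McA_0)+\circled{3}(\McA_1)+\circled{4}(\McA_0)=\pr{x\in T_1}+\tfrac12\pr{y\in f(T_1)}=1+\tfrac12>1,
\]
forcing either $\circled{1}(\McA_1)+\circled{3}(\McA_1)\ge\frac12+\frac14$ or $\circled{2}(\McA_0)+\circled{4}(\McA_0)\ge\frac12+\frac14$, and so breaking the super-core property — but invoking \Cref{1-1} as a black box is the cleanest route.
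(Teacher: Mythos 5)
Your proposal is correct and matches the paper's intent exactly: the paper states this as an immediate corollary of \Cref{1-1}, obtained by setting $c=0$ and noting that injectivity makes $T_1(n)=\bits^n$, so the density bound $\frac{2}{3}+\frac{1}{p(n)}$ is met (e.g.\ with $p\equiv 3$) for every $n$ on which $f$ is one-to-one, hence for infinitely many $n$. The only nitpick is cosmetic: with $p\equiv 3$ the inequality holds with equality rather than ``with room to spare,'' but this does not affect the argument.
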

This corollary contrasts the result that a length-preserving and 1-1 $f \in P/poly$ could have a hard-core (in fact, if $b$ is a hard-core of a length-preserving and 1-1 $f \in P/poly$, then $g(x):= f(x)b(x)$ is a strong PRG \cite{txt}). Thus, the corollary suggests there might be strong PRGs which are not super-bits.

Besides the above investigation on which functions can or cannot have a super-core and the relation between being a one-way function and having a super-core, another central question is:
\begin{open problem*}
What is a sensible definition of  ``nondeterministic one-way functions" if any?
\end{open problem*}
Satisfactory answers to \Cref{open3.1} and \Cref{open3.2} will shed light on this question. A possible candidate is ``a one-way function that  has a super-core", but this question  needs further exploration.

\section{Conclusion and future directions}
The current work is the first systematic investigation into nondeterministic pseudorandomness (in the cryptographic regime), building on the primitives proposed by Rudich \cite{superbit}. Our investigation reveals a fruitful area of potential directions.  
Here, we propose or summarise some intriguing future directions and open problems that we deem worth pursuing.



\begin{itemize}
    \item 
    We have provided an algorithm that achieves a sublinear-stretch for any given demi-bit.
    If we are greedier, we may wonder if we can stretch more. Specifically, if we can stretch one demi-bit to linearly many demi-bits, or to polynomially many demi-bits, or even to a pseudorandom function generator (PRFG) with exponential demi-hardness. Even if assuming we do have some $n$-demi-bits $b: \bi{n} \to \bi{2n}$, it is still unclear whether we can construct a PRFG with exponential demi-hardness by any standard algorithm for constructing PRFGs or a novel one.
    
    \item 
    Can we further refine or classify the inequalities in \Cref{sum:unpredict}? 
    Can we also characterise demi-hardness in terms of unpredictability (i.e., where is the correct place of demi-hardness in that lattice)?
    
    \item
    One-way functions are one of the most central cryptographic primitives. Then, a natural question to ask is, what is a sensible definition of nondeterministic-secure one-way functions? (A first attempt may be ``a one-way function which has a super-core", but this proposal demands further verification.)
    
    \item If $g$ is a super-bit, is $f(x) := g(x)[1...n]$ ($n = |x|$) a one-way function?
    
    \item 
    Suppose $f$ is a one-way function.
    If we want $\langle x,y \rangle$ to be a super-core of $f'(x,y) = (f(x), y)$,
    what other properties does $f$ need to have, if any?
\end{itemize}

Better answers to the questions listed above would help us to better understand the hardness of generators in the nondeterministic setting and shed light on the open problem ``whether the existence of demi-bits implies the existence of super-bits".



\appendix

\section{Demi-bits exist unless PAC-learning of small circuits is feasible}

In this section, we provide a full exposition of Pich's results from \cite{Pich}, who developed PAC-learning algorithms from breaking PRGs.
These results are important for the foundations of pseudorandomness against nondeterministic adversaries since they provide justification for the existence of demi-bits.  
\subsection{PAC-learning}
We formulate the following nonuniform version of PAC-learning:

\begin{definition}[PAC-learning, nonuniform]
A circuit class $\mathcal{C}$ is \textbf{learnable} (over the uniform distribution) by a circuit class $\mathcal{D}$ up to error $\epsilon$ with confidence $\delta$ if there is a randomized oracle family $L = \{D_n\} \in \mathcal{D}$ such that for every family $f: \bi{n} \to \bi{}$ computable by $\mathcal{C}$ and every large enough $n$, we have:
\begin{enumerate}
    \item $\Pr[w]{L^f(1^n, w)\ (1 - \epsilon)\!-\! approximates\ f} \geq \delta$, where $w$ is the random input bits to $L^f$ and the output $L^f(1^n, w)$ of $L^f$ is a string representation of an approximator $f'$ of $f$.
    $L^f(1^n, w)$ can be a description of a uniform or nonuniform algorithm of the following types: deterministic, nondeterministic, co-nondeterministic, and randomized.

    \item For every $f$ and $w$, $L^f(1^n, w)$ is $\mathcal{D}$-evaluable: there is another circuit family $E \in \mathcal{D}$ such that for every possible output $L^f(1^n, w)$ of $L$ and every $x \in \bi{n}$ given as the input to $E$, $E$ computes $L^f(1^n, w)$ on $x$. When $L^f(1^n, w)$ is (co-)nondeterministic or randomized, $E$ is allowed to be (co-)nondeterministic or randomized respectively.
\end{enumerate}
For such a learner $L$, we also say $\mathcal{C}$ is learnable by $L$ or every $C \in \mathcal{C}$ is learnable by $L$.
\end{definition}
\textbf{Remarks.}
\begin{enumerate}
    \item We say $L^f$ uses membership query if it somehow selects the set of queries made to the oracle gates. We say $L^f$ uses uniformly distributed random examples if the set of queries made is sampled uniformly at random.
    The above learning model is general enough to admit both kinds of learners and also any kind of mixture. In this work, we will only consider learners using uniformly distributed random examples. Thus by ``learning", we always mean learning using uniformly distributed random examples.

    \item The defined learning model is also general enough to admit learning over other distributions (i.e. we change the distribution of $w$ in condition (1)). When $\mathcal{D}$ contains \ppoly, learning over any polynomially generated distribution is equivalent to learning over the uniform distribution.

    \item A possible source of confusion is to leave out Condition (2). However, Condition (2) is indispensable here. The reason is that without this restriction, \ppoly{} can be efficiently learned, which is widely believed not to be the case (e.g., cf. \cite{learn_Rajgopal_Santhanam}). It is an easy exercise to prove the learnability of \ppoly{} without the second condition by applying the Occam's Razor theorem established in \cite{OccamsRazor}. Intuitively, a learner can learn \ppoly{} efficiently by remembering the samples and postponing all the ``learning" to the hypothesis evaluation stage.
    
    \item The confidence $\delta$ and accuracy $1 - \epsilon$ of a learner in $\McD$ can be efficiently boosted to constants less than $1$ in standard ways (cf. \cite{learning_txt}) when they are not negligible with respect to $\McD$. For example, when $\McD = P/poly$, it is sufficient for $\delta$ and $1 - \epsilon$ to achieve $p(n)$ and $1/2 + q(n)$ respectively for any polynomials $p$ and $q$. Beyond the remark here, boosting will be a digression from the theme of this work.
\end{enumerate}

\subsection{Learning based on nonexistence assumptions}

We recall a construction from \cite{crypto}: for a positive integer $m$ and a circuit $C: \bi{n} \to \bi{}$, define generator $G_{m,C}: \bi{mn} \to \bi{mn+m}$, which maps $m$ $n$-bit strings $x_1, ..., x_m$ to $x_1, C(x_1), ... , x_m, C(x_m)$.

We reformulate Theorem 7 on average-case learning in \cite{crypto} into our PAC-learning framework as the following lemma:
\begin{lemma} \label{lemma:learn}
Given a circuit class $\mathcal{C}$, if there is an $m$ and an $s(n)$-size circuit $D$ such that for every $C \in \mathcal{C}$
\[ \pr{D(y)=1} - \pr{D(G_C(x))=1} \geq 1/s \text{, where } G_C = G_{m,C} \text{,} \]
then there is a randomized polynomial time (in $n$ and $|\langle D \rangle|$, where $\langle D \rangle$ is a given string representation of $D$) algorithm $L$ that learns $\mathcal{C}$ with confidence $1/2m^2s$ up to error $1/2 - 1/2ms$.

In particular, if D is a nondeterministic or co-nondeterministic circuit, the output of L is allowed to be a nondeterministic or co-nondeterministic algorithm.
\end{lemma}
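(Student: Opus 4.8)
The plan is to build the learner $L$ by a standard reconstruction/hybrid argument: assume $D$ distinguishes the generator $G_C = G_{m,C}$ from uniform by $1/s$, and convert this into a next-block predictor for $C$ on a fresh sample, which then constitutes the hypothesis. First I would recall the block structure: $G_C(x_1,\dots,x_m) = x_1 C(x_1) x_2 C(x_2)\cdots x_m C(x_m)$, so a uniform string of length $mn+m$ can be viewed as $m$ blocks each of the form $x_j r_j$ with $x_j\in\bits^n$ and $r_j\in\bits$, and $G_C$ replaces each $r_j$ by $C(x_j)$. Define hybrids $H_k$ for $0\le k\le m$ where the first $k$ blocks are ``correct'' ($x_j C(x_j)$) and the remaining $m-k$ blocks are uniform; then $H_0 = U_{mn+m}$ and $H_m = G_C(U_{mn})$. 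The assumed inequality $\pr{D(U)=1}-\pr{D(G_C(U))=1}\ge 1/s$ telescopes across the $H_k$, so by averaging there is an index $k = k(C)$ with $\pr{D(H_{k-1})=1}-\pr{D(H_k)=1}\ge 1/(ms)$.

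Next I would fix the best block $k$ and all coordinates outside it (this is where the nonuniformity / randomness of $L$ comes in: $L$ samples the random bits $w$ that fix $k$, the correct prefix blocks $x_1 C(x_1)\dots x_{k-1}C(x_{k-1})$, and the uniform suffix blocks, and reads them off the oracle — the prefix blocks need oracle calls to $C$, which $L$ has). Writing $Z = x_k$ for the $n$-bit input of block $k$ and $b$ for a uniform bit, inequality at the fixed coordinates becomes
\[
\pr{D(\alpha\, Z\, C(Z)\, \beta)=1} - \pr{D(\alpha\, Z\, b\, \beta)=1} \ge \frac{1}{2ms},
\]
for a fixed prefix string $\alpha$ and suffix string $\beta$. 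A standard manipulation (the same computation used in the proof of \Cref{det-pred-thm} and in \Cref{pred-prop2} above) converts this gap between ``plug in $C(Z)$'' and ``plug in a random bit'' into a predictor: define $h(Z)$ to run $D$ on $\alpha\, Z\, 1\, \beta$ and output $1$ if $D$ accepts and $0$ otherwise (equivalently, guess the bit, run $D$, and keep the guess iff $D$ accepts — flipping appropriately). Then $\pr{h(U_n)=C(U_n)}\ge 1/2 + 1/(2ms)$, i.e.\ $h$ is a $(1/2+1/(2ms))$-approximator of $C$. The hypothesis $L$ outputs is (a description of) $h$; by construction $h$ is evaluated by running $D$ once on a fixed-padded input, so if $D\in\mathcal D$ then $h$ is $\mathcal D$-evaluable, and if $D$ is (co-)nondeterministic then so is $h$ — this gives the ``in particular'' clause. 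The running time of $L$ is polynomial in $n$ and $|\langle D\rangle|$ since it just samples $O(mn)$ bits, makes $\le m$ oracle calls, and writes down $D$ together with the fixed padding; the confidence bound $1/2m^2s$ comes from the probability that a uniformly random choice of the cut index among $m$ possibilities and the random padding lands on a ``good'' configuration — more precisely, a fraction at least $1/(2ms)$ of the padding/index choices yield advantage at least $1/(2ms)$, and taking the product (index choice $1/m$ times the $1/(2ms)$ mass of good paddings, with the residual losses) gives $\ge 1/(2m^2 s)$.

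The main obstacle I expect is bookkeeping the quantifiers correctly: the index $k$ and the best padding both depend on $C$, so one must be careful that $L$ — which does \emph{not} know $C$ except through the oracle — can nevertheless find a good $(k,\alpha,\beta)$ with the claimed confidence \emph{simply by sampling}, rather than by searching; this is exactly why the confidence degrades from $1/(2ms)$ to $1/(2m^2s)$ and why the statement only asks for non-negligible (boostable) confidence rather than high probability. A secondary subtlety is that in the nondeterministic case $h(Z)=D(\alpha Z 1\beta)$ inherits nondeterminism cleanly (an accepting branch of $D$ is an accepting branch of $h$), so no determinization is needed and the acceptance semantics of \Cref{def:nondeterministic-circuit} carry through verbatim; one only has to note that the ``output the complemented guess'' trick is applied on the \emph{outside}, at the level of the final single-bit output, which is harmless here because $h$ is used as a predictor (a function-value hypothesis), not as a two-sided distinguisher.
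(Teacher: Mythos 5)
Your overall architecture (hybrids over the $m$ blocks, a random choice of the cut index, fixing/sampling the padding, packaging $D$ with the fixed padding as the hypothesis, and noting $\mathcal D$-evaluability) is the same as the paper's, but the core Yao step as you wrote it has a genuine error of orientation. The lemma's hypothesis, and your own hybrid ordering ($H_0=U_{mn+m}$, $H_m=G_C(U_{mn})$, with a gap $\pr{D(H_{k-1})=1}-\pr{D(H_k)=1}\ge 1/(ms)$), say that $D$ accepts the string with a \emph{uniform} bit in block $k$ more often than the string with $C(Z)$ there; i.e.\ acceptance is evidence that the plugged-in bit is \emph{not} $C(Z)$. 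Your displayed inequality $\pr{D(\alpha\,Z\,C(Z)\,\beta)=1}-\pr{D(\alpha\,Z\,b\,\beta)=1}\ge 1/(2ms)$ asserts the opposite, and the predictor you then build (``output $1$ iff $D$ accepts $\alpha Z1\beta$'', ``keep the guess iff $D$ accepts'') has the polarity matched to that reversed inequality; with the correct orientation this predictor can have accuracy well below $1/2$, so the claimed bound $\pr{h(U_n)=C(U_n)}\ge 1/2+1/(2ms)$ does not follow.

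Even after correcting the polarity, there is a second gap: you cannot fix the plugged-in bit to a single constant and claim that that one hypothesis has advantage $1/(2ms)$. What the local gap gives is that the \emph{average} accuracy of the two fixed-bit predictors --- ``plug $0$, predict $1$ iff $D$ accepts'' (a nondeterministic hypothesis $D(\alpha Z0\beta)$) and ``plug $1$, predict $0$ iff $D$ accepts'' (a co-nondeterministic hypothesis, the complement of $D(\alpha Z1\beta)$) --- is at least $1/2+1/(ms)$; for a particular $C$, either one alone may have no advantage. The ``flipping appropriately'' cannot be resolved by the learner: deciding which polarity is good would require evaluating the nondeterministic circuit $D$ on samples, which a randomized polynomial-time $L$ cannot do (it only writes $D$ into the hypothesis; evaluation happens later, with nondeterminism, in $E$). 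This is exactly why the paper has $L$ draw a random bit $r_i$, plug it in, and output $\overline{r_i}$ on acceptance and $r_i$ on rejection: the choice of $r_i$ is folded into the confidence, and it is also the precise source of the ``nondeterministic or co-nondeterministic'' disjunction in the statement (which of the two the hypothesis is depends on the sampled $r_i$). Your confidence accounting charges only the index choice and the padding mass and treats the polarity selection as free, which it is not; once you replace your fixed-bit $h$ by the random-guess predictor with the correct accept/reject rule and include $r_i$ in the averaging and the Markov step over the padding, you recover the paper's proof.
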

\begin{proof}
Given any $C \in \mathcal{C}$, $L$ randomly chooses an $i \in [m]$, bits $r_1, ..., r_m$, and n-bit strings $x_1, ..., x_m$ except $x_i$,
queries $C$ on $x_1, ..., x_{i-1}$ to get $C(x_1), ..., C(x_{i-1})$,
and outputs $C'$, which predicts $C$ as follows:
given any $n$-bit input $x_i$,
$C'$ emulates $D$ on $(x_1, C(x_1), ..., x_{i-1}, C(x_{i-1}), x_i, r_i, ..., x_m, r_m)$ to get an output bit $p_i$. If $p_i = 1$, C' outputs $\bar{r_i}$; if $p_i = 0$, C' outputs $r_i$.

We next want to prove that $L$ indeed learns $C$ in the desired sense.
Given random bits $r_1, ..., r_m$ and random $n$-bit strings $x_1, ..., x_m$, we define $p_i := D(x_1, C(x_1), ..., x_{i-1}, C(x_{i-1}), x_i, r_i, ..., x_m, r_m)$. Then (note: here we are applying a hybrid argument),
\begin{align*}
    1/s
    &\leq \pr{D(x)=1}- \pr{D(G_C(x))=1} \\
    &= \pr{p_1 = 1} - \pr{p_m = 1} \\
    &= \sum_{i=1}^{m-1} (\pr{p_i = 1} - \pr{p_{i+1} = 1})
\end{align*}
Hence, there exist i such that $\pr{p_i = 1} - \pr{p_{i+1} = 1} \geq 1/ms$, and therefore the $i$ $L$ chooses satisfies $\pr{p_i = 1} - \pr{p_{i+1} = 1} \geq 1/ms$ with probability $\geq 1/m$. When $L$ has successfully chosen such \mbox{an $i$,}
\begin{align*}
    &\ \underset{\substack{r_1, ..., r_m\\x_1, ..., x_m}}{\mathbb{P}}
       [C'(x_i) = C(x_i)] \\
    &= \pr{p_i = 1, r_i \ne C(x_i)} +
       \pr{p_i = 0, r_i = C(x_i)}\\
    &= \frac{1}{2} \pr{p_i = 1 | r_i \ne C(x_i)} +
       \frac{1}{2} \pr{p_i = 0 | r_i = C(x_i)}\\
    &= \frac{1}{2} \pr{p_i = 1 | r_i \ne C(x_i)} +
       \frac{1}{2} (1 - \pr{p_i = 1 | r_i = C(x_i)})\\
    &= \frac{1}{2} +
       \frac{1}{2} \pr{p_i = 1 | r_i \ne C(x_i)} -
       \frac{1}{2} \pr{p_i = 1 | r_i = C(x_i)} \\
    &= \frac{1}{2} +
       \frac{1}{2} \pr{p_i = 1 | r_i \ne C(x_i)} +
       (\frac{1}{2} \pr{p_i = 1 | r_i = C(x_i)} -
       \pr{p_i = 1 | r_i = C(x_i)}) \\
    &= \frac{1}{2} +
       (\pr{p_i = 1, r_i \ne C(x_i)} +
       \pr{p_i = 1, r_i = C(x_i)}) -
       \pr{p_i = 1 | r_i = C(x_i)} \\
    &= \frac{1}{2} + \pr{p_i = 1} - \pr{p_{i+1} = 1} \\
    &\geq \frac{1}{2} + \frac{1}{ms}
\end{align*}
Let $p =
\underset{\substack{r_1, ..., r_m\\x_1, ..., x_m \, except \, x_i}}{\mathbb{P}}
[\underset{x_i}{\mathbb{P}}[C'(x_i) = C(x_i)] >= 1/2 + 1/2ms]$. As there exist $0 \leq a < 1/2$ such that $1/2 + 1/2ms + a$ represents the average accuracy of the predictor $C'$ when $r_1, ..., r_m, x_1, ..., x_{i-1}, x_{i+1}, ..., x_m$ satisfy $\underset{x_i}{\mathbb{P}}[C'(x_i) = C(x_i)] >= 1/2 + 1/2ms$, we have
\begin{displaymath}
p\left(\frac{1}{2} + \frac{1}{2ms} + a\right) +
(1-p)\left(\frac{1}{2} + \frac{1}{2ms}\right) \geq
\underset{\substack{r_1, ..., r_m\\x_1, ..., x_m}}{\mathbb{P}}[C'(x_i) = C(x_i)]
\geq \frac{1}{2} + \frac{1}{ms},
\end{displaymath}
and thus
\[p \geq \frac{1}{2ms}\cdot\frac{1}{a} \geq \frac{1}{ms}. \]
Therefore, $L$ outputs a $(1/2 + 1/2ms)$-accurate $C'$ with confidence $1/m^2s$.
\end{proof}
\begin{comment*} When $D$ is nondeterministic: (1) if $r_i = 0$, then the $C'$ learned by $L^C$ is also nondeterministic; (2) if $r_i = 0$, then the $C'$ learned by $L^C$ is co-nondeterministic.
\end{comment*}

In the remaining of this section, we are going to derive learning algorithms based on the assumptions of the nonexistence of i.o. demi-bits and demi-bits respectively. It is obvious that the non-existence of i.o. demi-bits is a stronger assumption than the non-existence of demi-bits. We will also see that, given our proof strategy, the stronger assumption yields a better learning result, in a sense that will be clear later.

A proof of the following theorem was  sketch in \cite{Pich}:
\begin{theorem} \label{thm:learn}
Assume the nonexistence of i.o.~demi-bits. Then for  every $c \in \bN$, $Circuit[n^c]$ is learnable by $Circuit[2^{n^{o(1)}}]$ (by taking random examples) with confidence $1/2^{n^{o(1)}}$ up to error $1/2 - 1/2^{n^{o(1)}}$, where the learner is allowed to generate a nondeterministic or co-nondeterministic algorithm approximating the target function.
\end{theorem}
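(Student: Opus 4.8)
The plan is to derive the learner from \Cref{lemma:learn} combined with the contrapositive of the assumed nonexistence of i.o.~demi-bits, using the equivalence between demi-bits and hitting set generators (against \NP/\poly) only implicitly; in fact we work directly with demi-hardness. Fix $c\in\bN$ and suppose, towards the conclusion, that we want to learn $Circuit[n^c]$. The key idea is that the construction $G_{m,C}\colon\bi{mn}\to\bi{mn+m}$ from \Cref{lemma:learn}, when $C$ ranges over $Circuit[n^c]$, is itself a candidate generator computable in \ppoly{} (given the description of $C$ as advice), and if it had exponential demi-hardness for every $C$ then it would be an i.o.~demi-bit (indeed a genuine demi-bit), contradicting our assumption. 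So the assumption forces, for every $C\in Circuit[n^c]$, a sub-exponential-size nondeterministic circuit $D=D_C$ and infinitely many input lengths at which $D$ breaks $G_{m,C}$ in the demi sense: $\pr{D(U_{mn+m})=1}\ge 1/|D|$ while $\pr{D(G_{m,C}(U_{mn}))=1}=0$. A point needing care is \emph{uniformity of $D$ across different $C$'s}: we need a single distinguisher (or a uniformly chosen parameter $m$ and a circuit depending on $C$ only through its description) that works for all $C\in Circuit[n^c]$; this is exactly what the non-uniform (advice) formulation of ``i.o.~demi-bits do not exist'' buys us, since we may feed the description of $C$ as part of the input/advice and treat $G_{m,C}$ as a single \ppoly{} generator family parametrised by $C$.

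First I would set up $G_{m,C}$ with $m$ chosen as a slowly growing function of $n$ (so that $mn+m>mn$, and the input length $N=mn$ is polynomially related to $n$); the stretch $m$ will ultimately be sub-exponential, which is all \Cref{lemma:learn} needs. Second, I would invoke the nonexistence of i.o.~demi-bits: $G_{m,C}$ cannot have demi-hardness $\ge 2^{N^\epsilon}$ for all large $N$ (for any $\epsilon>0$), so there is a sub-exponential-size nondeterministic $D$ with $\pr{D(U_{mn+m})=1}-\pr{D(G_{m,C}(U_{mn}))=1}\ge 1/|D|$ for infinitely many $N$, where in fact $\pr{D(G_{m,C}(U_{mn}))=1}=0$; in particular the weaker bound $\pr{D(y)=1}-\pr{D(G_C(x))=1}\ge 1/s$ with $s=|D|=2^{N^{o(1)}}=2^{n^{o(1)}}$ holds. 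Third, I would plug this $D$ into \Cref{lemma:learn}, whose proof goes through verbatim for nondeterministic $D$ (the hybrid argument there only needs the one-sided inequality, not the absolute value, and the comment after the lemma records that the learned hypothesis $C'$ is nondeterministic or co-nondeterministic depending on the guessed bit $r_i$): this yields a randomized learner $L$, of size polynomial in $n$ and $|\langle D\rangle|=2^{n^{o(1)}}$, hence of size $2^{n^{o(1)}}$, that outputs a $(1/2+1/2ms)$-accurate hypothesis for $C$ with confidence $\ge 1/m^2s = 1/2^{n^{o(1)}}$. Setting $\epsilon=1/2-1/2ms=1/2-1/2^{n^{o(1)}}$ and $\delta=1/2^{n^{o(1)}}$ gives exactly the claimed parameters, and the hypothesis class $\mathcal{D}=Circuit[2^{n^{o(1)}}]$ absorbs both $L$ and the evaluator $E$ (which just simulates $D$ on the padded input, again within size $2^{n^{o(1)}}$).

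The main obstacle I anticipate is the bookkeeping around \emph{which input lengths} the distinguisher $D$ works at: nonexistence of i.o.~demi-bits only guarantees $D$ breaks $G_{m,C}$ at infinitely many lengths $N$, and these lengths, together with the choice of $D$, may depend on $C$. To get a clean statement ``$Circuit[n^c]$ is learnable'' one wants, for each target size $n^c$, the learner to succeed at infinitely many $n$; translating between ``infinitely many $N=mn$'' and ``infinitely many $n$'' requires that $m=m(n)$ be chosen so that the map $n\mapsto N$ has polynomially bounded gaps (so an infinite set of good $N$'s pulls back to an infinite set of good $n$'s) — this is the role of \Cref{i.o.}-style reasoning and of keeping $m$ tame. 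A secondary subtlety is making the size of the learner genuinely $2^{n^{o(1)}}$ uniformly in $c$: since $|\langle D\rangle|$ is only bounded by $2^{N^{o(1)}}$ and $N=m(n)\cdot n$, we must choose $m(n)=n^{o(1)}$ (or at worst $n^{O(1)}$, still fine since then $2^{N^{o(1)}}=2^{n^{o(1)}}$), and check that the $o(1)$ in the exponent does not blow up when composed through \Cref{lemma:learn}'s polynomial overhead. Neither of these is conceptually deep, but both need to be stated carefully; the heart of the argument is simply: no i.o.~demi-bit $\Rightarrow$ $G_{m,C}$ is breakable $\Rightarrow$ \Cref{lemma:learn} produces the learner.
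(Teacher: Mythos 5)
Your skeleton (break $G_{m,C}$ and feed the distinguisher into \Cref{lemma:learn}) is the right one, but the two places where the actual work happens are not carried out correctly. First, the uniformity-in-$C$ issue that you flag is a genuine gap, not a side remark, and your proposed fixes do not close it. The learner has only oracle access to the target $C$, so it cannot select or emulate a $C$-dependent distinguisher $D_C$; and treating $\langle C\rangle$ ``as advice'' does not help, since a generator with $C$ hard-wired as advice is a different generator for each sequence of circuits, and the nonexistence assumption then hands you a different distinguisher for each — exactly the situation you cannot use. The paper's resolution is to build a \emph{single} generator $G:\bi{mn+n^d}\to\bi{mn+m}$ whose seed contains the description $\langle C\rangle$ (of length at most $n^d$) and which outputs $G_{m,C}(x)$ on the remaining seed bits; with $m=n^d+1$ this stretches by one bit, so the assumption applies to $G$ itself. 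The step your write-up never articulates is why one $D$ then works for \emph{every} $C$: the zero-error demi condition $\pr{D(G(U_{mn+n^d}))=1}=0$ holds over the whole seed space, hence $D$ rejects $G_{m,C}(x)$ for every fixed $C$ and every $x$, while $\pr{D(U_{mn+m})=1}\ge 1/|D|$ does not mention $C$ at all; this is precisely where demi-hardness (rather than super-hardness) is used. Note also that this construction forces $m$ to exceed the description length $n^d$, so $m$ is a (fixed) polynomial in $n$ — still fine for \Cref{lemma:learn} — whereas your choice $m=n^{o(1)}$ is incompatible with embedding $\langle C\rangle$ in the seed; with that choice the map would not even be a generator.

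Second, you misread the hypothesis. The nonexistence of i.o.~demi-bits negates ``demi-hardness $\ge 2^{N^\epsilon}$ for \emph{infinitely many} $N$'', so it gives, for every $\epsilon>0$, a sub-exponential-size nondeterministic $D$ breaking the generator at \emph{all sufficiently large} lengths — not merely at infinitely many lengths, as your second paragraph asserts. This matters because the paper's learnability definition demands success at every large enough $n$; from breaking at only infinitely many $N$ you would obtain only the weaker ``learnable at infinitely many $n$'' conclusion, which is exactly what the paper derives from the strictly weaker assumption that plain demi-bits do not exist (the theorem following \Cref{thm:learn}). Your third-paragraph bookkeeping about pulling back an infinite set of good $N$'s to an infinite set of good $n$'s is an artifact of this conflation and disappears once the assumption is used at full strength.
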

\begin{proof}
Assume the nonexistence of i.o.~demi-bits and $c \in \bN$. There is an encoding scheme for $Circuit[n^c]$ such that every $C \in Circuit[n^c]$ with $n$-bit input can be encoded as a string $\langle C \rangle$ of length $\leq d = d(n)$, where $d(n)$ is a polynomial. Set $m = n^d + 1$ and consider a generator $G: \bi{mn + n^d} \to \bi{mn+m}$, which interprets the last $n^d$ input bits as a description of some $C \in Circuit[n^c]$ and then computes on the remaining $mn$ bits of input as $G_{m,C}$. We note that $G$ is in \ppoly  (whenever the encoding scheme for $Circuit[n^c]$ is reasonable). Because $G$ is not an i.o. demi-bit, there is nondeterministic circuit $D$ of sub-exponential size such that for all sufficiently large $n$'s
\[
\pr{D(y) = 1} \geq 1/|D|
\ \text{ and }\
\pr{D(G(x)) = 1} = 0.
\]
In particular, for every $C \in Circuit[n^c]$,
$\pr{D(G(\langle C \rangle, x) = 1} = 0$. That is $\pr{D(G_{m,C}(x))=1} = 0$ for every $C \in Circuit[n^c]$. Therefore,
\[\pr{D(y)=1} - \pr{D(G_{m,C}(x))=1} \geq 1/|D|.\]
Because $m = n^d + 1$ is polynomial in $n$ and $|D|$ is sub-exponential, by \Cref{lemma:learn}, $\mathcal{C}$ can be learned by a randomized circuit family of size $2^{n^{o(1)}}$ with confidence $1/2^{n^{o(1)}}$ up to error $1/2 - 1/2^{n^{o(1)}}$.
\end{proof}

If we weaken the assumption to the non-existence of demi-bits, with the same proof, we are able to learn $Circuit[n^c]$ in a weaker sense formulated as below:
\begin{theorem}
Assume the nonexistence of demi-bits. Then for every $c \in \bN$, there is an infinite monotone sequence $\{n_i\} \subseteq \bN$ such that $Circuit[n^c]$ is learnable by $Circuit[2^{n^{o(1)}}]$ (by taking random examples) with confidence $1/2^{n^{o(1)}}$ up to error $1/2 - 1/2^{n^{o(1)}}$ for every $n \in \{n_i\}$, where the learner is allowed to generate a nondeterministic or co-nondeterministic algorithm approximating the target function.
\end{theorem}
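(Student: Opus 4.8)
The plan is to follow the proof of \Cref{thm:learn} essentially verbatim, changing only the point at which the failure of a generator to be a demi-bit is exploited. Fix $c \in \bN$. As in that proof, choose a reasonable encoding scheme under which every $C \in Circuit[n^c]$ with $n$-bit input is described by a string $\langle C\rangle$ of length at most $d = d(n)$ for a polynomial $d$; set $m = n^d + 1$ and let $G$ be the generator of seed length $N := mn + n^d$ that reads its last $n^d$ bits as a description of some $C \in Circuit[n^c]$ and applies $G_{m,C}$ to the first $mn$ bits. Since $m - n^d = 1$, the generator $G$ stretches its seed by exactly one bit, so it is a genuine candidate demi-bit, and it lies in \ppoly\ (assuming a sane encoding).

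The only departure from \Cref{thm:learn} is here. Under the \emph{weaker} assumption that no demi-bit exists, $G$ is not a demi-bit, i.e.\ $H_{\textup{dh}}(G)$ is not at least exponential. By the characterisation of non-exponential functions recalled in the preliminaries, there is an infinite monotone sequence of seed lengths of $G$ on which $H_{\textup{dh}}(G)$ is sub-exponential; as $N$ is a strictly increasing function of $n$, this pulls back (after discarding finitely many terms) to an infinite monotone sequence $\{n_i\} \subseteq \bN$, and along it $N(n_i)$ is polynomial in $n_i$. Bundling the corresponding sub-exponential-size nondeterministic distinguishers into one family $D$ (defined arbitrarily, say as the constant $0$, outside the sequence) yields a nondeterministic circuit family $D$ of sub-exponential size with $\pr{D(y)=1} \ge 1/|D|$ and $\pr{D(G(x))=1} = 0$ for every $n \in \{n_i\}$. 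Since $\pr{D(G(x))=1}=0$ forces $D$ to reject every string in $\mathrm{Im}(G)$, in particular $\pr{D(G_{m,C}(x))=1}=0$ for every $C \in Circuit[n^c]$, and hence for every such $C$ and every $n \in \{n_i\}$,
\[
\pr{D(y)=1} - \pr{D(G_{m,C}(x))=1} \ \geq\ 1/|D|.
\]

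Finally, apply \Cref{lemma:learn}. Its conclusion is pointwise in $n$: whenever the displayed distinguishing inequality holds at a given length, the lemma produces at that length a learner for $Circuit[n^c]$; so it is legitimate to invoke it only along $\{n_i\}$. Because $m = n^d+1$ is polynomial in $n$ and $|D|$ is sub-exponential, the resulting learner has size $2^{n^{o(1)}}$, confidence $1/2^{n^{o(1)}}$ and error $1/2 - 1/2^{n^{o(1)}}$ for each $n \in \{n_i\}$, and — since $D$ is nondeterministic — the hypothesis it outputs is a nondeterministic or co-nondeterministic circuit (depending on the random bit $r_i$ in the construction of \Cref{lemma:learn}); this is exactly the claimed statement. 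The only points to verify, none of them serious, are: that distinguishers for different lengths merge into one sub-exponential family; that passing from the infinite set of hard seed lengths of $G$ to an infinite set $\{n_i\}$ of circuit-input lengths is valid (this is where monotonicity of $N(n)$ and $N(n_i)=\mathrm{poly}(n_i)$ are used); and that \Cref{lemma:learn} is genuinely a per-length statement, so restricting to $\{n_i\}$ loses nothing. The last of these is the main — and only mild — obstacle, essentially a matter of rereading the proof of \Cref{lemma:learn}.
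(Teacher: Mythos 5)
Your proposal is correct and follows exactly the route the paper intends: the paper proves this theorem simply by saying "with the same proof" as \Cref{thm:learn}, and your argument is precisely that proof with the weaker hypothesis exploited at the one relevant point — nonexistence of demi-bits gives, via the paper's characterisation of non-exponential functions, sub-exponential distinguishers for $G$ only along an infinite monotone sequence of lengths, and \Cref{lemma:learn} is applied per length along that sequence. Your additional care about pulling the hard seed lengths back to input lengths $n_i$ and about the per-length nature of \Cref{lemma:learn} just makes explicit what the paper leaves implicit.
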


Because the existence of $N\tilde{P}/qpoly$-natural proofs (a weaker assumption than the existence assumption of \nppoly-natural proofs) rules out the existence of i.o.~super-bits \cite{superbit}, by \Cref{thm:learn}, we have:
\begin{corollary}
Assume the existence of i.o.~demi-bits implies the existence of i.o. super-bits. 
If there exists an $N\tilde{P}/qpoly$-natural property useful against \ppoly{}, then for  every $c \in \bN$, $Circuit[n^c]$ is learnable by $Circuit[2^{n^{o(1)}}]$ (by taking random examples) with confidence $1/2^{n^{o(1)}}$ up to error $1/2 - 1/2^{n^{o(1)}}$, where the learner is allowed to generate a nondeterministic or co-nondeterministic algorithm approximating the target function.
\end{corollary}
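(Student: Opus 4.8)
The plan is to chain three facts that are already available. First, by the infinitely-often form of Rudich's barrier theorem (\Cref{thm:Rudich} together with the relaxed-largeness remark immediately following it, as used in \cite{superbit}), the existence of an $N\tilde{P}/qpoly$-natural property useful against \ppoly\ rules out the existence of i.o.~super-bits. Second, the contrapositive of the standing hypothesis of the corollary turns ``no i.o.~super-bit exists'' into ``no i.o.~demi-bit exists''. Third, feeding the nonexistence of i.o.~demi-bits into \Cref{thm:learn} yields precisely the claimed learner for $Circuit[n^c]$.

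Carrying this out, I would first recall that the reduction in \cite{superbit} from a super-bit to a distinguisher against a given $N\tilde{P}/qpoly$-natural property --- obtained by stretching the super-bit into a nondeterministically-secure pseudorandom function generator and running the natural test on its outputs --- operates length-by-length, so it converts an i.o.~super-bit into a refutation of the natural property on the same infinite set of input lengths; hence a natural property useful against \ppoly\ implies there is no i.o.~super-bit. Next, the hypothesis ``i.o.~demi-bits exist $\Rightarrow$ i.o.~super-bits exist'' contraposes to ``no i.o.~super-bit $\Rightarrow$ no i.o.~demi-bit'', so under our assumption no i.o.~demi-bit exists. Finally, \Cref{thm:learn} states exactly that the nonexistence of i.o.~demi-bits gives, for every $c\in\bN$, a learner in $Circuit[2^{n^{o(1)}}]$ learning $Circuit[n^c]$ from uniform random examples with confidence $1/2^{n^{o(1)}}$ up to error $1/2-1/2^{n^{o(1)}}$, whose hypothesis may be a nondeterministic or co-nondeterministic algorithm; this is the desired conclusion.

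Since the argument is a direct composition of cited results, I do not expect a genuine mathematical obstacle. The only point requiring care is the bookkeeping of the ``for infinitely many $n$'' versus ``for all sufficiently large $n$'' quantifiers: one must check that the infinite set of lengths witnessing the natural property propagates through both intermediate implications, which it does because every construction involved (the super-bit stretching of \cite{superbit}, the generator $G$ built in \Cref{thm:learn}, and the learner extracted via \Cref{lemma:learn}) is uniform in the input length and preserves the relevant length parameter.
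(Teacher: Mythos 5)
Your proposal is correct and follows essentially the same route as the paper: the paper's own (one-line) justification is exactly this chain --- an $N\tilde{P}/qpoly$-natural property useful against \ppoly{} rules out i.o.~super-bits by Rudich's barrier result, the assumed implication contraposes to rule out i.o.~demi-bits, and \Cref{thm:learn} then yields the stated learner. Your extra bookkeeping of the ``infinitely many $n$'' quantifiers is a harmless elaboration of the same argument.
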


\section*{Acknowledgments}
We are indebted to Jan Pich who suggested looking at demi-bits and provided many clarifications regarding his work \cite{Pich}. We are grateful to Rahul Santhanam for very useful discussions  and specifically mentioning the  potential application appearing in \Cref{thm:ac}. Finally, we  wish to thank  Hanlin Ren for very useful comments on the manuscript as well as  Oliver Korten and Yufeng Li for further discussions.


\bibliographystyle{alpha}
\bibliography{Demi-bib}

\newcommand{\etalchar}[1]{$^{#1}$}
\begin{thebibliography}{ABRW04}

\bibitem[ABK{\etalchar{+}}06]{All2006}
Eric Allender, Harry Buhrman, Michal Kouck{\'{y}}, Dieter van Melkebeek, and
  Detlef Ronneburger.
\newblock Power from random strings.
\newblock {\em {SIAM} J. Comput.}, 35(6):1467--1493, 2006.

\bibitem[ABRW04]{ABSRW00}
Michael Alekhnovich, Eli Ben{-}Sasson, Alexander~A. Razborov, and Avi
  Wigderson.
\newblock Pseudorandom generators in propositional proof complexity.
\newblock {\em SIAM J. Comput.}, 34(1):67--88, 2004.
\newblock (A preliminary version appeared in Proceedings of the 41st Annual
  Symposium on Foundations of Computer Science (Redondo Beach, CA, 2000)).

\bibitem[ACGS88]{factoring}
W.~Alexi, B.~Chor, O.~Goldreich, and C.~P. Schnorr.
\newblock Rsa and rabin functions: Certain parts are as hard as the whole.
\newblock {\em SIAM Journal on Computing}, 17(2):194--209, 1988.

\bibitem[BEHW87]{OccamsRazor}
A.~Blumer, A.~Ehrenfeucht, D.~Haussler, and M.~K. Warmuth.
\newblock Occam's razor.
\newblock {\em Inf. Process. Lett.}, 24(6):377--380, apr 1987.

\bibitem[BFKL94]{crypto}
A.~Blum, M.~Furst, M.~Kearns, and R.~J. Lipton.
\newblock Cryptographic primitives based on hard learning problems.
\newblock In Douglas~R. Stinson, editor, {\em Advances in Cryptology ---
  CRYPTO' 93}, pages 278--291, Berlin, Heidelberg, 1994. Springer Berlin
  Heidelberg.

\bibitem[BM84]{Blum_Micali}
M.~Blum and S.~Micali.
\newblock How to generate cryptographically strong sequences of pseudo-random
  bits.
\newblock {\em SIAM J. Comput.}, 13(4):850--864, nov 1984.

\bibitem[BOV07]{BOV07}
Boaz Barak, Shien~Jin Ong, and Salil~P. Vadhan.
\newblock Derandomization in cryptography.
\newblock {\em {SIAM} J. Comput.}, 37(2):380--400, 2007.

\bibitem[DVV16]{cryptoeprint:2016/580}
Akshay Degwekar, Vinod Vaikuntanathan, and Prashant~Nalini Vasudevan.
\newblock Fine-grained cryptography.
\newblock Cryptology ePrint Archive, Paper 2016/580, 2016.
\newblock \url{https://eprint.iacr.org/2016/580}.

\bibitem[GGM86]{random_function}
O.~Goldreich, S.~Goldwasser, and S.~Micali.
\newblock How to construct random functions.
\newblock {\em J. ACM}, 33(4):792--807, aug 1986.

\bibitem[GL89]{hard_core_from_OWF}
O.~Goldreich and L.~A. Levin.
\newblock A hard-core predicate for all one-way functions.
\newblock In {\em 21st Annual ACM Symposium on Theory of Computing}, STOC '89,
  page 25--32, New York, NY, USA, 1989. Association for Computing
  Machinery.

\bibitem[GM84]{hybrid}
S.~Goldwasser and S.~Micali.
\newblock Probabilistic encryption.
\newblock {\em Journal of Computer and System Sciences}, 28(2):270--299, 1984.

\bibitem[Gol01]{Gol-crypto-bookI}
Oded Goldreich.
\newblock {\em Foundations of cryptography I: Basic Tools}.
\newblock Cambridge: Cambridge University Press, 2001.

\bibitem[Gol08]{txt}
O.~Goldreich.
\newblock {\em Computational Complexity: A Conceptual Perspective}.
\newblock Cambridge University Press, 2008.

\bibitem[HILL99]{PRG_from_OWF}
J.~H\r{a}stad, R.~Impagliazzo, L.~A. Levin, and M.~Luby.
\newblock A pseudorandom generator from any one-way function.
\newblock {\em SIAM Journal on Computing}, 28(4):1364--1396, 1999.

\bibitem[IN89]{subset_sum}
R.~Impagliazzo and M.~Naor.
\newblock Efficient cryptographic schemes provably as secure as subset sum.
\newblock In {\em 30th Annual Symposium on Foundations of Computer Science},
  pages 236--241, 1989.

\bibitem[Kal05]{discrete-logarithm}
B.~S. Kaliski.
\newblock Elliptic curves and cryptography : a pseudorandom bit generator and
  other tools.
\newblock {\em Phd Thesis Mit}, 2005.

\bibitem[Kra04]{Kra04}
Jan Kraj{\'{\i}}{\v{c}}ek.
\newblock Dual weak pigeonhole principle, pseudo-surjective functions, and
  provability of circuit lower bounds.
\newblock {\em The Journal of Symbolic Logic}, 69(1):265--286, 2004.

\bibitem[Kra10]{Kra10-forcing}
Jan Kraj\'{i}\v{c}ek.
\newblock {\em Forcing with random variables and proof complexity}, volume 382
  of {\em London Mathematical Society Lecture Notes Series}.
\newblock Cambridge Press, 2010.

\bibitem[KV94]{learning_txt}
M.~J. Kearns and U.~V. Vazirani.
\newblock {\em An Introduction to Computational Learning Theory}.
\newblock MIT Press, Cambridge, MA, USA, 1994.

\bibitem[KvM02]{KvM02}
Adam~R. Klivans and Dieter van Melkebeek.
\newblock Graph nonisomorphism has subexponential size proofs unless the
  polynomial-time hierarchy collapses.
\newblock {\em {SIAM} J. Comput.}, 31(5):1501--1526, 2002.

\bibitem[LP20]{Liu2020}
Yanyi Liu and Rafael Pass.
\newblock On one-way functions and kolmogorov complexity.
\newblock In Sandy Irani, editor, {\em 61st {IEEE} Annual Symposium on
  Foundations of Computer Science, {FOCS} 2020, Durham, NC, USA, November
  16-19, 2020}, pages 1243--1254. {IEEE}, 2020.

\bibitem[Lub96]{Luby}
M.~Luby.
\newblock {\em Pseudorandomness and Cryptographic Applications}, volume~1.
\newblock Princeton University Press, 1996.

\bibitem[NW94]{hardness_vs_randomness}
N.~Nisan and A.~Wigderson.
\newblock Hardness vs randomness.
\newblock {\em Journal of Computer and System Sciences}, 49(2):149--167, 1994.

\bibitem[OS17]{learn_Oliveira_Santhanam}
I.~C. Oliveira and R.~Santhanam.
\newblock Conspiracies between learning algorithms, circuit lower bounds, and
  pseudorandomness.
\newblock In {\em Proceedings of the 32nd Computational Complexity Conference},
  CCC '17, Dagstuhl, DEU, 2017. Schloss Dagstuhl--Leibniz-Zentrum fuer
  Informatik.

\bibitem[Pic20]{Pich}
J.~Pich.
\newblock Learning algorithms from circuit lower bounds.
\newblock {\em CoRR}, abs/2012.14095, 2020.

\bibitem[PS19]{PS19}
Jan Pich and Rahul Santhanam.
\newblock Why are proof complexity lower bounds hard?
\newblock In {\em 60th Annual IEEE Symposium on Foundations of Computer Science
  {FOCS} 2019, November 9-12, 2019, Baltimore, Maryland {USA}}, 2019.

\bibitem[Raz15]{Razb15-annals}
Alexander~A. Razborov.
\newblock Pseudorandom generators hard for $k$-\textsc{DNF} resolution and
  polynomial calculus resolution.
\newblock {\em Annals of Mathematics}, 181:415--472, 2015.

\bibitem[RR97]{natural}
A.~A. Razborov and S.~Rudich.
\newblock Natural proofs.
\newblock {\em Journal of Computer and System Sciences}, 55(1):24--35, 1997.

\bibitem[RS21]{learn_Rajgopal_Santhanam}
N.~Rajgopal and R.~Santhanam.
\newblock {On the Structure of Learnability Beyond {P/Poly}}.
\newblock In Mary Wootters and Laura Sanit\`{a}, editors, {\em Approximation,
  Randomization, and Combinatorial Optimization. Algorithms and Techniques
  (APPROX/RANDOM 2021)}, volume 207 of {\em Leibniz International Proceedings
  in Informatics (LIPIcs)}, pages 46:1--46:23, Dagstuhl, Germany, 2021. Schloss
  Dagstuhl -- Leibniz-Zentrum f{\"u}r Informatik.

\bibitem[Rud97]{superbit}
S.~Rudich.
\newblock Super-bits, demi-bits, and {NP}/qpoly-natural proofs.
\newblock {\em Journal of Computer and System Sciences}, 55:204--213, 1997.

\bibitem[San20]{San20}
Rahul Santhanam.
\newblock Pseudorandomness and the minimum circuit size problem.
\newblock In Thomas Vidick, editor, {\em 11th Innovations in Theoretical
  Computer Science Conference, {ITCS} 2020, January 12-14, 2020, Seattle,
  Washington, {USA}}, volume 151 of {\em LIPIcs}, pages 68:1--68:26. Schloss
  Dagstuhl - Leibniz-Zentrum f{\"{u}}r Informatik, 2020.

\bibitem[San22]{San22}
Rahul Santhanam.
\newblock Personal communication, 2022.

\bibitem[SU05]{SU05}
Ronen Shaltiel and Christopher Umans.
\newblock Simple extractors for all min-entropies and a new pseudorandom
  generator.
\newblock {\em J. {ACM}}, 52(2):172--216, 2005.

\bibitem[SvM23]{SvM23}
Nicollas Sdroievski and Dieter van Melkebeek.
\newblock Instance-wise hardness versus randomness tradeoffs for arthur-merlin
  protocols.
\newblock {\em The Electronic Colloquium on Computational Complexity (ECCC)},
  2023.

\bibitem[Val84]{pac1}
L.~G. Valiant.
\newblock A theory of the learnable.
\newblock {\em Commun. ACM}, 27(11):1134--1142, nov 1984.

\bibitem[Val85]{pac3}
L.~G. Valiant.
\newblock Learning disjunction of conjunctions.
\newblock In {\em Proceedings of the 9th International Joint Conference on
  Artificial Intelligence - Volume 1}, IJCAI'85, page 560--566, San
  Francisco, CA, USA, 1985. Morgan Kaufmann Publishers Inc.

\bibitem[VS84]{pac2}
L.~G. Valiant and J.~C. Shepherdson.
\newblock Deductive learning [and discussion].
\newblock {\em Philosophical Transactions of the Royal Society of London.
  Series A, Mathematical and Physical Sciences}, 312(1522):441--446, 1984.

\bibitem[Yao82]{Yao1}
A.~C. Yao.
\newblock Theory and application of trapdoor functions.
\newblock In {\em 23rd Annual Symposium on Foundations of Computer Science},
  FOCS '82, pages 80--91, 1982.

\end{thebibliography}


\end{document}